\newtheorem{thm}{Theorem}[section]
\newtheorem{lem}[thm]{Lemma}
\newtheorem{prop}[thm]{Proposition}
\newtheorem{cor}[thm]{Corollary}
\theoremstyle{definition}
\newtheorem{ex}[thm]{Example}
\theoremstyle{remark}
\newtheorem{rem}[thm]{Remark}
\newcommand{\N}{\mathbb{N}}
\newcommand{\R}{\mathbb{R}}
\newcommand{\cX}{\mathcal{X}}
\newcommand{\mtp}{{\rm MTP}_{2}}
\newcommand{\mwst}{{\rm MWST}}
\newcommand{\mwsf}{{\rm MWSF}}
\newcommand{\ec}{{\rm EC}}
\newcommand{\cd}{\,|\,}
\newcommand{\cip}{\mbox{\,$\perp\!\!\!\perp$\,}}
\newcommand{\add}[1]{#1}
\newcommand{\del}{}
\begin{document}

\begin{frontmatter}
\title{Maximum likelihood estimation in Gaussian models under total positivity}
\runtitle{Maximum likelihood in totally positive Gaussian models}

\begin{aug}
\author{\fnms{Steffen} \snm{Lauritzen}\thanksref{t2}\ead[label=e1]{lauritzen@math.ku.dk}},
\author{\fnms{Caroline} \snm{Uhler}\thanksref{t1}\ead[label=e2]{cuhler@mit.edu}},
\and
\author{\fnms{Piotr} \snm{Zwiernik}\thanksref{t3}\ead[label=e3]{piotr.zwiernik@upf.edu}}

\runauthor{S.~Lauritzen, C.~Uhler, P.~Zwiernik}

\affiliation{University of Copenhagen\thanksmark{t1}, 
Massachusetts Institute of Technology\thanksmark{t2}, 
and Universitat Pompeu Fabra\thanksmark{t3}}

\address{S.~Lauritzen\\
Department of Mathematical Sciences\\
University of Copenhagen\\
Copenhagen, Denmark\\
\printead{e1}}

\address{C.~Uhler\\
Laboratory for Information and Decision Systems\\
and Institute for Data, Systems and Society\\
Massachusetts Institute of Technology\\Cambridge, MA, U.S.A.\\
\printead{e2}}

\address{P.~Zwiernik\\
Department of Economics and Business\\
Universitat Pompeu Fabra\\
Barcelona, Spain\\
\printead{e3}}
\end{aug}

\begin{abstract}
We analyze the problem of maximum likelihood estimation for Gaussian distributions that are multivariate totally positive of order two ($\mtp$). By exploiting connections to phylogenetics and single-linkage clustering, we give a simple proof that the maximum likelihood estimator (MLE) for such distributions exists based on $n\geq 2$ observations, irrespective of the underlying dimension. Slawski and Hein~\cite{slawski2015estimation}, who first proved this result, also provided empirical evidence showing that the $\mtp$ constraint serves as an implicit regularizer and leads to sparsity in the estimated inverse covariance matrix, determining what we name the ML graph.  \del
\add{We} show that we can find an upper bound for the ML graph by adding edges corresponding to correlations in excess of those explained by the maximum weight spanning forest of the correlation matrix. \del 
\add{Moreover, we}  provide globally convergent coordinate descent algorithms for calculating the MLE under the $\mtp$ constraint which are  structurally similar to iterative proportional scaling. We conclude the paper with a discussion of signed $\mtp$ distributions.
\end{abstract}

\begin{keyword}[class=MSC]
\kwd[Primary ]{60E15}
\kwd{62H99}
\kwd[; secondary ]{15B48}
\end{keyword}

\begin{keyword}
\kwd{$\mtp$ distribution; attractive Gaussian Markov random field  (GMRF); non-frustrated GRMF; Gaussian graphical model;  inverse M-matrix; ultrametric.}
\end{keyword}

\end{frontmatter}

\section{Introduction}

Total positivity is a special form of positive dependence between random variables that became an important concept in modern statistics; see, e.g., \cite{forcina2000,colangelo2005some,KarlinRinott80}. This property (also called the $\mtp$ property) appeared in the study of stochastic orderings, asymptotic statistics, and in statistical physics \cite{fortuin1971correlation,newman1983general}. Families of distributions with this property lead to many computational advantages \cite{bartolucci2002recursive,djolonga2015scalable,propp1996exact}.  In a recent paper~\cite{MTP2Markov2015}, the $\mtp$ property was studied in the context of graphical models and conditional independence in general. It was shown that $\mtp$ distributions have desirable Markov properties. 
Our paper can be seen as a continuation of this work with a focus on Gaussian distributions. 

A $p$-variate real-valued distribution with density $f$ w.r.t.\ a product measure $\mu$ is \emph{multivariate totally positive of order 2} ($\mtp$) if the density satisfies
\[f(x)f(y)\leq f(x\wedge y)f(x\vee y).\]
A multivariate Gaussian distribution with mean $\mu$ and a positive definite covariance matrix $\Sigma$ is $\mtp$ if and only if the concentration matrix $K:=\Sigma^{-1}$ is a symmetric \emph{M-matrix}, that is, $K_{ij}\leq 0$ for all $i\neq j$ or, equivalently, if all partial correlations are nonnegative. Such distributions were considered by B{\o{lviken}~\cite{B} and Karlin and Rinott~\cite{karlinGaussian}. Moreover, Gaussian graphical models, or Gaussian Markov random fields, were studied in the context of totally positive distributions in~\cite{malioutov2006walk}. $\mtp$ Gaussian graphical models were shown to form a sub-class of \emph{non-frustrated} Gaussian graphical models, which themselves are a sub-class of \emph{walk-summable} Gaussian graphical models. Efficient structure estimation algorithms for  $\mtp$ Gaussian graphical models were given in~\cite{anandkumar2012high} based on thresholding covariances after conditioning on subsets of variables of limited size. 
Efficient learning procedures based on convex optimization were suggested by Slawski and Hein~\cite{slawski2015estimation} and this paper is closely related to their approach; see also \cite{bhattacharya2012covariance} and \cite{egilmez:etal:16}. 

Throughout this paper, we assume that we are given $n$ i.i.d.~samples from $\mathcal{N}(\mu, \Sigma)$, where $\Sigma$ is an unknown positive definite matrix of size $p \times p$. Without loss of generality, we assume that $\mu=0$ and we focus on the estimation of $\Sigma$. We denote the sample covariance matrix based on $n$ samples by $S$.~Then~the log-likelihood function is, up to additive and multiplicative constants, given~by
\begin{equation}\label{eq:likely}
	\ell(K;S)\;\;=\;\;\log\det K-{\rm tr}(SK).
\end{equation}
 We denote the  cone of real symmetric matrices of size $p\times p$ by $\mathbb{S}^p$, its positive definite elements by $\mathbb{S}^p_{\succ 0}$, and its positive semidefinite elements  by $\mathbb{S}^p_{\succeq 0}$. Note that $\ell(K;S)$ is a strictly concave function of $K\in\mathbb{S}^p_{\succeq 0}$ . Since M-matrices form a convex subset of $\mathbb{S}^p_{\succ 0}$, the optimization problem for computing the \emph{maximum likelihood estimator} (MLE) for $\mtp$ Gaussian models is a convex optimization problem. Slawski and Hein~\cite{slawski2015estimation} showed that the MLE exists, i.e., the global maximum of this optimization problem is attained, when $n\geq 2$. This yields a drastic reduction from $n\geq p$ without the $\mtp$ constraint. In addition, they provided empirical evidence showing that the $\mtp$ constraint serves as an implicit regularizer and leads to sparsity in the concentration matrix $K$.

In this paper, we analyze the sparsity pattern of the MLE $\hat{K}$ under the $\mtp$ constraint. For a $p\times p$ matrix $K$ we let $G(K)$ denote the undirected graph on $p$ nodes with an edge $ij$ if and only if $K_{ij}\neq 0$. \del 
\add{In} Proposition~\ref{prop:PP} we obtain a simple upper bound for the ML graph $G(\hat{K})$ by adding edges to the \add{smallest \emph{maximum weight spanning forest} (MWSF)} \del
corresponding to empirical correlations in excess of those provided by the MWSF. We illustrate \add{the problem}
\del 
in the following example.

 \begin{ex}\label{ex:intro}
We consider the \texttt{carcass} data that are discussed in~\cite{GraphModelR} and can be found in the \verb+R+-library \verb+gRbase+. This data set contains measurements of the thickness of meat and fat layers at different locations on the back of a slaughter pig together with the lean meat percentage on each of 344 carcasses. For our analysis we ignore the lean meat percentage, since, by definition, this variable should be negatively correlated with fat and positively correlated with meat so the joint distribution is unlikely to be $\mtp$. 
The sample correlation matrix $R$ for these data is 
$$\small
R=\begin{blockarray}{ccccccc}
\textrm{Fat11} & \textrm{Meat11} & \textrm{Fat12} & \textrm{Meat12} & \textrm{Fat13}&\textrm{Meat13} \\
\begin{block}{(rrrrrr)l}
1.00 & 0.04 & 0.84 & 0.08 & 0.82 & -0.03 &\textrm{Fat11}\\ 
0.04 & 1.00 & 0.04 & 0.87 & 0.13 & 0.86  & \textrm{Meat11} \\
0.84 & 0.04 & 1.00 & 0.01 & 0.83 & -0.03 &\textrm{Fat12}\\
0.08 & 0.87 & 0.01 & 1.00 & 0.11 & 0.90  & \textrm{Meat12} \\
0.82 & 0.13 & 0.83 & 0.11 & 1.00 & 0.02  &\textrm{Fat13} \\
-0.03 & 0.86 & -0.03 & 0.90 & 0.02 & 1.00 &\textrm{Meat13} \\
\end{block}
\end{blockarray}
$$
and its inverse, scaled to have diagonal elements equal to one, $\tilde K$, is
$$\small
\tilde K=\begin{blockarray}{ccccccc}
\textrm{Fat11} & \textrm{Meat11} & \textrm{Fat12} & \textrm{Meat12} & \textrm{Fat13}&\textrm{Meat13} \\
\begin{block}{(rrrrrr)l}
1.00 & \textcolor{red}{0.16} & -0.52 & -0.31 & -0.40 & \textcolor{red}{0.19}&\textrm{Fat11}\\
\textcolor{red}{0.16} & 1.00 & -0.05 & -0.42 & -0.17 & -0.37& \textrm{Meat11} \\
-0.52& -0.05 & 1.00 & \textcolor{red}{0.25} & -0.45 & -0.17&\textrm{Fat12}\\
-0.31 & -0.42 & \textcolor{red}{0.25} & 1.00 & -0.02 & -0.61& \textrm{Meat12} \\
-0.40 & -0.17 & -0.45 & -0.02 & 1.00 & \textcolor{red}{0.10}&\textrm{Fat13} \\
\textcolor{red}{0.19} & -0.37 & -0.17 & -0.61& \textcolor{red}{0.10} & 1.00&\textrm{Meat13} \\
\end{block}
\end{blockarray}
$$

Note that the off-diagonal entries of $\tilde K$ are the negative empirical partial correlations.
This sample distribution  is not $\mtp$; the positive entries in $\tilde K$ are highlighted in red. The MLE under $\mtp$ can be computed for example using \verb+cvx+~\cite{cvx} in \verb+matlab+ or using one of the simple coordinate descent algorithms discussed in Section~\ref{sec:duality}. In this particular example the MLE can  also be obtained through the explicit formula (\ref{eq:W}) in Section~\ref{sec:MLgraph}. The MLE of the correlation matrix, rounded to 2 decimals, is
$$\small
\hat{R}=\begin{blockarray}{ccccccc}
\textrm{Fat11} & \textrm{Meat11} & \textrm{Fat12} & \textrm{Meat12} & \textrm{Fat13}&\textrm{Meat13} \\
\begin{block}{(rrrrrr)l}
1.00 & \textcolor{blue}{0.10} & 0.84 & \textcolor{blue}{0.09} & 0.82 & \textcolor{blue}{0.09}&\textrm{Fat11}\\ 
\textcolor{blue}{0.10} & 1.00 & \textcolor{blue}{0.11} & 0.87 & 0.13 & 0.86 & \textrm{Meat11} \\
0.84 & \textcolor{blue}{0.11} & 1.00 & \textcolor{blue}{0.09} & 0.83 & \textcolor{blue}{0.09} & \textrm{Fat12}\\ 
\textcolor{blue}{0.09} & 0.87 & \textcolor{blue}{0.09} & 1.00 & \textcolor{blue}{0.11} & 0.90 & \textrm{Meat12} \\
0.82 & 0.13 & 0.83 & \textcolor{blue}{0.11} & 1.00 & \textcolor{blue}{0.11} &\textrm{Fat13} \\
\textcolor{blue}{0.09} & 0.86 & \textcolor{blue}{0.09} & 0.90 & \textcolor{blue}{0.11} & 1.00 &\textrm{Meat13} \\
\end{block}
\end{blockarray}
$$
The entries of $\hat{R}$ that changed compared to the sample correlation matrix $R$ are highlighted in blue\footnote{We note that $\hat\Sigma_{45}>S_{45}$; the entries appear equal only because of the 2-digit rounding.}. The sparsity pattern of $\hat{K}=\hat\Sigma^{-1}$ is captured by the ML graph $G(\hat K)$ shown in Figure~\ref{fig:carcass_MTP2}.  

\begin{figure}
\centering
\includegraphics[width=4cm]{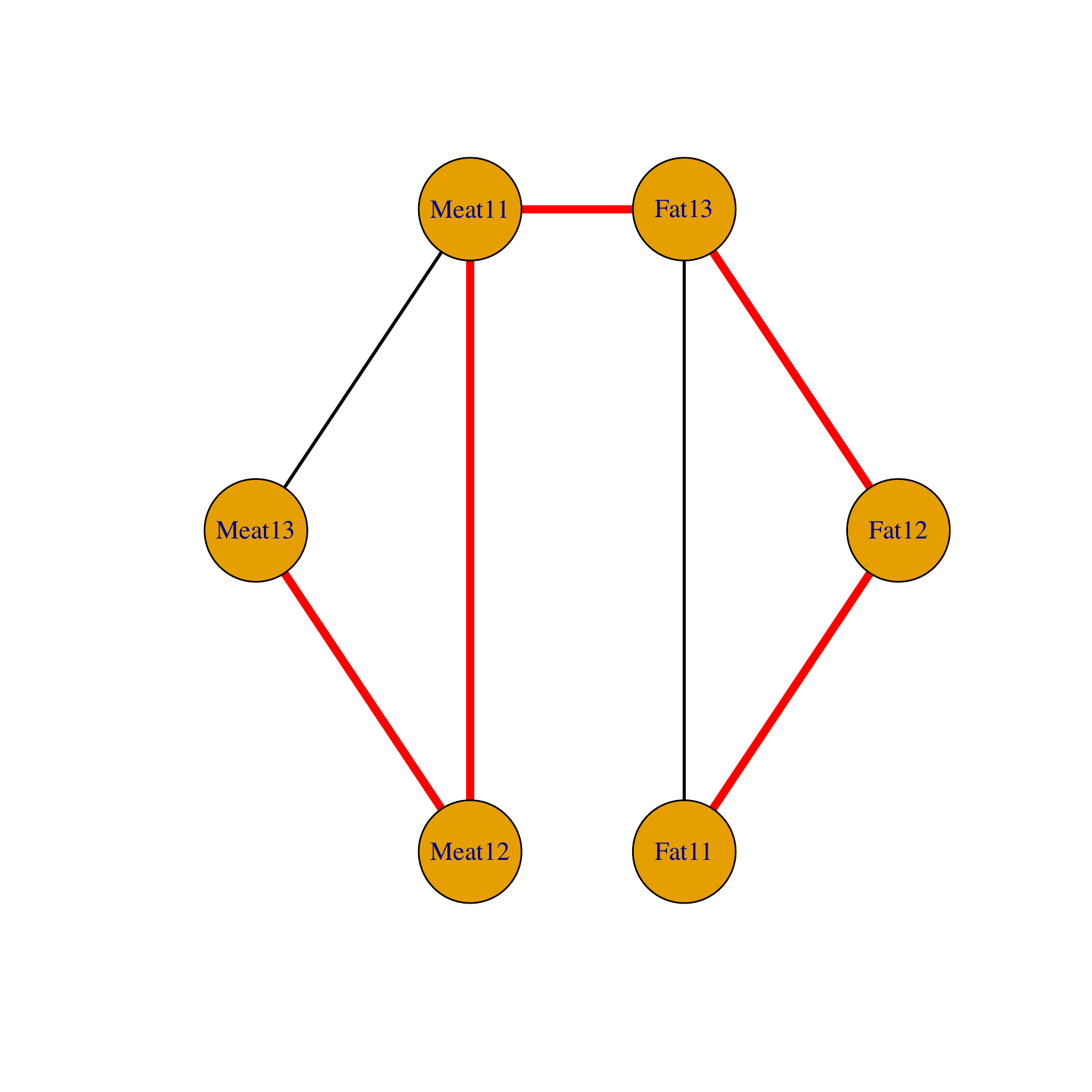}
\caption{\label{fig:carcass_MTP2}Undirected Gaussian graphical model for the \texttt{carcass} data obtained by estimating under the $\mtp$ assumtption. The thick red edges correspond to the MWSF of the correlation matrix.}

\end{figure}

Note that all edges corresponding to blue entries in $\hat{R}$ are missing in this graph. As we show in Proposition \ref{prop:mtp2GaussGraphModel}, this is a consequence of the KKT conditions. Consider now the maximum weight spanning forest of the complete graph with weights given by the entries of $R$. In this example, the spanning forest is a chain represented by the thick red edges in Figure~\ref{fig:carcass_MTP2}. By \add{Corollary~\ref{cor:blockest}} \del 
these edges form a spanning tree of the ML graph $G(\hat K)$. 

Interestingly, applying various methods for model selection such as stepwise AIC, BIC, or graphical lasso all yield  similar graphs, possibly indicating that the $\mtp$ assumption is quite reasonable. 
\end{ex}

The remainder of this paper is organized as follows: In Section~\ref{sec:duality}, we review the duality theory that is known more generally for regular exponential families and specialize it to $\mtp$ Gaussian distributions. This embeds the results by Slawski and Hein~\cite{slawski2015estimation} into the framework of exponential families and also leads to two related coordinate descent algorithms for computing the MLE, one that acts on the entries of $K$ and one that acts on the entries of $\Sigma$. 
In Section~\ref{sec:ultrametric}, we show how the problem of ML estimation for $\mtp$ Gaussian distributions is connected to single-linkage clustering and ultrametrics as studied in phylogenetics. These observations result in a simple proof of the existence of the MLE for $n\geq 2$, a result that was first proven in~\cite{slawski2015estimation}. Our proof is by constructing a primal and dual feasible point of the convex ML estimation problem for $\mtp$ Gaussian models. In Section~\ref{sec:MLgraph} \del
\add{we investigate the structure}
of the ML graph $G(\hat{K})$ 
and give a simple upper bound for it. Finally, in Section~\ref{sec:signed} we discuss how our results can be generalized to so-called \emph{signed} $\mtp$ Gaussian distributions, where the distribution is $\mtp$ up to sign changes or, equivalently,  $|X|$  is $\mtp$. Such distributions were introduced by Karlin and Rinott in~\cite{karlin1981signed}. We conclude the paper with a brief discussion of various open problems.

\section{Duality theory for ML estimation under $\mtp$}
\label{sec:duality}

We start this section by formally introducing absolutely continuous $\mtp$ distributions and then discuss the duality theory for Gaussian $\mtp$ distributions. Let $V:=\{1,2,\dots , p\}$ be a finite set and let $X=(X_i, i\in V)$ be a  random vector with density $f$ w.r.t.\ Lebesgue measure on the product space $\cX=\prod_{i\in V}\cX_{i}$, where $\cX_i\subseteq \R$ is the state space of $X_i$. We define the coordinate-wise minimum and maximum as
$$x\wedge y=(\min(x_{v},y_{v}),v\in V),\quad x\vee y=(\max(x_{i},y_{i}),i\in V).$$
Then we say that $X$ or the distribution of $X$ is \emph{multivariate totally positive of order two} ($\mtp$) if its density function $f$ on $\cX$ satisfies
\begin{equation}\label{eq:MTP2}
f(x)f(y)\quad\leq \quad f(x\wedge y)f(x\vee y)\qquad\mbox{for all }x,y\in \cX.
\end{equation}

In this paper, we concentrate on the Gaussian setting. It is easy to show that a Gaussian distribution with mean $\mu$ and covariance matrix $\Sigma$ is $\mtp$ if and only if $K=\Sigma^{-1}$ is a symmetric \emph{M-matrix}, i.e.\ $K$ is positive definite and
\begin{itemize}
\item[(i)] $K_{ii}>0$ for all $i\in V$,
\item[(ii)] $K_{ij}\leq 0$ for all $i,j\in V$ with $i\neq j$.
\end{itemize}
Properties of M-matrices were studied by Ostrowski~\cite{O} who chose the name to honor H.~Minkowski. The connection to multivariate Gaussian distributions was established by B{\o}lviken~\cite{B} and Karlin and Rinott~\cite{karlinGaussian}.

We denote the set of all symmetric M-matrices of size $p\times p$ by $\mathcal{M}^p$. Note that $\mathcal{M}^p$ is a convex cone. In fact, it is obtained by intersecting the positive definite cone $\mathbb{S}^p_{\succ 0}$ with all the coordinate half-spaces 
$$\mathcal{H}^p_{ij} = \{X\in\mathbb{S}^{p} \mid X_{ij}\leq 0\}$$
with $i\neq j$. For a convex cone $\mathcal{C}$ we denote its closure by $\overline{\mathcal{C}}$. 
Then $\overline{\mathcal{M}^p}$ is given by $\mathbb{S}^p_{\succeq 0}\cap_{i< j} \mathcal{H}^p_{ij}$ and the ML estimation problem for Gaussian $\mtp$ models can be formulated as the following optimization problem:
\begin{equation}
\label{ML_mtp2}
\begin{aligned}
& \underset{K}{\text{maximize}}
& & \log\det(K) - \textrm{trace}(KS)  \\
& \text{subject to}
&& K\in\mathcal{M}^p \\
\end{aligned}
\end{equation}
This is a convex optimization problem, since the objective function is concave on $\mathbb{S}^p_{\succeq 0}$.

Next, we introduce a second convex cone $\mathcal{N}^p$ that plays an important role for ML estimation in Gaussian $\mtp$ models. To formally define this cone, we introduce two partial orders on matrices. Let $A,B$ be two $p\times p$ matrices. Then $A\geq B$ means that $A_{ij}\geq B_{ij}$ for all $(i,j)\in V\times V$, and $A\succeq B$ means that $A-B\in\mathbb{S}^p_{\succeq 0}$. Then the cone $\mathcal{N}^p$ is defined as the negative closure of $\mathbb{S}^p_{\succ 0}$, i.e.
$$\mathcal{N}^p = \{X\in\mathbb{S}^{p}\mid \exists Y\in\mathbb{S}^p_{\succ 0} \textrm{ with } X\leq Y \textrm{ and } \textrm{diag}(X)=\textrm{diag}(Y)\}.$$
To simplify notation, we will suppress the dependence on $p$ and write $\mathbb{S}$,  $\mathbb{S}_{\succeq 0}$,  $\mathbb{S}_{\succ 0}$, $\mathcal{M}$ and $\mathcal{N}$, when the dimension is clear. In the following result, we show that the cones $\mathcal{N}$ and $\mathcal{M}$ are dual to each other.

\begin{lem} \label{lem:conic_duality}
The closure of $\mathcal{N}$ is the dual to the cone of M-matrices $\mathcal{M}$, i.e.
\begin{equation}
\label{conic_duality}
 \overline{\mathcal{N}} \,=\,\bigl\{ \, S\in \mathbb{S} \,\mid\, \langle S, K \rangle \geq 0
\,\,\,\hbox{for all} \,\,\,K \in \mathcal{M} \bigr\}.
\end{equation}
\end{lem}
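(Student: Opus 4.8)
The plan is to route the argument through the bipolar theorem, which reduces the claim to a single clean dual computation and makes the closure bookkeeping automatic. First I would note that the dual of a cone coincides with the dual of its closure (by continuity of $\langle\,\cdot\,,\,\cdot\,\rangle$), so that $\mathcal{M}^{*} = \overline{\mathcal{M}}^{*}$, where by the discussion preceding the lemma $\overline{\mathcal{M}} = \mathbb{S}_{\succeq 0}\cap\bigcap_{i<j}\mathcal{H}^{p}_{ij}$. Observe next that $\mathcal{N}$ is a convex cone: from its definition, $X\in\mathcal{N}$ iff $X = Y+N$ with $Y\in\mathbb{S}_{\succ 0}$ and $N$ symmetric with $N_{ii}=0$ and $N_{ij}\leq 0$ for $i\neq j$ (take $N=X-Y$), so $\mathcal{N}$ is the Minkowski sum of two convex cones. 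The bipolar theorem therefore gives $\mathcal{N}^{**}=\overline{\mathcal{N}}$, and it suffices to establish the one-step identity $\mathcal{N}^{*}=\overline{\mathcal{M}}$; dualizing it once more yields $\overline{\mathcal{N}}=\mathcal{N}^{**}=\overline{\mathcal{M}}^{*}=\mathcal{M}^{*}$, which is exactly \eqref{conic_duality}.

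For the inclusion $\overline{\mathcal{M}}\subseteq\mathcal{N}^{*}$, I would take $K\in\overline{\mathcal{M}}$ and an arbitrary $X=Y+N\in\mathcal{N}$ as above, and split $\langle K,X\rangle=\langle K,Y\rangle+\langle K,N\rangle$. The first term is nonnegative since the trace of a product of two positive semidefinite matrices is nonnegative (write $\langle K,Y\rangle=\tr(Y^{1/2}KY^{1/2})$), and the second equals $\sum_{i\neq j}K_{ij}N_{ij}$, a sum of products of two nonpositive numbers, hence also nonnegative. Thus $\langle K,X\rangle\geq 0$ for all $X\in\mathcal{N}$, i.e. $K\in\mathcal{N}^{*}$.

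For the reverse inclusion $\mathcal{N}^{*}\subseteq\overline{\mathcal{M}}$, take $K\in\mathcal{N}^{*}$. Since $\mathbb{S}_{\succ 0}\subseteq\mathcal{N}$ (take $N=0$), we get $\langle K,Y\rangle\geq 0$ for all $Y\succ 0$, hence for all $Y\succeq 0$ by continuity, so $K\in(\mathbb{S}_{\succeq 0})^{*}=\mathbb{S}_{\succeq 0}$ by self-duality of the positive semidefinite cone; that is, $K\succeq 0$. To pin down the off-diagonal signs, for fixed $i\neq j$ and $t\geq 0$ I would feed in the test matrix $X_{t}=I-t(e_{i}e_{j}^{\top}+e_{j}e_{i}^{\top})\in\mathcal{N}$ (here $Y=I\succ 0$ and $N=-t(e_{i}e_{j}^{\top}+e_{j}e_{i}^{\top})$ has zero diagonal and nonpositive off-diagonal). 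The requirement $\langle K,X_{t}\rangle=\tr(K)-2tK_{ij}\geq 0$ for every $t\geq 0$ forces $K_{ij}\leq 0$. Hence $K\in\mathbb{S}_{\succeq 0}\cap\bigcap_{i<j}\mathcal{H}^{p}_{ij}=\overline{\mathcal{M}}$, which finishes the identity $\mathcal{N}^{*}=\overline{\mathcal{M}}$.

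The main obstacle is the closure: the Minkowski sum of two closed cones need not be closed, so one cannot expect the naive equality $\mathcal{N}=\mathcal{M}^{*}$ to hold without passing to $\overline{\mathcal{N}}$, and a direct proof of $\mathcal{M}^{*}\subseteq\overline{\mathcal{N}}$ would require explicitly approximating a given $S\in\mathcal{M}^{*}$ by elements $Y+N$ with $Y\succ 0$. The bipolar route avoids this entirely, since the closure is reinstated for free when one dualizes $\mathcal{N}^{*}=\overline{\mathcal{M}}$ back to $\overline{\mathcal{N}}=\mathcal{M}^{*}$; the only mildly delicate points are the self-duality of $\mathbb{S}_{\succeq 0}$ and the continuity passage from $Y\succ 0$ to $Y\succeq 0$, both of which are standard.
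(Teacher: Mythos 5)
Your proof is correct, but it follows a genuinely different route from the paper's. The paper represents $\mathcal{M}$ as the intersection $\mathbb{S}_{\succ 0}\cap\bigcap_{i<j}\mathcal{H}_{ij}$ and invokes the rule that the dual of an intersection of convex cones is the Minkowski sum of the individual duals, so that $\mathcal{M}^\vee$ is exhibited in one stroke as $\mathbb{S}_{\succ 0}^\vee$ plus the duals of the half-spaces, i.e.\ as $\overline{\mathcal{N}}$; you instead compute a single dual by hand, $\mathcal{N}^\vee=\overline{\mathcal{M}}$, and recover the lemma from the bipolar theorem. Your detour buys rigor at precisely the two places where the paper is loose. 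First, the sum-of-duals rule holds in general only up to closure (a Minkowski sum of closed cones need not be closed); making it exact here requires a Slater-type relative-interior condition that the paper does not check, whereas the bipolar theorem makes the closure bookkeeping automatic, exactly as you observe. Second, the paper's identification $\mathcal{H}_{ij}^\vee=\mathcal{H}_{ij}$ is not literally correct: the dual of a half-space through the origin is a ray, here $\bigl\{-t(e_ie_j^\top+e_je_i^\top)\,:\,t\geq 0\bigr\}$, not the half-space itself (taken literally, the half-spaces would make the Minkowski sum all of $\mathbb{S}$, since $\mathbb{S}_{\succeq 0}+\mathcal{H}_{12}=\mathbb{S}$). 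Your test matrices $I-t(e_ie_j^\top+e_je_i^\top)$ are exactly the correct substitute for that step. Conversely, the paper's approach — once repaired by replacing $\mathcal{H}_{ij}^\vee$ with these rays — is shorter and more structural: it displays $\overline{\mathcal{N}}$ directly as a sum of elementary dual cones, rather than going through two inclusion arguments and a double dualization.
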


\begin{proof}
We denote the dual of a convex cone $\mathcal{C}$ by $\mathcal{C}^\vee$. Let $\mathcal{C}_1$, $\mathcal{C}_2$ be two convex cones. Then it is an easy exercise to verify that
\begin{equation}\label{eq:2cones}
(\mathcal{C}_1\cap \mathcal{C}_2)^\vee = \mathcal{C}_1^\vee + \mathcal{C}_2^\vee;	
\end{equation}
here $+$ denotes the Minkowski sum. Note that
$$\mathbb{S}_{\succ 0}^\vee = \mathbb{S}_{\succeq 0} \quad \textrm{ and } \quad \mathcal{H}_{ij}^\vee = \mathcal{H}_{ij}.$$
This completes the proof, since $\mathcal{M}=\mathbb{S}_{\succ 0}\cap_{i< j} \mathcal{H}_{ij}$ and (\ref{eq:2cones}) can be applied inductively to any finite collection of convex cones.
\end{proof}

Using the cones $\mathcal{M}$ and $\mathcal{N}$ we now determine conditions for existence of the MLE in Gaussian $\mtp$ models and give a characterization of the MLE. We say that the MLE does not exist if the likelihood does not attain the global maximum.

\begin{prop}\label{prop:mtp2GaussGraphModel}
Consider a Gaussian $\mtp$ model. Then the MLE $\hat{\Sigma}$ (and $\hat{K}$) exists for a given sample covariance matrix $S$ on $V$ if and only if 
$S\in\mathcal{N}$. It is then equal to the unique element $\hat\Sigma\succ 0$ that satisfies the following system of equations and inequalities
\begin{eqnarray}  
(\hat\Sigma^{-1})_{ij}&\leq&0\; \mbox{ for all $i\neq j$},\label{eq:primal1}\\
\hat\Sigma_{ii}-S_{ii}&=&0\; \mbox{ for all $i\in V$},\label{eq:dual1}\\ 
(\hat\Sigma_{ij}-S_{ij})&\geq & 0\;\mbox{ for all $i\neq j$}, \label{eq:dual2}\\
(\hat\Sigma_{ij}-S_{ij})(\hat\Sigma^{-1})_{ij}&=& 0\;\mbox{ for all $i\neq j$}, \label{eq:slackness}
\end{eqnarray}
\end{prop}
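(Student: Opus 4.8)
\emph{Proof plan.} The problem (\ref{ML_mtp2}) maximises the strictly concave function $\ell(K)=\log\det K-\tr(KS)$ over the convex cone $\mathcal{M}=\mathbb{S}_{\succ 0}\cap\bigcap_{i<j}\mathcal{H}_{ij}$, whose only inequality structure comes from the linear half-space constraints $K_{ij}\le 0$ for $i<j$; the positive-definiteness constraint needs no multiplier because $\log\det K$ acts as a barrier that forces any optimiser into the interior $\mathbb{S}_{\succ 0}$. My plan is first to verify Slater's condition, so that the Karush--Kuhn--Tucker (KKT) conditions become both necessary and sufficient for global optimality, then to show that these KKT conditions are exactly the system (\ref{eq:primal1})--(\ref{eq:slackness}), and finally to settle the existence dichotomy in terms of $S\in\mathcal{N}$. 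A strictly feasible point is easy to exhibit, e.g.\ $K=(p+1)I-\mathbf{1}\mathbf{1}^{\top}$, which has strictly negative off-diagonal entries and eigenvalues $1$ and $p+1$; hence Slater holds, and strict concavity of $\ell$ yields uniqueness of the optimiser whenever it exists.

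\emph{KKT conditions.} The gradient of $\ell$ at $K$ is $\Sigma-S$ with $\Sigma=K^{-1}$. Attaching multipliers $\lambda_{ij}\ge 0$ to the constraints $K_{ij}\le 0$ ($i<j$), stationarity requires $\hat\Sigma-S$ to be a nonnegative combination of the symmetrised matrix units indexed by $i<j$; equivalently, $\hat\Sigma-S$ has vanishing diagonal and nonnegative off-diagonal entries, which are precisely (\ref{eq:dual1}) and (\ref{eq:dual2}), with each $\lambda_{ij}$ proportional to $(\hat\Sigma-S)_{ij}$. Primal feasibility $\hat{K}\in\mathcal{M}$ gives (\ref{eq:primal1}) (the interiority $\hat K\succ 0$ being automatic), and complementary slackness $\lambda_{ij}\hat K_{ij}=0$ becomes (\ref{eq:slackness}). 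I expect the only delicate point here to be the symmetry bookkeeping in the matrix derivative, which is routine. Conversely, any $\hat\Sigma\succ 0$ solving (\ref{eq:primal1})--(\ref{eq:slackness}) is a KKT point and hence, by convexity, the unique global maximiser.

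\emph{Existence.} For the ``only if'' direction, suppose the MLE $\hat\Sigma$ exists. Then (\ref{eq:dual1}) and (\ref{eq:dual2}) say $S\le\hat\Sigma$ with $\mathrm{diag}(S)=\mathrm{diag}(\hat\Sigma)$ while $\hat\Sigma\succ 0$; taking $Y=\hat\Sigma$ in the definition of $\mathcal{N}$ shows $S\in\mathcal{N}$. The substantive ``if'' direction is where the hypothesis $S\in\mathcal{N}$ does real work. Choosing a witness $Y\succ 0$ with $S\le Y$ and $\mathrm{diag}(S)=\mathrm{diag}(Y)$, I would combine $K_{ij}\le 0$ with $S_{ij}\le Y_{ij}$ (and equal diagonals) to obtain the key inequality
\[\tr(KS)\;\ge\;\tr(KY)\qquad\text{for every }K\in\mathcal{M},\]
so that $\ell(K)\le\log\det K-\tr(KY)$ on $\mathcal{M}$. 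Since $Y\succ 0$, the right-hand side is bounded above and coercive, with superlevel sets that are compact subsets of $\mathbb{S}_{\succ 0}$; hence any maximising sequence for $\ell$ in $\mathcal{M}$ stays in such a compact set, and its limit---lying in the closed set $\bigcap_{i<j}\mathcal{H}_{ij}$ and bounded away from the singular boundary---lies in $\mathcal{M}$ and attains the supremum.

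\emph{Main obstacle.} The crux is precisely this ``if'' direction: ruling out that the supremum of $\ell$ over $\mathcal{M}$ is approached only as $K$ degenerates toward the boundary of $\mathbb{S}_{\succeq 0}$ or escapes to infinity. The inequality $\tr(KS)\ge\tr(KY)$ is the device that transfers coercivity from an ordinary Gaussian likelihood with positive definite ``sufficient statistic'' $Y$ to the constrained problem, and it is exactly here that membership in $\mathcal{N}$---equivalently, the dual description in Lemma~\ref{lem:conic_duality}---is used; everything else reduces to standard convex-duality bookkeeping.
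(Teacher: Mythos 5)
Your proposal is correct, and its KKT portion (a Slater point, stationarity yielding (\ref{eq:dual1})--(\ref{eq:dual2}), complementary slackness yielding (\ref{eq:slackness}), primal feasibility yielding (\ref{eq:primal1}), strict concavity yielding uniqueness) follows the same convex-duality framework as the paper. Where you genuinely diverge is the existence direction. The paper proves it by citing the dual problem (\ref{ML_mtp2_dual}) from Slawski and Hein, invoking Slater's constraint qualification for strong duality, and concluding that the likelihood is unbounded exactly when the dual is infeasible, i.e.\ when $S\notin\mathcal{N}$. You instead give a direct primal argument: a witness $Y\succ 0$ with $S\le Y$ and equal diagonals yields $\tr(KS)\ge\tr(KY)$ for all $K\in\mathcal{M}$, since every term $K_{ij}(S_{ij}-Y_{ij})$ is a product of two nonpositive numbers; hence $\ell(K;S)$ is dominated on $\mathcal{M}$ by a Gaussian log-likelihood with positive definite ``data'' $Y$, whose superlevel sets are compact in $\mathbb{S}_{\succ 0}$, so a maximizing sequence converges to a point of $\mathcal{M}$ attaining the supremum. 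This buys self-containedness, and it also closes a step the paper leaves terse: for a concave function, finiteness of the supremum does not by itself imply attainment, whereas your compactness argument proves attainment outright. What the paper's route buys in exchange is brevity and the explicit dual formulation, which it reuses later (e.g.\ in Remark~\ref{rem:ggm}, and in the proof of Theorem~\ref{th:SH}, where an explicit primal--dual feasible pair is built from the single-linkage matrix); indeed your coercivity device is close in spirit to that later construction, with an arbitrary witness $Y$ in place of the ultrametric one.
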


\begin{proof}  It is straight-forward to compute the dual optimization problem and the KKT conditions. In particular, in~\cite{slawski2015estimation} it was shown that the dual optimization problem to (\ref{ML_mtp2}) is given by
\begin{equation}
\label{ML_mtp2_dual}
\begin{aligned}
& \underset{\Sigma\succeq 0}{\text{minimize}}
& & -\log\det(\Sigma)-p  \\
& \text{subject to}
&& \Sigma_{ii} = S_{ii}, \; \mbox{ for all $i\in V$},\\
&&& \Sigma_{ij} \geq S_{ij}, \; \mbox{ for all $i\neq j$}.\\
\end{aligned}
\end{equation}

Note that the identity matrix is a strictly feasible point for (\ref{ML_mtp2}). Hence, the MLE does not exist if and only if the likelihood is unbounded. Since by Slater's constraint qualification strong duality holds for the optimization problems (\ref{ML_mtp2}) and (\ref{ML_mtp2_dual}), the MLE does not exist if and only if $S\notin\mathcal{N}$. 
\end{proof}
We note that the conditions in Proposition~\ref{prop:mtp2GaussGraphModel} were also derived in \cite{slawski2015estimation}, save for the explicit identification of the dual cone $\mathcal{N}$.

\begin{rem}\label{rem:ggm} Proposition~\ref{prop:mtp2GaussGraphModel} can easily be extended to provide properties for the existence of the MLE and a characterization of the MLE for Gaussian graphical models under $\mtp$. In this case, let $G=(V,E)$ be an undirected graph. Then the primal problem has additional equality constraints, namely $K_{ij}=0$ for all $ij\notin E$, and hence the inequality constraints in the dual problem are restricted to the entries in $E$, i.e., $\Sigma_{ij} \geq S_{ij}$ for all $ij\in E$. Note that if the MLE of $\Sigma$ based on $S$ exists in the Gaussian graphical model over $G$, it also exists in the Gaussian graphical model over $G$ under $\mtp$, since without the $\mtp$ constraint the MLE needs to satisfy $\hat{\Sigma}_{ij} =S_{ij}$ for all $ij\in E$. 
\qed
\end{rem}

We define the \emph{maximum likelihood graph} (ML graph) $\hat{G}$ to be the graph determined by $\hat K$, i.e.\ $\hat G=G(\hat K)$, where $\hat K=\hat\Sigma^{-1}$ is the MLE of $K$ under $\mtp$. We then have the following important corollary of Proposition~\ref{prop:mtp2GaussGraphModel}.
\begin{cor}\label{cor:MLgraphmle}Consider the Gaussian graphical model determined by $K_{ij}=0$ for $ij\not\in E(\hat G)$, where $\hat{G}$ is the ML graph under $\mtp$. Let $\bar\Sigma$ be the MLE of $\Sigma$ under that Gaussian graphical model (without the $\mtp$ constraint). Then $\hat\Sigma=\bar\Sigma$.
\end{cor}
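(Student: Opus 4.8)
The plan is to verify that $\hat\Sigma$ satisfies the classical moment-matching characterization of the MLE in a Gaussian graphical model and then invoke uniqueness. Recall that in the Gaussian graphical model over a graph $G=(V,E)$ (without any $\mtp$ constraint), the MLE $\bar\Sigma$, whenever it exists, is the unique positive definite matrix characterized by $\bar\Sigma_{ij}=S_{ij}$ for all $i=j$ and all $ij\in E$, together with $(\bar\Sigma^{-1})_{ij}=0$ for all $ij\notin E$. I would take $G=\hat G$ and show that $\hat\Sigma$ meets each of these requirements.

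First, the diagonal condition $\hat\Sigma_{ii}=S_{ii}$ is exactly (\ref{eq:dual1}). Second, the zero pattern of the concentration matrix is immediate: for $ij\notin E(\hat G)$ we have $(\hat\Sigma^{-1})_{ij}=\hat K_{ij}=0$ by the very definition of $\hat G=G(\hat K)$. Third, for the edge condition I would use the complementary slackness relation (\ref{eq:slackness}): if $ij\in E(\hat G)$ then $(\hat\Sigma^{-1})_{ij}\neq 0$, so (\ref{eq:slackness}) forces $\hat\Sigma_{ij}-S_{ij}=0$, i.e.\ $\hat\Sigma_{ij}=S_{ij}$. Thus $\hat\Sigma$ matches $S$ on the diagonal and on every edge of $\hat G$, while having the prescribed zero concentration off $\hat G$.

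Since $\hat\Sigma\succ 0$ by Proposition~\ref{prop:mtp2GaussGraphModel}, it is a positive definite matrix satisfying the full set of characterizing equations for the graphical-model MLE over $\hat G$. By the strict concavity of the Gaussian log-likelihood on the relevant linear section of $\mathbb{S}_{\succ 0}$, this MLE is unique; hence $\hat\Sigma=\bar\Sigma$, and in particular the existence of $\bar\Sigma$ is guaranteed by exhibiting $\hat\Sigma$ itself as a witness.

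I do not anticipate a genuine obstacle here: the content is a direct translation of the KKT conditions of Proposition~\ref{prop:mtp2GaussGraphModel} into the moment-matching characterization of the graphical MLE, with complementary slackness doing the essential work on the edges. The only point requiring care is to state and use the uniqueness of the graphical-model MLE correctly, which is standard and follows from concavity of (\ref{eq:likely}).
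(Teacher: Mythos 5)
Your proof is correct and follows essentially the same route as the paper: both verify that $\hat\Sigma$ satisfies the moment-matching equations characterizing the graphical-model MLE over $\hat G$ (diagonal and edge entries matching $S$, zero concentration off $\hat G$) and then conclude by uniqueness. You merely spell out more explicitly which KKT condition of Proposition~\ref{prop:mtp2GaussGraphModel} yields each equation, with complementary slackness (\ref{eq:slackness}) handling the edges, exactly as the paper's argument implicitly does.
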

\begin{proof}The MLE of $\Sigma$ under the Gaussian graphical model with graph $\hat G$ is the unique element $\bar\Sigma\succ 0$ satisfying the following system of equations: 
\begin{eqnarray*}  
\bar\Sigma_{ii}-S_{ii}&=&0\; \mbox{ for all $i\in V$},\\
\bar\Sigma_{ij}-S_{ij}&=& 0\;\mbox{ for all $ij\in E(\hat G)$},\\
(\bar\Sigma^{-1})_{ij}&=& 0\;\mbox{ for all $ij\not\in E(\hat G)$}.
\end{eqnarray*}
Proposition~\ref{prop:mtp2GaussGraphModel} says that also $\hat\Sigma$ satisfies these equations and hence we must have $\bar\Sigma=\hat\Sigma$.
\end{proof}
Note that this corollary highlights the role of the complementary slackness condition (\ref{eq:slackness}) in inducing sparsity of the $\mtp$ solution.

We emphasize that the MLE under $\mtp$ is equivariant w.r.t.\ changes of scale so that without loss of generality we can assume that the sample covariance is normalized, i.e.\ $S_{ii}=1$ or, equivalently, $S=R$, where $R$ is the correlation matrix. For certain of the subsequent developments this represents a convenient simplification.
\begin{lem}\label{lem:equivariance}
	Let $S$ be the sample covariance matrix, $R$ the corresponding sample correlation matrix. Denote by $\hat \Sigma^S$ and $\hat\Sigma^R$ the MLE in Proposition~\ref{prop:mtp2GaussGraphModel} based on $S$ and $R$, respectively. Then 
	$$
	\hat\Sigma^S_{ij}\;\;=\;\;\sqrt{S_{ii}S_{jj}}\,\hat\Sigma^R_{ij}\qquad\mbox{for all }i,j\in V.
	$$ 
\end{lem}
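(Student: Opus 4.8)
The plan is to reduce the statement to the uniqueness part of Proposition~\ref{prop:mtp2GaussGraphModel}. That proposition characterizes $\hat\Sigma^S$ as the \emph{unique} positive definite matrix solving the system (\ref{eq:primal1})--(\ref{eq:slackness}) (with $S$ in place of the sample covariance), so rather than recomputing any MLE I would simply guess the answer from the claimed formula and verify that it solves this system.

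First I would record the scaling as a congruence: setting $D := \textrm{diag}(S_{11},\dots,S_{pp})$, the relation $S_{ij} = \sqrt{S_{ii}S_{jj}}\,R_{ij}$ is exactly $S = D^{1/2} R D^{1/2}$, and the asserted identity is the single matrix equation $\hat\Sigma^S = D^{1/2}\hat\Sigma^R D^{1/2}$. I would therefore define the candidate $\tilde\Sigma := D^{1/2}\hat\Sigma^R D^{1/2}$ and aim to show that $\tilde\Sigma$ satisfies (\ref{eq:primal1})--(\ref{eq:slackness}) for $S$; uniqueness then forces $\tilde\Sigma = \hat\Sigma^S$, which is the lemma.

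The observation driving the verification is that congruence by the positive diagonal $D^{1/2}$ rescales each matrix entry by a strictly positive factor and passes compatibly to the inverse: $\tilde\Sigma^{-1} = D^{-1/2}(\hat\Sigma^R)^{-1}D^{-1/2}$. From here the four conditions follow one at a time from those already known for the $R$-based MLE $\hat\Sigma^R$. Positivity and the invertibility of $D^{1/2}$ give $\tilde\Sigma\succ 0$; the normalization $R_{ii}=1$ gives the diagonal match $\tilde\Sigma_{ii}=S_{ii}\,\hat\Sigma^R_{ii}=S_{ii}$; and because the scale factors $\sqrt{S_{ii}S_{jj}}$ are strictly positive they preserve the nonpositivity of the off-diagonal entries of the inverse and the off-diagonal inequalities $\tilde\Sigma_{ij}\geq S_{ij}$, while cancelling in the complementary-slackness products $(\tilde\Sigma_{ij}-S_{ij})(\tilde\Sigma^{-1})_{ij}=(\hat\Sigma^R_{ij}-R_{ij})((\hat\Sigma^R)^{-1})_{ij}=0$.

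I do not expect a genuine obstacle here, since the argument is a verification rather than a construction; the only point needing care is the sign bookkeeping for the off-diagonal entries of the inverse and for the slackness condition (\ref{eq:slackness}), both of which go through solely because the scale factors are positive. The same computation incidentally shows that $\hat\Sigma^S$ exists exactly when $\hat\Sigma^R$ does, consistent with the equivalence $S\in\mathcal{N}\Leftrightarrow R\in\mathcal{N}$.
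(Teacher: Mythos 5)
Your proof is correct, but it follows a genuinely different route from the paper's. The paper argues directly on the optimization problem: writing $S = DRD$ with $D_{ii}=\sqrt{S_{ii}}$, it substitutes $K'=DKD$ into the objective, so that $\log\det K - \tr(SK) = \log\det K' - \tr(RK') - \sum_i \log S_{ii}$, and observes that $K\mapsto DKD$ is a bijection of the M-matrix cone onto itself; hence the maximizer for $R$ pulls back to the maximizer for $S$, i.e. $\hat K^S = D^{-1}\hat K^R D^{-1}$, which is exactly $\hat\Sigma^S = D\hat\Sigma^R D$. You instead treat the KKT characterization of Proposition~\ref{prop:mtp2GaussGraphModel} as a black box and verify that the candidate $D\hat\Sigma^R D$ satisfies (\ref{eq:primal1})--(\ref{eq:slackness}) for $S$, then invoke uniqueness; the sign bookkeeping you do (positive scale factors preserve (\ref{eq:primal1}), (\ref{eq:dual2}) and cancel in (\ref{eq:slackness})) is exactly right, and the diagonal match uses $\hat\Sigma^R_{ii}=R_{ii}=1$ from (\ref{eq:dual1}). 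What each approach buys: the paper's change-of-variables argument is more self-contained --- it needs only the definition of the MLE and the fact that congruence by a positive diagonal matrix preserves M-matrices, not the optimality conditions --- and it establishes equivariance of the entire problem, not just of the optimum. Your verification yields a small bonus: dual feasibility of $D\hat\Sigma^R D$ exhibits an explicit witness for $S\in\mathcal{N}$, so the two MLEs exist or fail to exist together. The one point to state carefully is the order of logic: Proposition~\ref{prop:mtp2GaussMLgraphModel} \emph{as stated} only asserts uniqueness of the solution to the system once the MLE exists, so before concluding $\tilde\Sigma=\hat\Sigma^S$ you should note that your candidate itself certifies $S\in\mathcal{N}$ (it dominates $S$ entrywise, is positive definite, and matches the diagonal), hence existence; your closing remark essentially covers this, but it belongs at the start of the argument rather than as an afterthought. (Also note the typo in the preceding sentence: the proposition label is \verb|prop:mtp2GaussGraphModel|.)
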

\begin{proof}
Denote by $D$ a diagonal matrix such that $D_{ii}=\sqrt{S_{ii}}$ and $S=DRD$. The likelihood function based on $S$ is 
$$\log\det K-{\rm tr}(SK)\;\;=\;\;\log\det K-{\rm tr}(RDKD).$$
If $K'=DKD$, this can be rewritten as $\log \det K'-{\rm tr}(RK')-\sum_i\log S_{ii}$. Therefore, if $\hat K^R$ is the maximizer of $\log \det K-{\rm tr}(RK)$ under the $\mtp$ constraints, then $D^{-1}\hat K^RD^{-1}$ is also an M-matrix and the maximizer of $\log\det K-{\rm tr}(SK)$.
\end{proof}

We end this section by providing simple coordinate descent algorithms for ML estimation under $\mtp$. Although interior point methods run in polynomial time, for very large Gaussian graphical models it is usually more practical to apply coordinate descent algorithms. In Algorithms~\ref{algthm:IPS_opt_K} and~\ref{algthm:IPS_opt_Z} we describe two methods for computing the MLE that only use optimization problems of size $2\times 2$ which have a simple and explicit solution, and iteratively update the entries of $K$, respectively of $\Sigma$. Algorithms~\ref{algthm:IPS_opt_K} and~\ref{algthm:IPS_opt_Z} are inspired by the corresponding algorithms for Gaussian graphical models; see, for example,  \cite{dempster1972covariance,SK1986,WS1977}. Slawski and Hein~\cite{slawski2015estimation} also provide a coordinate descent algorithm for estimating covariance matrices under $\mtp$. However, their method updates one column/row of $\Sigma$ at a time. \del 

We first analyze Algorithm~\ref{algthm:IPS_opt_K}. Let $A=\{u,v\}$ and $B=V\setminus A$. Then note that the objective function can be written in terms of the $2\times 2$ Schur complement $K' = K_{AA}-K_{AB}K_{BB}^{-1}K_{BA}$, since up to an additive constant
$$\log\det K - \textrm{trace}(KS) = \log\det K' -\textrm{trace}(K'S_{AA}).$$
Defining $L:= K_{AB}K_{BB}^{-1}K_{BA}$, then the optimization problem in step (2) of Algorithm~\ref{algthm:IPS_opt_K} is equivalent to
\begin{equation*}
\begin{aligned}
& \underset{K'\succeq 0}{\text{maximize}}
& & \log\det(K') - \textrm{trace}(K'S_{AA})\\
& \text{subject to}
&& K'_{12} +L_{12} \leq 0.\\
\end{aligned}
\end{equation*}
The unconstrained optimum to this problem is given by $K'=S_{AA}^{-1}$ and is attained if and only if $(S_{AA}^{-1})_{12} + L_{12}\leq 0$, or equivalently, if and only if
$$L_{12} \leq \frac{S_{uv}}{S_{uu}S_{vv}-S_{uv}^2}.$$ 
Otherwise the KKT conditions give that $K'_{12}=-L_{12}$. 

Maximizing over the remaining two entries of $K'$ leads to a quadratic equation, which has one feasible solution
\begin{equation}\label{eq:updK}
K'_{11}=\frac{1+\sqrt{1+4S_{uu}S_{vv}L_{12}^2}}{2S_{uu}},\quad K'_{22}=\frac{1+\sqrt{1+4S_{uu}S_{vv}L_{12}^2}}{2S_{vv}},\quad K'_{12}=-L_{12}.	
\end{equation}
Then the solution to the optimization problem in step (2) is given by $K_{AA} = K'+L$. 

Dual to this algorithm, one can define an algorithm that iteratively updates the off-diagonal entries of $\Sigma$ by maximizing the log-likelihood in direction $\Sigma_{uv}$ and keeping all other entries fixed. This procedure is shown in Algorithm~\ref{algthm:IPS_opt_Z}. If $p>n$, $S$ is not positive definite; in this case we use as starting point the single linkage matrix $Z$ that is defined later in (\ref{eq:Z}). 

Similarly as for Algorithm~\ref{algthm:IPS_opt_K}, the solution to the optimization problem in step (2) can be given in closed-form. Defining $A=\{u,v\}$, $B=V\setminus A$ and $L = \Sigma_{AB}\Sigma_{BB}^{-1}\Sigma_{BA}$, then analogously as in the derivation above, one can show that the solution to the optimization problem in step (2) of Algorithm~\ref{algthm:IPS_opt_Z} is given by
\begin{equation}\label{eq:Sigupdate}
\Sigma_{uv}=\max\{S_{uv}, L_{12}\}.	
\end{equation}

\begin{algorithm}[!t]
\caption{Coordinate descent on $K$.}
\label{algthm:IPS_opt_K}
\begin{algorithmic}
\begin{STATE}
\vspace{0.2cm}
{\bf Input:\;\;\;}
Sample covariance matrix $S$, and precision $\epsilon$.

{\bf Output:} 
MLE $\hat{K}\in \mathcal{M}$.

\vspace{0.2cm}

\begin{enumerate}[1.]

\item Let $K^0 := K^1 := (\textrm{diag}(S))^{-1}$.
\item Cycle through entries $u\neq v$ and solve the following optimization problem:
\begin{equation*}
\begin{aligned}
& \underset{K\succeq 0}{\text{maximize}}
& & \log\det(K) - \textrm{trace}(KS)  \nonumber\\
& \text{subject to}
&& K_{uv} \leq 0,\nonumber\\
&&& K_{ij} = K^1_{ij} \;\textrm{ for all } ij\in (V\times V)\setminus\{uu,vv,uv\},\\  
\end{aligned}
\end{equation*}
and update $K^1 = K$.
\item If $|\!|K^0-K^1|\!|_1 <\epsilon$, set $\hat{K}=K^1$. Otherwise, set $K^0=K^1$ and return to 2.
\end{enumerate}
\end{STATE}
\end{algorithmic}
\end{algorithm}

\begin{algorithm}[!b]
\caption{Coordinate descent on $\Sigma$.}
\label{algthm:IPS_opt_Z}
\begin{algorithmic}
\begin{STATE}
\vspace{0.2cm}
{\bf Input:\;\;\;}
Sample covariance matrix $S\succ 0$, and precision $\epsilon$.

{\bf Output:} 
MLE $\hat{\Sigma}$ with $\hat{\Sigma}^{-1}\in \mathcal{M}$.

\vspace{0.2cm}

\begin{enumerate}[1.]

\item Let $\Sigma^0 := \Sigma^1 := S$
\item Cycle through entries $u\neq v$ and solve the following optimization problem:
\begin{equation*}
\begin{aligned}
& \underset{\Sigma\succeq 0}{\text{maximize}}
& & \log\det(\Sigma)  \\
& \text{subject to}
&& \Sigma_{uv} \geq S_{uv},\\
&&& \Sigma_{ij} = \Sigma^1_{ij} \;\textrm{ for all } ij\in (V\times V)\setminus\{uv\}.
\end{aligned}
\end{equation*}
and update $\Sigma^1 = \Sigma$.
\item If $|\!|\Sigma^0-\Sigma^1|\!|_1 <\epsilon$, set $\hat{\Sigma}=\Sigma^1$. Otherwise, set $\Sigma^0=\Sigma^1$ and return to 2.
\end{enumerate}

\end{STATE}
\end{algorithmic}
\end{algorithm}

We end by proving that Algorithms~\ref{algthm:IPS_opt_K} and~\ref{algthm:IPS_opt_Z} indeed converge to the MLE. We here assume that $n\geq 2$ to guarantee existence of the MLE. Note that the suggested starting points for both algorithms can be modified. 

\begin{prop}
\label{prop_IPS}
Algorithms~\ref{algthm:IPS_opt_K} and~\ref{algthm:IPS_opt_Z} converge to the MLE $\hat{K}=\hat{\Sigma}^{-1}\in\mathcal{M}$. 
\end{prop}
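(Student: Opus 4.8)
The plan is to recognize both algorithms as cyclic (block) coordinate ascent applied to a strictly concave objective over a convex feasible region whose defining inequalities are \emph{separable} across the individual matrix entries, and then to run the standard four-step argument for such schemes: feasibility together with monotonicity of the objective, compactness of the iterates, optimality of every limit point, and finally convergence of the whole sequence by uniqueness of the maximizer.

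First I would verify that the iterates stay feasible and that the objective never decreases. For Algorithm~\ref{algthm:IPS_opt_K} this uses the Schur complement reduction already recorded above: optimizing the block $K_{AA}$ with $K_{BB}$ and $K_{AB}$ frozen is equivalent to optimizing the Schur complement $K'$, and since $K_{BB}\succ 0$ is inherited from the previous iterate we have $K\succ 0$ iff $K'\succ 0$; the closed form~(\ref{eq:updK}) then exhibits an exact maximizer with $K'\succ 0$ and $(K'+L)_{uv}\le 0$, while all other off-diagonal entries are untouched, so each $K^t$ remains an M-matrix. Since step~(2) performs an exact constrained maximization over the block, $\ell(K^{t+1};S)\ge \ell(K^t;S)$. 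The same holds for Algorithm~\ref{algthm:IPS_opt_Z}, where~(\ref{eq:Sigupdate}) gives the exact single-coordinate maximizer of $\log\det\Sigma$ subject to $\Sigma_{uv}\ge S_{uv}$ while preserving $\Sigma\succ 0$.

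Next I would establish that the iterates lie in a compact set. The objective is bounded above because the MLE exists for $n\ge 2$ (Proposition~\ref{prop:mtp2GaussGraphModel}) and, by the strong duality invoked there, the dual optimum~(\ref{ML_mtp2_dual}) is attained as well; hence the monotone objective sequence converges. For Algorithm~\ref{algthm:IPS_opt_K} the super-level set $\{K\in\overline{\mathcal{M}}:\ell(K;S)\ge \ell(K^0;S)\}$ is closed, is bounded away from the boundary of $\mathbb{S}_{\succ 0}$ since $\log\det K\to-\infty$ there, and is bounded because $\log\det$ is strictly concave along every ray, so there is no recession direction of $\mathcal{M}$ along which $\ell$ stays bounded below; it is therefore compact. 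For Algorithm~\ref{algthm:IPS_opt_Z} compactness is immediate, as the diagonal of $\Sigma$ is frozen at $\operatorname{diag}(S)$, fixing $\tr(\Sigma)$ and thus bounding all eigenvalues.

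The hard part will be showing that every limit point is \emph{globally} optimal, i.e.\ that cyclic coordinate ascent cannot stall at a non-optimal feasible point. This is exactly where the separability of the $\mtp$ constraints is indispensable: each inequality $K_{uv}\le 0$ (respectively $\Sigma_{uv}\ge S_{uv}$) involves a single matrix coordinate, so the only nonsmoothness sits in coordinate-wise half-spaces while the smooth part $\log\det$ is strictly concave with open effective domain $\mathbb{S}_{\succ 0}$. Under precisely these conditions cyclic block coordinate ascent converges to a stationary point, in the spirit of Tseng's analysis of block coordinate descent with separable constraints and of the classical convergence proofs for iterative proportional scaling~\cite{dempster1972covariance,SK1986,WS1977}. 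Concretely, I would argue by continuity that if some block update at a limit point $K^*$ strictly increased $\ell$, then infinitely many iterates would realize a fixed positive gain, contradicting convergence of the objective; hence no block update improves $K^*$, so $K^*$ satisfies the block KKT conditions, which are exactly~(\ref{eq:primal1})--(\ref{eq:slackness}) of Proposition~\ref{prop:mtp2GaussGraphModel}, and by convexity $K^*=\hat K$. Strict concavity makes $\hat K$ the unique maximizer, so every limit point coincides with $\hat K$; combined with compactness this forces the entire sequence to converge to $\hat K$. Running the same argument on the dual problem~(\ref{ML_mtp2_dual}) gives convergence of Algorithm~\ref{algthm:IPS_opt_Z} to $\hat\Sigma$, whence $\hat\Sigma^{-1}=\hat K\in\mathcal{M}$.
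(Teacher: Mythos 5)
Your proposal is correct, but it takes a genuinely different route from the paper's proof, which treats the two algorithms asymmetrically: for Algorithm~\ref{algthm:IPS_opt_Z} the paper simply cites \cite{tseng1992} (coordinate ascent of a smooth, strictly concave function), and for Algorithm~\ref{algthm:IPS_opt_K} it invokes the framework of iterative partial maximization (Proposition A.3 of \cite{lau96}), whose hypotheses reduce the entire proof to one concrete verification: that the closed-form block update (\ref{eq:updK}) depends continuously on the previous iterate, the only delicate point being the transition value $L_{12}=S_{uv}/(S_{uu}S_{vv}-S_{uv}^2)$, where the constrained and unconstrained solutions must agree. You instead run a single, self-contained Tseng-style argument for both algorithms: feasibility and monotonicity, compactness of the level sets, block optimality of limit points, and the observation---left implicit in the paper---that because the $\mtp$ constraints are coordinate-separable, the union of the block KKT systems is exactly the global KKT system (\ref{eq:primal1})--(\ref{eq:slackness}), so block optimality together with strict concavity identifies the unique MLE of Proposition~\ref{prop:mtp2GaussGraphModel}. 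What your approach buys is uniformity and independence from the explicit update formulas; what it costs is that your ``fixed positive gain'' contradiction compresses precisely the issue the paper isolates: the subsequence converging to $K^*$ visits different positions of the cycle, so to transfer a hypothetical improvement at $K^*$ to infinitely many actual iterates you need either continuity of the block-update maps (which follows from uniqueness of the block maximizers and the maximum theorem, or from the explicit check the paper performs) or vanishing successive differences $\|K^{t+1}-K^t\|\to 0$ (which follows from strong concavity of the block subproblems on the compact level set). Either patch is standard, so this is compression rather than a substantive gap, but the continuity fact you invoke in passing is exactly where the paper concentrates its effort.
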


\begin{proof}
The convergence to the MLE is immediate for Algorithm~\ref{algthm:IPS_opt_Z} because it is a coordinate descent method applied to a smooth and strictly concave function; see, e.g., \cite{tseng1992}. For Algorithm~\ref{algthm:IPS_opt_K} we use the fact that it is an example of iterative partial maximization. To prove convergence to the MLE we we will show that the assumptions of Proposition A.3 in \cite{lau96} hold. The log-likelihood function that we are trying to maximize is strictly concave and so the maximum is unique. Clearly, $K$ is the maximum if and only if it is a fixed point of each update. It only remains to show that updates depend continuously on the previous value. For a given $S$ fix $K$ and consider a sequence of points $K_n$ converging to $K$. Denote by $\tilde K$ and $\tilde K_n$ the corresponding one-step updates. We want to show that $\tilde K_n$ also converges to $\tilde K$. As above, let $A=\{u,v\}$, $B=V\setminus A$, $K' = K_{AA}-K_{AB}K_{BB}^{-1}K_{BA}$ and $L = K_{AB}K_{BB}^{-1}K_{BA}$. Outside of the block $\tilde K_{AA}$ this convergence is trivial; so we focus only on the three entries in $\tilde K_{AA}$. The function $L_{12}\mapsto (K'_{11},K_{22}',K_{12}')$ is continuous if and only if each coordinate is. It is clear that these functions are continuous if $L_{12}\neq \frac{S_{uv}}{S_{uu}S_{vv}-S_{uv}^2}$. It remains to show that if $L_{12}=\frac{S_{uv}}{S_{uu}S_{vv}-S_{uv}^2}$ the update in (\ref{eq:updK}) gives $K'=S_{AA}^{-1}$, which can be easily checked. 
\end{proof}

\section{Ultrametric matrices and inverse M-matrices}\label{sec:ultrametric}

In this section we exploit the link to ultrametrics in order to construct an explicit primal and dual feasible point of the maximum likelihood estimation problem. 

A nonnegative symmetric matrix $U$ is said to be \emph{ultrametric} if
\begin{itemize}
	\item[(i)] $U_{ii}\geq U_{ij}$ for all $i,j\in V$,
	\item[(ii)] $U_{ij}\geq \min\{U_{ik},U_{jk}\}$ for all  $i,j,k\in V$.
\end{itemize}
We say that a symmetric matrix is an \emph{inverse M-matrix} if its inverse is an M-matrix. The connection between ultrametrics and  M-matrices is established by the following result; see \cite[Theorem~3.5]{dellacherie2014inverse}. 
\begin{thm}\label{th:PDU}
	Let $U$ be an ultrametric matrix with strictly positive entries on the diagonal. Then $U$ is nonsingular if and only if no two rows are equal. Moreover, if $U$ is nonsingular then $U$  is an inverse M-matrix.  
\end{thm}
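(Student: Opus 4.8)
The plan is to prove both assertions simultaneously by induction on $p=|V|$, after first disposing of the trivial implication. If two rows of $U$ coincide then $U$ is singular, which already gives one direction of the stated equivalence; it therefore suffices to show that if no two rows of $U$ are equal then $U$ is nonsingular and $U^{-1}$ is an M-matrix. To make the induction close, I would in fact carry along the stronger inductive hypothesis that, in addition, $U^{-1}\mathbf{1}\geq 0$ entrywise, where $\mathbf{1}$ denotes the all-ones vector. This extra nonnegativity is exactly what will control the signs of the off-diagonal entries of the inverse at the next step.

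The structural input is the following decomposition. Let $\mu=\min_{i\neq j}U_{ij}\geq 0$ and form the graph $G_\mu$ on $V$ whose edges are the pairs $\{i,j\}$ with $U_{ij}>\mu$. Applying the ultrametric inequality (ii) along paths, any two indices in the same connected component of $G_\mu$ satisfy $U_{ij}>\mu$, while any two indices in different components satisfy $U_{ij}=\mu$ exactly (the latter because $U_{ij}\geq\mu$ always and non-adjacency forces $U_{ij}\leq\mu$). Reordering so that the components are contiguous, this yields
$$U \;=\; \mu\,\mathbf{1}\mathbf{1}^{T} \;+\; \operatorname{blockdiag}(W_1,\dots,W_r),$$
where each $W_t$ is the restriction of $U-\mu\,\mathbf{1}\mathbf{1}^{T}$ to the $t$-th component. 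Subtracting the constant $\mu$ from every entry preserves both conditions (i) and (ii), so each $W_t$ is ultrametric; its off-diagonal entries are strictly positive, and its diagonal entries are strictly positive except in the degenerate case of a singleton component $\{i\}$ with $U_{ii}=\mu$. Since such a singleton has the all-$\mu$ row, the no-two-equal-rows hypothesis permits at most one degenerate block, and the same hypothesis passes to each $W_t$.

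With this in hand I would split into two cases. If every block has positive diagonal, then by induction each $W_t^{-1}$ is an M-matrix with $W_t^{-1}\mathbf{1}\geq 0$; hence $W:=\operatorname{blockdiag}(W_1,\dots,W_r)$ is positive definite and $U=W+\mu\,\mathbf{1}\mathbf{1}^{T}\succ 0$ is nonsingular. Sherman--Morrison gives $U^{-1}=W^{-1}-c\,vv^{T}$ with $v=W^{-1}\mathbf{1}$ (the concatenation of the $W_t^{-1}\mathbf{1}\geq 0$) and $c=\mu/(1+\mu\,\mathbf{1}^{T}W^{-1}\mathbf{1})\geq 0$; the nonpositivity of the off-diagonal entries of $W^{-1}$ together with $v\geq 0$ forces the off-diagonal entries of $U^{-1}$ to be nonpositive, and a short computation yields $U^{-1}\mathbf{1}=v/(1+\mu\,\mathbf{1}^{T}W^{-1}\mathbf{1})\geq 0$, closing the induction. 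In the degenerate case I place the exceptional index last, so that with pivot $U_{pp}=\mu>0$ the Schur complement of the remaining block equals exactly $W':=\operatorname{blockdiag}(W_1,\dots,W_{r-1})$, which is invertible by induction; thus $\det U=\mu\det W'\neq 0$ and the explicit block-inverse formula, whose coupling entries are $-W'^{-1}\mathbf{1}\leq 0$, once more exhibits $U^{-1}$ as an M-matrix with $U^{-1}\mathbf{1}\geq 0$.

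I expect the main obstacle to be precisely the sign bookkeeping that the bare statement hides: controlling the signs of the off-diagonal entries of $U^{-1}$ requires knowing that $v=W^{-1}\mathbf{1}\geq 0$, which is false for general M-matrices and holds here only because of the diagonal-dominance condition (i); this is the reason for strengthening the inductive hypothesis to include $U^{-1}\mathbf{1}\geq 0$. The second delicate point is the degenerate singleton block with $U_{ii}=\mu$, which makes $\operatorname{blockdiag}(W_1,\dots,W_r)$ singular and so rules out a uniform Sherman--Morrison argument, forcing the separate Schur-complement treatment above.
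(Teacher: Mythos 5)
Your statement is one the paper does not prove at all: it is quoted from the literature (Theorem~3.5 of Dellacherie, Martinez and San~Martin \cite{dellacherie2014inverse}), so there is no in-paper argument to compare against. Judged on its own terms, your proof is correct and is essentially the classical inductive argument for ultrametric matrices: peel off the minimal off-diagonal value $\mu$, observe that $U-\mu\,\mathbf{1}\mathbf{1}^{T}$ is block diagonal with ultrametric blocks indexed by the components of $G_\mu$, and recurse, using Sherman--Morrison (respectively a Schur complement at the one possible degenerate singleton) to control the signs of $U^{-1}$. Strengthening the inductive hypothesis to include $U^{-1}\mathbf{1}\geq 0$ is indeed the right move: nonnegative row sums of the inverse fail for general inverse M-matrices, and it is exactly condition (i) of ultrametricity that delivers them, so the bookkeeping you flag as the main obstacle is handled correctly.

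Two points should be made explicit to close the argument. First, the induction is only well-founded if each block $W_t$ is strictly smaller than $U$, i.e.\ if $G_\mu$ has at least two components whenever $p\geq 2$. This is true, and follows from your own dichotomy: the minimum $\mu$ is attained at some off-diagonal pair, while any pair in a common component of $G_\mu$ satisfies $U_{ij}>\mu$, so the minimizing pair lies in distinct components and hence $r\geq 2$. Without this one sentence the argument would be circular in the case $r=1$. Second, the base case $p=1$ (where $U=(U_{11})$ with $U_{11}>0$, so $U^{-1}=(1/U_{11})$ is an M-matrix with positive row sum) should be recorded, trivial as it is. With these additions your proof is complete and self-contained, which is more than the paper itself offers for this statement.
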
 

The main reason why ultrametric matrices are relevant here is the following construction, which is similar to constructions used in in phylogenetics \cite[Section 7.2]{semple2003phylogenetics} and single linkage clustering \cite{gower1969minimum}.

Let $R$ be a symmetric $p\times p$ positive semidefinite matrix such that $R_{ii}=1$ for all $i\in V$. 
Consider the weighted graph $G^+= G^+(R)$ over $V$ with an edge between $i$ and $j$ whenever $R_{ij}$ is positive and assign to each edge the corresponding positive weight $R_{ij}$. Note that $G^+$ in general does not have to be connected. Define a $p\times p$ matrix $Z$ by setting $Z_{ii}=1$ for all $i\in V$ and
\begin{equation}\label{eq:Z}
	Z_{ij}\;\;:=\;\;\max_{P} \min_{uv\in P} R_{uv}, 
	\end{equation}
for all $i\neq j$, where the maximum is taken over all paths in $G^+$ between $i$ and $j$ and is set to zero if no such path exists. We call $Z$ the \emph{single-linkage matrix} based on $R$. 

\begin{ex}\label{ex:Z}
	Suppose that 
	$$
	R\;=\;\begin{bmatrix}
		1 &-0.5 &0.5 &0.6 \\
		 -0.5&1 &0.4 &-0.1 \\
		 0.5& 0.4& 1& 0.2\\
		 0.6&-0.1 & 0.2&1 		
	\end{bmatrix}
	$$
	Then $G^+$ and $Z$ are given by	\vspace*{\baselineskip}

\qquad\qquad\begin{minipage}{0.35\textwidth}
\begin{tikzpicture}[auto, node distance=2cm, every loop/.style={},
                    main node/.style={circle,draw,font=\normalsize}]

  \node[main node] (1) {1};
  \node[main node] (2) [below left of=1] {2};
  \node[main node] (3) [below right of=2] {3};
  \node[main node] (4) [below right of=1] {4};

  \path[every node/.style={}]
    (1) edge node [right] {0.6} (4)
        edge [right] node[left] {0.5} (3)
    (2) edge [right] node[left] {0.4} (3)
    (3) edge [right] node[right] {0.2} (4);
\end{tikzpicture}
\end{minipage} 
\begin{minipage}{0.3\textwidth}
$$
	Z\;=\;\begin{bmatrix}
		1 &0.4 &0.5 &0.6 \\
		 0.4&1 &0.4 &0.4 \\
		 0.5& 0.4& 1& 0.5\\
		 0.6&0.4 & 0.5&1 		
	\end{bmatrix}.
	$$
\end{minipage}	
\vspace{0.4cm}

	For example, to get $Z_{12}$ we consider two paths $1-3-2$ and $1-4-3-2$. The minimum of $R_{uv}$ over the first path is $0.4$ and over the second path $0.2$. This gives $Z_{12}=0.4$. 
\qed \end{ex}

Note that in the above example $Z\geq R$, $Z$ is invertible, and $
	Z^{-1}$ is an M-matrix. We now show that this is an example of a more general phenomenon.

\begin{prop}\label{prop:Z}
	Let $R$ be a symmetric $p\times p$ positive semidefinite matrix satisfying $R_{ii}=1$ for all $i\in V$. Then the single-linkage matrix $Z$ based on $R$ is an ultrametric matrix with $Z_{ij}\geq R_{ij}$ for all $i\neq j$. If, in addition,  $R_{ij}<1$ for all $i\neq j$, then $Z$ is nonsingular and therefore an inverse M-matrix.
\end{prop}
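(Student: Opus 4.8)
The plan is to verify the two defining properties of an ultrametric matrix directly from the combinatorial max–min definition \eqref{eq:Z}, then read off the off-diagonal inequality $Z\ge R$, and finally invoke Theorem~\ref{th:PDU} for nonsingularity. First I would record two trivial facts: $Z$ is symmetric, because any path between $i$ and $j$ reverses to a path between $j$ and $i$ and $R$ is symmetric; and $Z$ is nonnegative, because each $R_{uv}$ entering the minimum in \eqref{eq:Z} is a positive edge weight of $G^+$ (and $Z_{ij}=0$ when no connecting path exists). For condition (i), $Z_{ii}\ge Z_{ij}$, I would use positive semidefiniteness of $R$: the $2\times2$ principal minor gives $1-R_{ij}^2\ge 0$, hence $R_{ij}\le 1$, so every edge weight appearing in the minimum in \eqref{eq:Z} is at most $1$ and therefore $Z_{ij}\le 1 = Z_{ii}$.

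The hard part will be condition (ii), the ultrametric inequality $Z_{ij}\ge\min\{Z_{ik},Z_{jk}\}$. If either $Z_{ik}$ or $Z_{jk}$ is zero the inequality holds by nonnegativity, so I may assume both are positive, which means $i,k$ and $j,k$ are connected in $G^+$. I would take a path $P_1$ from $i$ to $k$ attaining $Z_{ik}$ and a path $P_2$ from $k$ to $j$ attaining $Z_{jk}$, and concatenate them into a walk from $i$ to $j$. Deleting cycles from this walk produces an $i$--$j$ path $P$ whose edge set is contained in that of $P_1\cup P_2$; since discarding edges can only raise a minimum, $\min_{uv\in P}R_{uv}\ge\min\{\min_{uv\in P_1}R_{uv},\,\min_{uv\in P_2}R_{uv}\}=\min\{Z_{ik},Z_{jk}\}$. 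Taking the maximum over all $i$--$j$ paths yields $Z_{ij}\ge\min\{Z_{ik},Z_{jk}\}$. This bottleneck-path argument is the heart of the proof and the only step requiring real care; everything else is routine.

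The inequality $Z_{ij}\ge R_{ij}$ for $i\ne j$ then follows immediately: if $R_{ij}>0$ the single edge $ij$ is itself a path in $G^+$ with minimum weight $R_{ij}$, so $Z_{ij}\ge R_{ij}$; and if $R_{ij}\le 0$ then $Z_{ij}\ge 0\ge R_{ij}$. Finally, under the extra hypothesis $R_{ij}<1$ for all $i\ne j$, every edge weight entering \eqref{eq:Z} is strictly below $1$, so $Z_{ij}<1=Z_{ii}$ whenever $i\ne j$. If two rows $i\ne j$ of $Z$ were equal, then comparing their $i$-th entries would force $Z_{ii}=Z_{ji}$, i.e.\ $1=Z_{ij}<1$, a contradiction. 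Hence no two rows of $Z$ coincide, and since $Z$ is an ultrametric matrix with strictly positive diagonal, Theorem~\ref{th:PDU} guarantees that $Z$ is nonsingular and therefore an inverse M-matrix, completing the proof.
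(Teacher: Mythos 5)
Your proof is correct and follows essentially the same route as the paper's: positive semidefiniteness gives $R_{ij}\leq 1$ for condition (i), concatenation of the two optimal paths establishes the ultrametric inequality (ii), the single edge $ij$ yields $Z_{ij}\geq R_{ij}$, and the distinct-rows argument together with Theorem~\ref{th:PDU} gives nonsingularity. If anything, you are slightly more careful than the paper in reducing the concatenated walk to a genuine path before taking the maximum, and in handling the case $R_{ij}\leq 0$ when proving $Z_{ij}\geq R_{ij}$.
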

\begin{proof}
We first show that $Z$ is an ultrametric matrix. $Z$ is symmetric by definition. Because $R$ is positive semidefinite, $R_{ij}\leq 1$ for all $i,j$ and from (\ref{eq:Z}) it immediately follows that $Z_{ij}\leq 1$ and therefore $Z_{ii}\geq Z_{ij}$ for all $i,j$ as needed. Finally, to prove condition~(ii) in the definition of ultrametric, let $i,j,k\in V$. Suppose first that $i,j,k$ lie in the same connected component of $G^+$. Let $P_1$, $P_2$ be the paths in $G^+$ such that $Z_{ik}=\min_{uv\in P_1} R_{uv}$  and $Z_{jk}=\min_{uv\in P_2} R_{uv}$. Let $P_{12}$ be the path between $i$ and $j$ obtained by concatenating $P_1$ and $P_2$. Then
$$
Z_{ij}\;\;=\;\;\max_{P} \min_{uv\in P} R_{uv}\;\;\geq \;\; \min_{uv\in P_{12}} R_{uv}=\min\{Z_{ik},Z_{jk}\}.
$$ 
Now suppose that $i,j,k$ are not in the same connected component of $G^+$. In that case $0\in \{Z_{ij},Z_{ik},Z_{jk}\}$. Because zero is attained at least twice, again $Z_{ij}\geq \min\{Z_{ik},Z_{jk}\}$. Hence,  $Z$ is an ultrametric matrix. The fact that $Z_{ij}\geq R_{ij}$ for all $i,j$ follows directly by noting that the edge $ij$ forms a path between $i$ and $j$. 
\medskip

Suppose now that $R_{ij}<1$ for all $i\neq j$. In that case also $Z_{ij}<1$ for all $i\neq j$. From this it immediately follows that no two rows of $Z$ can be equal. Indeed, if the $i$-th row is equal to the $j$-th row for some $i\neq j$, then necessarily $Z_{ij}=Z_{ii}=Z_{jj}$, a contradiction.  From Theorem \ref{th:PDU} it then follows that $Z$ is an inverse M-matrix, which completes the proof. 
\end{proof}
As a direct consequence we obtain the following result. 
\begin{prop}\label{prop:ZZ}
Let $S$ be a symmetric positive semidefinite matrix with strictly positive entries on the diagonal and such that $S_{ij}<\sqrt{S_{ii}S_{jj}}$ for all $i\neq j$. Then there exists an inverse M-matrix $Z$ such that $Z\geq S$ and $Z_{ii}=S_{ii}$ for all $i\in V$. 	
\end{prop}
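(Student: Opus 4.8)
The plan is to reduce to the normalized (correlation) case already treated in Proposition~\ref{prop:Z} by a diagonal rescaling, and then transport the inverse M-matrix property back through that rescaling. The entire content of the statement sits in Proposition~\ref{prop:Z}; what remains is the bookkeeping needed to pass from a general $S$ to its normalization and back.

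First I would introduce the diagonal matrix $D$ with $D_{ii}=\sqrt{S_{ii}}$, which is well-defined and invertible precisely because all diagonal entries of $S$ are strictly positive, and set $R:=D^{-1}SD^{-1}$. Since congruence by the invertible matrix $D^{-1}$ preserves positive semidefiniteness, $R$ is positive semidefinite; moreover $R_{ii}=1$, and the hypothesis $S_{ij}<\sqrt{S_{ii}S_{jj}}$ translates exactly into $R_{ij}<1$ for all $i\neq j$. Thus $R$ meets all the hypotheses of Proposition~\ref{prop:Z}.

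Next I would apply Proposition~\ref{prop:Z} to obtain the single-linkage matrix $Z_R$ based on $R$, which is an inverse M-matrix with $(Z_R)_{ii}=1$ and $(Z_R)_{ij}\geq R_{ij}$ for all $i\neq j$. I would then define $Z:=DZ_RD$ and verify the three required conclusions. The diagonal condition $Z_{ii}=S_{ii}$ is immediate from $(Z_R)_{ii}=1$; the off-diagonal inequality $Z_{ij}=\sqrt{S_{ii}S_{jj}}\,(Z_R)_{ij}\geq\sqrt{S_{ii}S_{jj}}\,R_{ij}=S_{ij}$ follows because $\sqrt{S_{ii}S_{jj}}>0$. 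Together with the diagonal equalities these give $Z\geq S$.

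It remains to check that $Z$ is an inverse M-matrix, and this is the only step requiring a genuine (if short) argument — so I would flag it as the main obstacle, modest as it is. Writing $Z^{-1}=D^{-1}Z_R^{-1}D^{-1}$, I would observe that $Z_R^{-1}\in\mathcal{M}$ and that conjugation by the positive diagonal matrix $D^{-1}$ preserves each defining property of an M-matrix: positive definiteness is preserved under congruence, while $(D^{-1}Z_R^{-1}D^{-1})_{ij}=D_{ii}^{-1}D_{jj}^{-1}(Z_R^{-1})_{ij}$ has the same sign as $(Z_R^{-1})_{ij}$, so the off-diagonal entries remain nonpositive and the diagonal entries remain positive. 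Hence $Z^{-1}\in\mathcal{M}$, i.e.\ $Z$ is an inverse M-matrix, completing the proof. I do not anticipate any real difficulty beyond this scaling-invariance check, since all the structural work has been done in Proposition~\ref{prop:Z}.
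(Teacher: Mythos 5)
Your proof is correct and follows essentially the same route as the paper: normalize $S$ to the correlation matrix $R=D^{-1}SD^{-1}$, apply Proposition~\ref{prop:Z} to get the single-linkage inverse M-matrix, and rescale back by $D$. The only difference is that you spell out why conjugation by a positive diagonal matrix preserves the inverse M-matrix property, a step the paper states without elaboration.
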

\begin{proof}
	Apply Proposition \ref{prop:Z} to the normalized version $R$ of $S$, with entries $R_{ij}:=S_{ij}/\sqrt{S_{ii}S_{jj}}$. Because $R_{ij}<1$ for all $i\neq j$, the corresponding single-linkage matrix $Z'$ is ultrametric with $Z'\geq R$ and $Z'$ is an inverse M-matrix. Define $Z$ by $Z_{ij}=\sqrt{S_{ii}S_{jj}}Z'$. Then $Z\geq S$ and $Z_{ii}=S_{ii}$ for all $i\in V$. Moreover, $Z$ is an inverse M-matrix because $Z'$ is.
\end{proof}

Proposition \ref{prop:ZZ} is very important for our considerations. A basic application is an elegant alternative proof of the main result of \cite{slawski2015estimation}, which says that the MLE under $\mtp$ exists with probability one as long as $n\geq 2$. This is in high contrast with the existence of the MLE in Gaussian graphical models without additional constraints; see~\cite{uhler2010}.
\begin{thm}[Slawski and Hein~\cite{slawski2015estimation}]\label{th:SH}
Consider a Gaussian $\mtp$ model and let $S$ be the sample covariance matrix. If $S_{ij}<\sqrt{S_{ii}S_{jj}}$ for all $i\neq j$ then the MLE $\hat\Sigma$ (and $\hat K$) exists and it is unique. In particular, if the number $n$ of observations satisfies $n\geq 2$, then the MLE exists with probability $1$. 
\end{thm}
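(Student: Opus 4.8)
The plan is to deduce the theorem from the two results already in hand: the characterization of MLE existence through membership in the cone $\mathcal{N}$ (Proposition~\ref{prop:mtp2GaussGraphModel}), and the construction of a dominating inverse M-matrix (Proposition~\ref{prop:ZZ}). Accordingly, the deterministic statement reduces to verifying $S\in\mathcal{N}$, and the probabilistic statement reduces to a genericity argument showing that the hypothesis $S_{ij}<\sqrt{S_{ii}S_{jj}}$ holds almost surely once $n\ge 2$.

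First I would unwind the definition of $\mathcal{N}$. By Proposition~\ref{prop:mtp2GaussGraphModel} the MLE exists precisely when $S\in\mathcal{N}$, and by definition $S\in\mathcal{N}$ means that there is some $Y\in\mathbb{S}^p_{\succ 0}$ with $S\le Y$ and $\operatorname{diag}(S)=\operatorname{diag}(Y)$. The crucial observation is that Proposition~\ref{prop:ZZ} manufactures exactly such a $Y$. A sample covariance matrix is positive semidefinite, and since $S$ is positive semidefinite the strict inequalities in the hypothesis force every diagonal entry $S_{ii}$ to be positive; hence the hypotheses of Proposition~\ref{prop:ZZ} are met and it yields an inverse M-matrix $Z$ with $Z\ge S$ and $Z_{ii}=S_{ii}$ for all $i\in V$. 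Being an inverse M-matrix, $Z$ has an M-matrix as its inverse, which is positive definite, so $Z\in\mathbb{S}^p_{\succ 0}$. Taking $Y=Z$ therefore certifies $S\in\mathcal{N}$, whence $\hat\Sigma$ (and $\hat K=\hat\Sigma^{-1}$) exists; uniqueness is immediate from the strict concavity of $\ell(\cdot\,;S)$ already recorded in Proposition~\ref{prop:mtp2GaussGraphModel}.

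It remains to show that $S_{ij}<\sqrt{S_{ii}S_{jj}}$ for all $i\neq j$ with probability one when $n\ge 2$. Writing $u=(x_i^{(1)},\dots,x_i^{(n)})$ and $w=(x_j^{(1)},\dots,x_j^{(n)})$ for the two coordinate data vectors, the Cauchy--Schwarz inequality gives $S_{ij}\le\sqrt{S_{ii}S_{jj}}$, with equality only if $u$ and $w$ are linearly dependent. For $n\ge 2$ such dependence forces the $2\times 2$ minor $x_i^{(1)}x_j^{(2)}-x_i^{(2)}x_j^{(1)}$ to vanish; because $\Sigma$ is positive definite, the pair $(X_i,X_j)$ has a non-degenerate bivariate Gaussian density, so this polynomial equation defines a Lebesgue-null event and hence has probability zero. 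A union bound over the finitely many pairs $i\neq j$ shows that the hypothesis holds almost surely, completing the proof.

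The heavy lifting has already been carried out in Propositions~\ref{prop:ZZ} and~\ref{prop:mtp2GaussGraphModel}, so the deterministic half is essentially a matching of definitions; the one genuine conceptual point is recognizing that the inverse M-matrix coming from the single-linkage construction is precisely the positive definite witness demanded by the cone $\mathcal{N}$. The only step needing mild care is the probabilistic half, namely confirming that the degenerate configurations carrying a unit sample correlation form a null set — a routine consequence of the absolute continuity of the underlying Gaussian law.
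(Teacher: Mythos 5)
Your proof is correct and follows essentially the same route as the paper: apply Proposition~\ref{prop:ZZ} to produce an inverse M-matrix $Z$ with $Z\geq S$ and matching diagonal, and conclude existence and uniqueness via Proposition~\ref{prop:mtp2GaussGraphModel} (your phrasing in terms of membership in $\mathcal{N}$ and the paper's phrasing in terms of dual feasibility are the same condition). The only difference is that you explicitly verify the probability-one claim for $n\geq 2$ via the Cauchy--Schwarz equality case and a null-set argument, a routine step the paper leaves implicit.
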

\begin{proof}
	The sample covariance matrix is a positive semidefinite matrix with strictly positive diagonal entries. We can apply  Proposition  \ref{prop:ZZ} to obtain an inverse M-matrix $Z$ that satisfies $Z\geq S$ and $Z_{ii}=S_{ii}$ for all $i$. It follows that $Z$ satisfies primal feasibility (\ref{eq:primal1}) and dual feasibility (\ref{eq:dual1}) and (\ref{eq:dual2}). By Proposition \ref{prop:mtp2GaussGraphModel} the MLE exists and it is unique by convexity of the problem.
\end{proof}
\add{\begin{rem} Combining this result with Corollary \ref{cor:MLgraphmle} we note that the cliques of  $\hat G$ can at most be of size $n$. In this way the sparsity of $\hat G$ automatically adjusts to the sample size.\end{rem}}

The matrix $Z$ can be computed efficiently\footnote{In our computations we use the single-linkage clustering method in \textsc{R}.}. To see that, note first that in Example~\ref{ex:Z} we could first consider the chain $T$ of the form $2-3-1-4$, which is the maximal weight spanning forest of $G^+$ and then construct $Z$ by
$$
Z_{ij}\;=\;\min_{uv=\overline{ij}} R_{uv},
$$ 
where $\overline{ij}$ denotes the unique path between $i$ and $j$ in $T$. For example $Z_{12}=0.4$, which corresponds to the minimal weight on the path $2-3-1$. This is a general phenomenon.

Suppose again that $R$ is a symmetric $p\times p$ positive semidefinite matrix satisfying $R_{ii}=1$ for all $i\in V$. Let $\mwsf(R)$ be the set of all minimal maximum weight spanning forests of $R$. Note that all edge weights of any such forest $F\in \mwsf(R)$ must be positive; hence we must have $F\subseteq G^+$. Also, if $R$ is an empirical correlation matrix, then $\mwsf(R)$ will be a singleton with probability one and in such cases we shall mostly speak of \emph{the} MWSF. 

\begin{prop}\label{prop:ZZZ}
	The single-linkage matrix $Z$ as defined in (\ref{eq:Z}) is block diagonal with blocks corresponding to the connected components of any $F\in \mwsf(R)$. Within each block all elements are strictly positive and given by
	$$
	Z_{ij}\;\;=\;\;\min_{uv\in \overline{ij}} R_{uv}, 
	$$
	where $\overline{ij}$ is the unique path between $i$ and $j$ in a maximal weight spanning tree of $R$. In particular, $Z_{ij}=R_{ij}$ for all edges of $\mwsf(R)$. 
\end{prop}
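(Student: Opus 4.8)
The plan is to prove the three assertions in turn by reducing everything to the classical equivalence between maximum-weight spanning trees and maximum-bottleneck (widest) paths.

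First, for the block-diagonal structure, I would observe that any spanning forest $F$ of $G^+$ has exactly the same connected components as $G^+$, since each tree of $F$ spans one connected component of $G^+$. Consequently, if $i$ and $j$ lie in different components of $G^+$ then no path in $G^+$ joins them and $Z_{ij}=0$ by definition; this shows $Z$ is block-diagonal with blocks indexed by the components of any $F\in\mwsf(R)$. Within a single block, $i$ and $j$ are joined by a tree path all of whose edges carry positive weight, so $Z_{ij}\geq\min_{uv\in\overline{ij}}R_{uv}>0$, which also gives strict positivity within blocks.

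The heart of the argument, and the step I expect to be the main obstacle, is the identity $Z_{ij}=\min_{uv\in\overline{ij}}R_{uv}$ for $i,j$ in a common block. Here I would first note that the restriction of $F$ to that block is a maximum-weight spanning tree $T$ of the corresponding component of $G^+$, since a maximum-weight spanning forest is exactly a union of maximum-weight spanning trees of the components. The inequality $\geq$ is immediate because $\overline{ij}$ is itself a path in $G^+$. For $\leq$ I would argue by exchange: let $c=\min_{uv\in\overline{ij}}R_{uv}$, attained at a tree edge $e=(a,b)$, and suppose toward a contradiction that some path $P$ from $i$ to $j$ in $G^+$ has every edge of weight strictly greater than $c$. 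Deleting $e$ splits $T$ into two subtrees with $i$ and $j$ on opposite sides, so $P$ must cross between these subtrees along some edge $f$ with $R_f>c=R_e$. Then $(T\setminus\{e\})\cup\{f\}$ is again a spanning tree of the component, but of strictly larger weight, contradicting the maximality of $T$. Hence no such $P$ exists and $Z_{ij}=c$.

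Finally, the ``in particular'' claim follows at once: if $ij$ is an edge of $F$, then the unique tree path $\overline{ij}$ is the single edge $ij$, so the displayed formula yields $Z_{ij}=R_{ij}$. I would close by remarking that, although the formula is phrased in terms of a chosen maximum-weight spanning tree $T$, the quantity $Z_{ij}$ is defined intrinsically in (\ref{eq:Z}), independently of $T$; this simultaneously shows that the bottleneck value is the same for every maximum-weight spanning tree and that the block structure does not depend on the choice of $F\in\mwsf(R)$.
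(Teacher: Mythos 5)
Your proposal is correct and follows essentially the same route as the paper's proof: block-diagonality read off from the definition of $Z$ together with the fact that the trees of $F\in\mwsf(R)$ span the components of $G^+$, and then the classical cut/exchange argument (delete the bottleneck tree edge, find a crossing edge on the hypothetical better path, swap it in to obtain a heavier spanning tree, contradiction). Your closing remark on independence of the choice of $T$ and the ``in particular'' step are harmless additions; the core argument is the one the paper adapts from Semple and Steel.
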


\begin{proof}
First suppose that $i,j\in V$ lie in two different components of $F\in \mwsf(R)$. This means that there is no path between $i$ and $j$ in $G^+$ and so, by definition, $Z_{ij}=0$. Because $Z_{ij}>0$ if $i,j$ lie in the same component of $F$, $Z$ is block diagonal with blocks corresponding to connected components of $\mwsf(R)$. 

The rest of the proof is an adaptation of a proof of a related result \cite[Proposition 7.2.10]{semple2003phylogenetics}. Suppose that  $i,j\in V$ lie in the same connected component of $F$ and denote the tree in $F$ corresponding to this component by $T$. By definition $Z_{ij}\geq 	\min_{uv\in \overline{ij}} R_{uv}$. Suppose that $Z_{ij}>	\min_{uv\in \overline{ij}} R_{uv}$. We obtain the contradiction by showing that under this assumption $T$ cannot be a maximum weight spanning tree of the corresponding connected component of $G^+$. Let $kl$ be a minimum weight edge in the unique path between $i$ and $j$ in $T$. Since $Z_{ij}>R_{kl}$, there exists a path $P$ in $G^+$ between $i$ and $j$ such that $R_{uv}>R_{kl}$ for every $uv$ in $P$. Now deleting $kl$ from $T$ partitions the corresponding connected component of $G^+$ into two sets with $i$ being in one and $j$ being in the other block. Since $P$ connects $i$ and $j$ in $G^+$, there must be an edge $k'l'$ (distinct from $kl$) in $P$ whose end vertices lie in different blocks of this partition. Let $T'$ be the spanning tree obtained from $T$ by deleting $kl$ and adding $k'l'$. Since $R_{k'l'}>R_{kl}$, the total weight of $T'$ is greater than $T$, which is a  contradiction. We conclude that $Z_{ij}=\min_{uv\in \overline{ij}} R_{uv}$ for all $i,j$ in the same connected component of~$G^+$. 
\end{proof}

To conclude this section, we note that the starting point $\Sigma^0$ of Algorithm~\ref{algthm:IPS_opt_Z} is arbitrary as long as $\Sigma^0\succ 0$ and $\Sigma^0\geq S$. The single-linkage matrix \add{$Z$} constitutes another generic choice when $S=R$ is used as input. This is a particularly desirable starting point, since it can also be used when $p>n$, in which case $R\notin\mathbb{S}_{\succ 0}$ and hence not feasible.

\section{The maximum likelihood graph}
\label{sec:MLgraph}

Fitting a Gaussian model with $\mtp$ constraints tends to induce sparsity in the maximum likelihood estimate $\hat K$. In this section, we analyze the sparsity pattern that arises in this way. We assume again without loss of generality that $S=R$ is a sample correlation matrix so that $R_{ii}=1$ for all $i$ and $R_{ij}<1$ for all $i\neq j$. Consider again the weighted graph $G^+=G^+(R)$. 
We begin this section with a basic lemma that reduces our analysis to the case where the graph $G^+$ is connected.  
\begin{lem}
The MLE $\hat\Sigma$ under $\mtp$ is a block diagonal matrix with strictly positive entries in each block. The blocks correspond precisely to trees in $\mwsf(R)$.
\end{lem}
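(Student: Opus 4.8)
The plan is to show that the block structure of $\hat\Sigma$ is inherited directly from the single-linkage matrix $Z$ via the characterization of the MLE in Proposition~\ref{prop:mtp2GaussGraphModel}. The key observation is that two variables $i,j$ lie in different connected components of $G^+$ precisely when there is no positive path between them, and by Proposition~\ref{prop:ZZZ} these components coincide with the trees of any $F\in\mwsf(R)$. So the first step is to reduce to the claim that $\hat\Sigma_{ij}=0$ whenever $i$ and $j$ are in different connected components of $G^+$, and $\hat\Sigma_{ij}>0$ whenever they are in the same one.

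For the vanishing of cross-block entries, I would argue as follows. Partition $V$ into the connected components $V_1,\dots,V_r$ of $G^+$. Consider the matrix $\tilde\Sigma$ obtained by taking $\hat\Sigma$ restricted to each diagonal block $V_aV_a$ and setting all cross-block entries to zero. I claim $\tilde\Sigma$ satisfies the system \eqref{eq:primal1}--\eqref{eq:slackness}, and hence by uniqueness equals $\hat\Sigma$. Indeed, a block-diagonal positive definite matrix has a block-diagonal inverse, so $(\tilde\Sigma^{-1})_{ij}=0\leq 0$ for cross-block $i\neq j$, giving \eqref{eq:primal1} there; the within-block primal and slackness conditions are unaffected since each block is just the MLE of the corresponding subproblem. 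For the dual inequality \eqref{eq:dual2} on cross-block pairs, note that if $i,j$ lie in different components of $G^+$ then $R_{ij}\leq 0$ (otherwise $ij$ would be a positive edge joining the components), so $\tilde\Sigma_{ij}=0\geq R_{ij}=S_{ij}$, and complementary slackness \eqref{eq:slackness} holds trivially since $(\tilde\Sigma^{-1})_{ij}=0$. Thus $\tilde\Sigma$ is a valid candidate and equals $\hat\Sigma$, so $\hat\Sigma$ is block diagonal with blocks indexed by the components of $G^+$, i.e.\ the trees of $\mwsf(R)$.

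It then remains to show strict positivity of every entry within a block. The clean way is to reduce to a single connected component and invoke the inverse M-matrix structure: within a block, $\hat\Sigma\succ 0$ with $(\hat\Sigma^{-1})_{ij}\leq 0$ off-diagonal, so the block of $\hat\Sigma$ is itself an inverse M-matrix. A standard fact about irreducible inverse M-matrices (equivalently, inverses of irreducible M-matrices) is that all their entries are strictly positive; irreducibility is exactly the statement that the block corresponds to a single connected component. Alternatively, one can use the dual feasibility $\hat\Sigma_{ij}\geq S_{ij}$ together with the single-linkage lower bound: since $Z\geq S$ with $Z$ supported exactly on the components and strictly positive within each block by Proposition~\ref{prop:ZZZ}, and since $\hat\Sigma$ dominates $S$ on positive edges while matching the diagonal, one propagates positivity along paths in the spanning tree $T$.

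The main obstacle I anticipate is the strict positivity within a block rather than the vanishing of cross-block entries. The block-diagonal structure follows cleanly from uniqueness of the MLE plus the sign of off-block correlations, but ruling out that some within-block entry of $\hat\Sigma$ collapses to zero requires genuinely using irreducibility of the associated M-matrix. I would therefore present the cross-block argument as the backbone and isolate the positivity claim as an appeal to the Perron--Frobenius-type property of irreducible inverse M-matrices, checking carefully that the block of $\hat K=\hat\Sigma^{-1}$ really is irreducible (equivalently that its graph $G(\hat K)$ connects the whole block), which in turn follows because a disconnected $G(\hat K)$ within a component would contradict the dual constraints forcing $\hat\Sigma_{ij}\geq R_{ij}$ on the positive edges of that component.
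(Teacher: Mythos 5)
Your overall plan---exhibit a block-diagonal matrix satisfying the KKT system \eqref{eq:primal1}--\eqref{eq:slackness} and invoke uniqueness of the MLE---is sound and genuinely different from the paper's proof, but your execution of the cross-block step is circular. You define $\tilde\Sigma$ by zeroing the cross-block entries of $\hat\Sigma$ and assert that the within-block primal and slackness conditions are ``unaffected'' because each block of $\hat\Sigma$ is the MLE of the corresponding subproblem. That assertion is not available at this point: it is a consequence of the lemma, not a premise. Concretely, with $B=V\setminus V_a$, the inverse of your candidate satisfies
$(\tilde\Sigma^{-1})_{V_aV_a}=(\hat\Sigma_{V_aV_a})^{-1}=\hat K_{V_aV_a}-\hat K_{V_aB}(\hat K_{BB})^{-1}\hat K_{BV_a}$,
which differs from $\hat K_{V_aV_a}$ unless $\hat K_{V_aB}=0$---precisely the block structure being proved. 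The correction term $\hat K_{V_aB}(\hat K_{BB})^{-1}\hat K_{BV_a}$ is entrywise nonnegative (it is a product of nonpositive, nonnegative, and nonpositive matrices), so within-block primal feasibility for $\tilde\Sigma$ does survive; but slackness does not: if $\hat\Sigma_{ij}>S_{ij}$ for $i,j\in V_a$, you know $\hat K_{ij}=0$, yet $((\hat\Sigma_{V_aV_a})^{-1})_{ij}$ could then be strictly negative, violating \eqref{eq:slackness} for $\tilde\Sigma$.

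The gap is fixable without abandoning your strategy: build $\tilde\Sigma$ block-diagonally with $a$-th block equal to the MLE for the data $S_{V_aV_a}$, which exists by Theorem~\ref{th:SH} since $S_{V_aV_a}$ inherits the hypotheses. Within blocks the four conditions hold by the subproblem KKT conditions; across blocks, $(\tilde\Sigma^{-1})_{ij}=0$ gives \eqref{eq:primal1} and \eqref{eq:slackness}, and $\tilde\Sigma_{ij}=0\geq S_{ij}$ gives \eqref{eq:dual2} because cross-component correlations are nonpositive, exactly as you note. Uniqueness then yields $\hat\Sigma=\tilde\Sigma$. Your strict-positivity step is correct as written: irreducibility of each block of $\hat K$ follows from $\hat\Sigma_{uv}\geq R_{uv}>0$ on edges of $G^+$, and inverses of irreducible nonsingular M-matrices are entrywise positive; alternatively, the path-product inequality of Theorem~\ref{thm:PP} together with $\hat\Sigma\geq R$ propagates positivity along tree paths. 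For contrast, the paper's proof is much shorter: it quotes the structure theorem that every inverse M-matrix is block diagonal with strictly positive blocks (Theorem 4.8 of \cite{johnson2011inverse}), so block-diagonality and positivity come for free, and then matches blocks to trees of $\mwsf(R)$ via two inclusions: $\hat\Sigma\geq R>0$ on tree edges in one direction, and complementary slackness \eqref{eq:slackness} combined with $R_{uv}\leq 0$ across trees in the other.
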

\begin{proof}
Firstly, since $\hat\Sigma$ is an inverse M-matrix, it is block diagonal with strictly positive entries in each block; see, e.g.,  Theorem 4.8 in \cite{johnson2011inverse}. 
We will show that each block of $\hat\Sigma$ corresponds precisely to a tree in $\mwsf(R)$. 

Denote the vertex sets for a forest  $F\in \mwsf(R)$ as $T_1,\ldots, T_k$ and the blocks of $\hat\Sigma$ as $B_1,\ldots, B_l$. Firstly, for any $T_i$ there must be a $j$ so that $T_i\subseteq B_j$; this is true since all entries in $R$ along the edges of $T_i$ are positive and thus $\hat\Sigma\geq R>0$.  Thus the block partitioning corresponding to the trees is necessarily finer than that of $\hat\Sigma$.

On the other hand, suppose that two different trees $T_i$ and $T_j$ in $F$ are in the same block of $\hat\Sigma$ so that $\hat\Sigma_{uv}>0$ for all $u\in T_i$ and $v\in T_j$. Then, as we must have $R_{uv}\leq 0$, also necessarily $\hat\Sigma_{uv}-R_{uv}>0$. Complementary slackness~(\ref{eq:slackness}) now implies that $\hat K_{uv}=0$ for all $u\in T_i$ and $v\in T_j$, and hence $\hat K$ is block-diagonal with blocks corresponding to the trees in $F$. Since $\hat\Sigma=\hat K^{-1}$, we also get $\hat\Sigma_{uv}=0$ which contradicts that $u$ and $v$ are in the same block of $\hat\Sigma$.
\end{proof}
This result shows that, without loss of generality, we can always assume that $G^+$ is connected and then $\mwsf(R)=\mwst(R)$ consists of trees only. If there are more than one connected component, we simply compute the MLE for each component separately and combine them together in block diagonal form. Hence, from now on we always assume that all forests in $\mwsf(R)$ are just trees.

\del{}

\subsection{An upper bound on the ML graph}

{\del}
In the following, we provide a simple procedure for identifying an upper bound for $\hat G$. 
This procedure relies on the estimation of the standard Gaussian graphical model over the tree $\mwsf(R)$. The MLE under this assumption, denoted by $\tilde\Sigma$, can be computed efficiently and it satisfies
$$
\tilde\Sigma_{ij}\;\;=\;\;\prod_{uv\in \overline{ij}}R_{uv}.
$$
where $\overline{ij}$ denotes the unique path between $i$ and $j$ in $\mwsf(R)$; see, for example, \cite[Section 8.2]{LTbook}. 

To provide an upper bound on $\hat G$, we 
will make use of a connection to so-called path product matrices: A non-negative matrix $R$ is a \emph{path product matrix} if  for any $i,j\in V$, $k\in \N$, and $1\leq i_1,\ldots,i_k\leq p$ $$R_{ij}\;\;\geq \;\;R_{i i_1}R_{i_1 i_2}\cdots R_{i_k j}.$$
If in addition the inequality is strict for $i= j$, we say that $R$ is a \emph{strict path product matrix}. We note the following:
\begin{thm}[Theorem 3.1, \cite{johnson1999path}]\label{thm:PP}
	Every inverse M-matrix is a strict path product matrix.
\end{thm}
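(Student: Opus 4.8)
The plan is to reduce the full path-product inequality to a single \emph{triangle} inequality involving one intermediate index, in the scale-invariant form $R_{im}R_{mj}\le R_{ij}R_{mm}$, and then to derive that inequality from the structure of $R$ on the $3\times 3$ principal submatrix indexed by $\{i,m,j\}$. When $R$ has unit diagonal---the relevant case here, since $R$ is a correlation matrix---this is exactly the stated condition $R_{im}R_{mj}\le R_{ij}$; for a general inverse M-matrix one first normalizes, noting that $D^{-1}RD^{-1}$ with $D=\textrm{diag}(R)^{1/2}$ is again an inverse M-matrix because its inverse $DKD$ is an M-matrix. So I assume $R_{ii}=1$ for all $i$.

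Two structural facts carry most of the weight. First, an inverse M-matrix is entrywise nonnegative: writing $K=R^{-1}$ as $K=sI-P$ with $P\ge 0$ and $s>\rho(P)$ gives $R=s^{-1}\sum_{k\ge 0}(P/s)^k\ge 0$. Second, the class of inverse M-matrices is closed under passing to principal submatrices, since $(R[\alpha])^{-1}$ equals the Schur complement $K/K[\alpha^c]$, and Schur complements of M-matrices are again M-matrices \cite{johnson2011inverse}. Together these let me localize the problem: for any triple $\{i,m,j\}$ the $3\times 3$ matrix $B:=(R[\{i,m,j\}])^{-1}$ is an M-matrix, with positive diagonal, nonpositive off-diagonal, and positive definite.

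To prove the triangle inequality I would expand $R_{ij}R_{mm}-R_{im}R_{mj}$ through the adjugate of $B$. A direct cofactor computation---equivalently Jacobi's identity for the $2\times 2$ minors of $B^{-1}$---collapses this quantity to $-B_{ij}/\det B$, which is nonnegative because $B_{ij}\le 0$ and $\det B>0$. With $R_{mm}=1$ this yields $R_{im}R_{mj}\le R_{ij}$. The general inequality $R_{ii_1}R_{i_1i_2}\cdots R_{i_kj}\le R_{ij}$ then follows by induction on $k$: collapse the first two factors via $R_{ii_1}R_{i_1i_2}\le R_{ii_2}$ and multiply through by the remaining factors, a step that preserves the inequality precisely because $R\ge 0$.

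For strictness when $i=j$ I would use the $2\times 2$ block on $\{i,m\}$, where $R_{ii}R_{mm}-R_{im}R_{mi}=1/\det B[\{i,m\}]>0$ strictly by positive definiteness, so that $R_{im}R_{mi}<1=R_{ii}$. Reducing a closed walk $i\to i_1\to\cdots\to i_k\to i$ down to this two-index step with the (non-strict) triangle inequalities and using nonnegativity to preserve inequalities under multiplication gives $R_{ii_1}\cdots R_{i_ki}\le R_{ii_k}^2<1=R_{ii}$. I expect the main obstacle to be the closure-under-principal-submatrices step---concretely, checking that the relevant Schur complement of $K$ is an M-matrix---since once the problem is localized to the $3\times 3$ and $2\times 2$ blocks, the remaining inequalities reduce to the short explicit computation above.
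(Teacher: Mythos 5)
The paper does not actually prove this statement: it is imported with a citation as Theorem~3.1 of Johnson and Smith \cite{johnson1999path}, so your argument is a self-contained replacement for an external reference rather than an alternative to a proof given in the text. Read that way, it is correct, and in fact close in spirit to the original source, which also localizes to small principal submatrices. All three of your ingredients hold. Entrywise nonnegativity of inverse M-matrices follows from the Neumann series once one notes that $\rho(P)$ is an eigenvalue of the nonnegative matrix $P$ (Perron--Frobenius), hence $\rho(P)=s-\lambda_{\min}(K)<s$. The inheritance step you flag as the main obstacle is routine rather than delicate: in $K/K[\alpha^c]=K_{\alpha\alpha}-K_{\alpha\alpha^c}(K_{\alpha^c\alpha^c})^{-1}K_{\alpha^c\alpha}$ the subtracted term is entrywise nonnegative, being a product of two nonpositive blocks with the nonnegative matrix $(K_{\alpha^c\alpha^c})^{-1}$, so the off-diagonal entries of the Schur complement stay nonpositive while positive definiteness is inherited. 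And the cofactor/Jacobi computation is right: for $B=(R[\{i,m,j\}])^{-1}$ one indeed gets
\[
R_{ij}R_{mm}-R_{im}R_{mj}\;=\;-\frac{B_{ij}}{\det B}\;\ge\;0,
\]
since $B_{ij}\le 0$ and $\det B>0$; chaining via $R\ge 0$ and using the $2\times 2$ principal minor for strictness then completes the argument. Two small points to tighten. First, your triangle step tacitly assumes $i,m,j$ distinct; the degenerate cases are trivial (factors $R_{uu}=1$ drop out), but strictness genuinely requires the intermediate indices to avoid $i$: under the paper's literal definition the walk $i\to i\to i$ has product $1$, so the definition must be read as in \cite{johnson1999path}, with intermediate indices distinct from the endpoints. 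Second, your normalization remark is the right fix for the fact that the paper's statement of ``path product'' without the $R_{mm}$ factor is not scale invariant; since the paper only ever applies the theorem to unit-diagonal matrices, nothing is lost.
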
 

We are now able to provide an upper bound for the ML graph $\hat G$.

\begin{prop}\label{prop:PP}
The pair $ij$ forms an edge in the ML graph only if
\del
\add{
$$
R_{ij}\;\;\geq\;\; \prod_{uv\in P}R_{uv}
$$ for any path in $P$ in $G^+$.	}
In particular, $R_{ij}\leq 0$ implies that $ij$ is not an edge of the ML graph.
\end{prop}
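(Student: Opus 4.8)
The plan is to combine three facts already established: the complementary slackness condition (\ref{eq:slackness}) and dual feasibility (\ref{eq:dual2}) from Proposition~\ref{prop:mtp2GaussGraphModel}, together with the path product property of inverse M-matrices (Theorem~\ref{thm:PP}). Throughout I use the standing assumptions $S=R$ and $G^+$ connected, so that $\hat K\in\mathcal{M}$ and $\hat\Sigma=\hat K^{-1}$ is a single irreducible inverse M-matrix with strictly positive entries.

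The first and conceptually crucial step is to turn the combinatorial hypothesis ``$ij$ is an edge'' into an exact identity for an entry of $\hat\Sigma$. Suppose $ij\in E(\hat G)$, i.e.\ $\hat K_{ij}=(\hat\Sigma^{-1})_{ij}\neq 0$. Primal feasibility (\ref{eq:primal1}) forces $\hat K_{ij}<0$, and then complementary slackness (\ref{eq:slackness}), namely $(\hat\Sigma_{ij}-R_{ij})\,\hat K_{ij}=0$, forces $\hat\Sigma_{ij}=R_{ij}$. This is what lets me control $R_{ij}$ rather than merely bound it from below.

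Next I would fix an arbitrary path $P\colon i=u_0-u_1-\cdots-u_m=j$ in $G^+$ and apply Theorem~\ref{thm:PP}: since $\hat\Sigma$ is an inverse M-matrix it is a strict path product matrix, so $\hat\Sigma_{ij}\geq\prod_{uv\in P}\hat\Sigma_{uv}$. Because every edge of $P$ lies in $G^+$, each weight satisfies $R_{uv}>0$, and dual feasibility (\ref{eq:dual2}) gives $\hat\Sigma_{uv}\geq R_{uv}>0$ along the whole path. As all factors are positive, multiplying preserves the inequalities, whence $\prod_{uv\in P}\hat\Sigma_{uv}\geq\prod_{uv\in P}R_{uv}$. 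Chaining everything yields $R_{ij}=\hat\Sigma_{ij}\geq\prod_{uv\in P}\hat\Sigma_{uv}\geq\prod_{uv\in P}R_{uv}$, which is precisely the asserted bound. For the final statement, the first step already gives $\hat\Sigma_{ij}=R_{ij}$ whenever $ij$ is an edge; but irreducibility makes $\hat\Sigma_{ij}>0$, so $R_{ij}>0$, and contraposition gives that $R_{ij}\leq 0$ rules out the edge $ij$.

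The only step needing genuine care is the product monotonicity: I must ensure that all entries encountered along $P$ are strictly positive, so that the entrywise inequalities $\hat\Sigma_{uv}\geq R_{uv}$ can be multiplied without any reversal of sign. This is exactly where the hypothesis that $P$ is a path in $G^+$ (and not an arbitrary path in the complete graph) is essential, and it is the point at which the argument would fail if one allowed edges with $R_{uv}\leq 0$.
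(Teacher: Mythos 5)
Your proposal is correct and follows essentially the same route as the paper: apply Theorem~\ref{thm:PP} to the inverse M-matrix $\hat\Sigma$, use dual feasibility $\hat\Sigma_{uv}\geq R_{uv}$ along the path, and use complementary slackness to convert the edge hypothesis into the identity $\hat\Sigma_{ij}=R_{ij}$. The only differences are that you spell out details the paper leaves implicit (the slackness step and the positivity needed to multiply the inequalities) and derive the final claim via strict positivity of the irreducible $\hat\Sigma$ rather than directly from the positivity of the path product, which are equivalent given the paper's block-diagonal reduction.
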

\begin{proof}
Because $\hat\Sigma$ is an inverse M-matrix it is necessarily a path product matrix by Theorem \ref{thm:PP}. In particular, for all $i,j$
$$
\hat\Sigma_{ij}\;\;\geq\;\; \prod_{uv\in P}\hat\Sigma_{uv}.
$$	for any path $P$ in $G^+$. 
By Proposition~\ref{prop:mtp2GaussGraphModel}, we also have $\hat\Sigma_{uv}\geq R_{uv}$. Thus, if $ij\in \hat G$ 
\[R_{ij}=\hat\Sigma_{ij}\geq \prod_{uv\in P}\hat\Sigma_{uv}\geq \prod_{uv\in P}R_{uv}\]
as desired.
\end{proof}
Motivated by this result we define the \emph{excess correlation graph} (EC graph) $\ec(R)$ of $R$ by the condition
\[i\sim j \;\;\iff\;\; R_{ij}\geq \prod_{uv\in \overline{ij}}R_{uv}.\]
Thus the EC graph has edges $ij$ whenever the observed correlation between $i$ and $j$ is in excess of or equal to what is explained by the spanning forest; by construction, \del
\[G(\hat K)\;\;\subseteq\;\; \ec(R).\]
The inclusion \del
\add{is}
typically strict. For example, if $R$ is an inverse M-matrix, then $\ec(R)$ is the complete graph, whereas $G(\hat K)$ can be arbitrary; this follows from~\cite[Proposition~6.3]{MTP2Markov2015}.

\subsection{Some exact results on the ML graph}
\del

Next, we analyze generalization of trees known as block graphs, where edges are replaced by cliques, and give a condition under which the maximum likelihood estimator admits a simple closed-form solution. More formally, $G$ is a \emph{block graph} if it is a chordal graph with only singleton separators. It is natural to study block graphs, since viewing the MLE $\hat{\Sigma}$ as a completion of $S$, block graphs play the same role for inverse M-matrices as chordal graphs play for Gaussian graphical models, see for example \cite{johnson:smith:96} and Corollary~7.3 of \cite{MTP2Markov2015}.

We first define a matrix $W=W(R)$~by
\begin{equation}\label{eq:W}
	W_{ij}\;\;:=\;\;\max_{P} \prod_{uv\in P} R_{uv}, 	
\end{equation}
where, like in (\ref{eq:Z}), the maximum is taken over all paths in $G^+$ between $i$ and $j$ and is set to zero if no such path exists. Transforming $D_{ij}=-\log R_{ij}$ gives a distance based interpretation, in which $W_{ij}$ is related to the shortest distance between $i$ and $j$ in $G^+$ with edge lengths given by $D_{uv}$. We also have the following simple lemma.
\begin{lem}\label{lem:PP}
	The matrix $W$ is a path product matrix. Further, $R$ is a path product matrix if and only if $ W(R)=R$.
\end{lem}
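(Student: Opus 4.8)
The plan is to treat the two assertions separately, with the first doing most of the work. For the claim that $W$ is a path product matrix, I would first note that $W\geq 0$ by construction and then reduce the full path product inequality to its single-intermediate-node form, namely $W_{ij}\geq W_{ih}W_{hj}$ for all $i,h,j\in V$. Granting this, the general inequality $W_{ij}\geq W_{ii_1}W_{i_1i_2}\cdots W_{i_kj}$ follows by a straightforward induction on $k$. To establish the two-node inequality I would argue by concatenation: if either $W_{ih}$ or $W_{hj}$ vanishes the inequality is immediate since $W_{ij}\geq 0$, and otherwise I would take paths $P_1$ from $i$ to $h$ and $P_2$ from $h$ to $j$ attaining the maxima in (\ref{eq:W}) and concatenate them into a walk from $i$ to $j$ whose edge-weight product equals $W_{ih}W_{hj}$; the maximum defining $W_{ij}$ then dominates this product.

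The one point requiring care is that the concatenation of two paths need not be a simple path, whereas (\ref{eq:W}) maximizes over paths. This is exactly where the standing assumption of this section is used: every edge of $G^+$ carries weight $R_{uv}\in(0,1)$, so deleting cycles from a walk to extract a simple path only removes factors lying in $(0,1)$ and hence cannot decrease the product. Consequently the maximum in (\ref{eq:W}) is unchanged whether taken over walks or over simple paths, so I may freely concatenate and re-simplify. I expect this walk-versus-path issue to be the only genuine obstacle; once it is dispatched the argument is purely combinatorial.

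For the equivalence, the direction ``$W(R)=R$ implies that $R$ is a path product matrix'' is immediate from the first part. For the converse I would assume $R$ is a path product matrix---so in particular $R\geq 0$---and show $W=R$ entrywise. The diagonal entries agree since $W_{ii}=1=R_{ii}$. For $i\neq j$ the inequality $W_{ij}\geq R_{ij}$ holds because either $R_{ij}>0$, in which case $ij$ is an edge of $G^+$ and the single-edge path already gives $W_{ij}\geq R_{ij}$, or $R_{ij}=0$, in which case $W_{ij}\geq 0=R_{ij}$. The reverse inequality is precisely the path product property of $R$: for any path $P$ given by $i=v_0,v_1,\dots,v_m=j$ in $G^+$, applying the defining inequality with intermediate nodes $v_1,\dots,v_{m-1}$ yields $R_{ij}\geq R_{v_0v_1}\cdots R_{v_{m-1}v_m}=\prod_{uv\in P}R_{uv}$, and maximizing over $P$ gives $R_{ij}\geq W_{ij}$. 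Hence $W_{ij}=R_{ij}$, and the remaining case in which no path joins $i$ and $j$ is consistent, since then $R_{ij}=0=W_{ij}$.
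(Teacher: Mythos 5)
Your proof is correct. The paper itself dismisses this lemma with ``This is immediate from the definition of $W$,'' and your argument is exactly the careful unpacking of that immediacy: the concatenation-plus-cycle-removal step (valid because every edge weight of $G^+$ lies in $(0,1)$, so discarding the edges of a cycle cannot decrease a product), together with the single-edge-path and nonnegativity observations for the converse, is the only substantive content, and you handle it correctly.
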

\begin{proof}This is immediate from the definition of $W$.\end{proof}

It is easy to show that $Z\geq W\geq R$ and that $W$ is always equal to the MLE $\hat\Sigma$ in the case when $p\leq 3$. For general $p$ we do not know conditions on $R$ that assure that $W$ is an inverse M-matrix, or the MLE. Indeed, Example 3.4 in \cite{johnson1999path} gives a strict path product correlation matrix $R$, and thus $W=R$, which is not an inverse M-matrix, and thus $W\neq \hat \Sigma$. We note that $W=\hat\Sigma$ for the \texttt{carcass} data discussed in Example~\ref{ex:intro} and, as we shall see in the following, it reflects that in this example, the ML graph is   a block graph.  

Let $G_R(W)$ be the graph having edges $ij$ exactly when $R_{ij}=W_{ij}$ and no edges otherwise. We then obtain the following result. 
\begin{prop}\label{prop:blockest}
	If $G_R(W)$ is a block graph and blocks of $W$ corresponding to cliques are inverse M-matrices, then $\hat\Sigma=W$ and $\hat G\subseteq G_R(W)$.
\end{prop}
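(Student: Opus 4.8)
The plan is to show that $W$ itself satisfies the four optimality conditions (\ref{eq:primal1})--(\ref{eq:slackness}) of Proposition~\ref{prop:mtp2GaussGraphModel}. Since those conditions characterize the MLE uniquely among positive definite matrices, this yields $\hat\Sigma=W$ immediately, and the support of $W^{-1}$ then gives $\hat G\subseteq G_R(W)$. Two of the conditions are free: $W_{ii}=1=R_{ii}$ gives (\ref{eq:dual1}), and $W\ge R$ gives (\ref{eq:dual2}) (the single edge $ij$ is an admissible path when $R_{ij}>0$, and $W_{ij}\ge 0\ge R_{ij}$ otherwise). The content lies in establishing two claims: $(\mathrm{I})$ that $W$ is an inverse M-matrix, and $(\mathrm{II})$ that $W^{-1}$ is supported on $G_R(W)$, i.e.\ $(W^{-1})_{ij}=0$ whenever $ij\notin G_R(W)$. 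Claim (II) makes every product $(W_{ij}-R_{ij})(W^{-1})_{ij}$ vanish (the second factor off $G_R(W)$, the first on $G_R(W)$, where $W_{ij}=R_{ij}$), giving (\ref{eq:slackness}); claim (I) gives (\ref{eq:primal1}) together with $W\succ 0$.

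The first key step is to reduce the paths defining $W$ to paths lying inside $G_R(W)$. Passing to the edge lengths $D_{uv}=-\log R_{uv}\ge 0$ on $G^+$, the quantity $-\log W_{ij}$ is the shortest-path distance $d_{ij}$ in $G^+$. Using optimal substructure together with the triangle inequality $d_{ij}\le d_{iu}+d_{uv}+d_{vj}$, I would show that every edge $uv$ on a shortest $i$--$j$ path satisfies $D_{uv}=d_{uv}$, equivalently $R_{uv}=W_{uv}$, so that $uv\in G_R(W)$. Hence each maximizing path for $W_{ij}$ uses only edges of $G_R(W)$, and
\[
W_{ij}\;\;=\;\;\max_{P\subseteq G_R(W)}\;\prod_{uv\in P}R_{uv}.
\]
Assuming $G_R(W)$ is a block graph (connected without loss of generality, arguing componentwise otherwise), if a cut vertex $s$ separates $i$ and $j$ in $G_R(W)$, then every path in $G_R(W)$ between them passes through $s$, so the maximization factorizes as $W_{ij}=W_{is}W_{sj}$. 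This is exactly the Markov factorization of a covariance matrix through the singleton separators of the block graph.

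With this factorization, $W$ is the unique positive definite completion, relative to the decomposable graph $G_R(W)$, of the partial matrix whose clique blocks are $W_{C_kC_k}=R_{C_kC_k}$ over the maximal cliques $C_k$ (for $i,j$ in a common clique, $ij\in G_R(W)$, so $W_{ij}=R_{ij}$). By hypothesis each $W_{C_kC_k}$ is an inverse M-matrix, hence positive definite, so the completion is positive definite and its concentration matrix is given by the clique--separator formula
\[
W^{-1}\;\;=\;\;\sum_{k}\bigl[(W_{C_kC_k})^{-1}\bigr]^{0}\;-\;\sum_{s}(\nu_s-1)\,E_{ss},
\]
where $[\,\cdot\,]^0$ denotes zero-padding to $V\times V$, the sum over $s$ runs over cut vertices, $\nu_s$ is the number of blocks containing $s$, and $E_{ss}$ has a single $1$ in position $(s,s)$ (here every separator is a singleton with $W_{ss}=1$; see \cite{johnson:smith:96} and Corollary~7.3 of \cite{MTP2Markov2015}). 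This is manifestly supported on $G_R(W)$, which is (II). For an off-diagonal entry $ij\in G_R(W)$, the unique block $C_k$ containing both $i$ and $j$ contributes $(W^{-1})_{ij}=\bigl[(W_{C_kC_k})^{-1}\bigr]_{ij}\le 0$ since $W_{C_kC_k}$ is an inverse M-matrix; together with $W\succ 0$ this shows $W^{-1}$ is an M-matrix, which is (I). Hence $W$ satisfies (\ref{eq:primal1})--(\ref{eq:slackness}) and is positive definite, so $\hat\Sigma=W$ by uniqueness in Proposition~\ref{prop:mtp2GaussGraphModel}, and $\hat K=W^{-1}$ supported on $G_R(W)$ gives $\hat G=G(\hat K)\subseteq G_R(W)$.

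I expect the main obstacle to be the first reduction---showing that the maximizing paths for $W$ never leave $G_R(W)$---because $G^+$ may carry edges outside $G_R(W)$ that could a priori shortcut across a separator; the shortest-path argument is precisely what excludes this and lets the decomposable completion machinery apply. Verifying the clique--separator formula and reading off the signs of its entries is then routine given the cited completion results.
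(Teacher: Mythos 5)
Your proposal is correct and takes essentially the same route as the paper: the same key lemma that maximizing paths never leave $G_R(W)$ (your shortest-path/optimal-substructure argument is the paper's exchange argument in logarithmic form), the same factorization $W_{ij}=W_{is}W_{sj}$ across singleton separators, and the same identification of $W$ with the decomposable completion over $G_R(W)$ whose inverse is given by the clique--separator formula. The only difference is presentational: you verify the conditions of Proposition~\ref{prop:mtp2GaussGraphModel} for $W$ directly, whereas the paper first proves $W$ equals the graphical-model MLE $\tilde\Sigma$ over $G_R(W)$ by induction on the number of cliques and then draws the same conclusion.
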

\begin{proof} Note first that if $\hat \Sigma=W$, the KKT conditions (\ref{eq:slackness}) imply that $\hat G\subseteq G_R(W)$. Let $\tilde\Sigma$ denote the maximum likelihood estimate of $\Sigma$ under the Gaussian graphical model with graph $G_R(W)$. Then, since $G_R(W)$ is a block graph, it follows from~\cite[equation (5.46) on page 145]{lau96} that $\tilde \Sigma$ is an inverse M-matrix  which coincides with $W$ and $R$ on all edges of $G_R(W)$. So from to show that
$\tilde \Sigma=\hat \Sigma=W$ we just need to argue that 
$\tilde \Sigma=W$.

We proceed by induction on the number $m=|\mathcal{C}|$ of cliques of $G_R(W)$.  
If there is only one clique in $G_R(W)$, we have  $\tilde\Sigma=R$ and $R$ is an inverse M-matrix and hence $\hat\Sigma=R=W$. Assume now that the statement holds for $|\mathcal{C}|\leq m$ and assume $G_R(W)$ has $m+1$ cliques.  Since $G_R(W)$ is a block graph, there is a decomposition $(A,B,S)$ of  $G_R(W)$ into block graphs with at most $m$ cliques and with the separator $S=\{s\}$ being a singleton.  
But for a decomposition of $G_R(W)$ as above we have  from~\cite[equation (5.31) in Proposition 5.6]{lau96} and the inductive assumption that
\[\tilde \Sigma_{A\cup S}=\hat\Sigma_{A\cup S} =W(R_{A\cup S}), \quad \tilde \Sigma_{B\cup S}=\hat\Sigma_{B\cup S} =W(R_{B\cup S}).\] 

Now let $P^*$ be the path in $G^+$ such that
$W_{ij}=\prod_{uv\in P^*} R_{uv}$ for any two vertices $i,j$. We claim that all edges in $P^*$ must be edges of $G_R(W)$.  Otherwise, suppose $P^*$ contains an edge $uv$ which is not an edge in $G_R(W)$; then $W_{uv}>R_{uv}$ and so if we replace the edge $uv$ with the path realizing $W_{uv}$ the product would be strictly increased, which contradicts the optimality of $P^*$. Since $S$ is a singleton separator, this also implies that $P^*$ passes through $S$ whenever it involves vertices from both $A$ and $B$. Suppose that $i,j\in A\cup S$. Then optimality of $P^*$ implies that $P^*$ is contained in $A\cup S$ and so $\tilde\Sigma_{A\cup S}=W(R_{A\cup S})=W_{A\cup S}$ and by the same argument $\tilde\Sigma_{B\cup S}=W_{B\cup S}$. Moreover, if $i\in A$ and $j\in B$ then $W_{ij}=W_{is}W_{sj}$. 
Now the inductive assumption in combination with the expression \cite[page 140]{lau96} yields that
\[\tilde \Sigma_{ij}=\tilde\Sigma_{is}\tilde\Sigma_{sj} =W_{is}W_{sj}=W_{ij} \mbox{ for } i\in A, j\in B\] and thus $\tilde\Sigma=\hat\Sigma=W$ as required. \end{proof}

\begin{rem} We note that with probability one, the slackness constraints in (\ref{eq:slackness}) are not simultaneously active, and hence in Proposition~\ref{prop:blockest} we have almost sure equality between $G_R(W)$ and $\hat G$. Thus we can identify $\hat G$ without first calculating $\hat K$.
\end{rem}
\add{We further have the following corollary.
\begin{cor}\label{cor:blockest}Under the same conditions as in Proposition~\ref{prop:blockest} we have $\mwsf(R)\subseteq \hat G\subseteq G_R(W)$.
\end{cor}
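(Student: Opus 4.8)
The statement has two inclusions, and the right-hand one, $\hat G\subseteq G_R(W)$, is already part of Proposition~\ref{prop:blockest}; so the whole content of the corollary is the left-hand inclusion $\mwsf(R)\subseteq\hat G$, on which I would concentrate. The plan is first to localise the problem to a single clique and then to settle it there.

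Two observations perform the localisation. On any edge $ij$ of $\mwsf(R)$ the tree path $\overline{ij}$ is the single edge $ij$, so Proposition~\ref{prop:ZZZ} gives $Z_{ij}=R_{ij}$, and since $Z\ge W\ge R$ (noted after Lemma~\ref{lem:PP}) we obtain $W_{ij}=R_{ij}$; hence $\mwsf(R)\subseteq G_R(W)$ and every MWSF edge lies in a unique clique $C$ of the block graph $G_R(W)$. Next, by Proposition~\ref{prop:blockest} we have $\hat\Sigma=W$, and $W$ is the Gaussian graphical model MLE over the decomposable graph $G_R(W)$, whose separators are singletons. The clique–separator expansion of the concentration matrix \cite[Prop.~5.6]{lau96} then shows that separators contribute only to the diagonal, so for $i\ne j$ in a common clique $C$ one has $\hat K_{ij}=[(W_{CC})^{-1}]_{ij}=[(R_{CC})^{-1}]_{ij}$, using $W_{CC}=R_{CC}$ (every pair inside a clique is an edge of $G_R(W)$). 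Thus $\mwsf(R)\subseteq\hat G$ reduces to the clique statement: \emph{if $R_{CC}$ is an inverse M-matrix and $ij$ is an edge of the maximum weight spanning tree of $R_{CC}$, then $[(R_{CC})^{-1}]_{ij}\ne 0$.}

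I would attack this clique statement by contradiction from the cut characterisation of maximum weight spanning trees. Writing $A=R_{CC}$ and deleting $ij$ from the MWST partitions $C$ into $X\ni i$ and $Y\ni j$, with $ij$ the strictly heaviest crossing edge, i.e.\ $A_{ij}>A_{kl}$ for every crossing pair $kl\ne ij$ (ties absorbed by the minimality built into $\mwsf(R)$). Assume $B_{ij}=0$ where $B=A^{-1}$. The Schur-complement identity for $A=B^{-1}$ factorises the cross block as $A_{XY}=B_{XX}^{-1}(-B_{XY})\,A_{YY}$, and every factor is entrywise nonnegative: $-B_{XY}\ge0$ and $B_{XX}^{-1}\ge0$ because $B$ is an M-matrix, while $A_{YY}\ge0$ as a principal block of the inverse M-matrix $A$. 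Since $B_{ij}=0$, the $(i,j)$ entry of the middle factor vanishes, so $A_{ij}$ is assembled entirely from the remaining contributions of this product; the aim is to show that this forces $A_{ij}\le\max_{kl\ne ij}A_{kl}$ over crossing pairs, contradicting strict maximality. An equivalent handle is the nonnegativity of the MTP$_2$ regression coefficients $\beta^{i}=(A_{RR})^{-1}A_{R,i}\ge0$, $R=C\setminus\{i,j\}$, for which $B_{ij}=0$ yields $A_{ij}=\sum_{s\in R}\beta^{i}_{s}A_{sj}$ with $0\le\beta^i_s\le A_{is}$, so that the mass of $A_{ij}$ is carried by strictly lighter crossing steps.

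The main obstacle is precisely this clique statement: converting ``$ij$ is the strictly heaviest crossing edge'' together with ``$(A^{-1})_{ij}=0$'' into a contradiction. The delicate point is that a vanishing partial correlation does \emph{not} make $A_{ij}$ small in any naive path-product sense — an inverse M-matrix can have $(A^{-1})_{ij}=0$ while $A_{ij}$ strictly exceeds every product along a non-direct path — so path-product inequalities alone are insufficient, and the sign pattern of the M-matrix across the cut (that is, total positivity) must be used essentially. Once the clique statement is established, combining it with the reduction gives $\hat K_{ij}\ne0$ for every MWSF edge, i.e.\ $\mwsf(R)\subseteq\hat G$, and together with Proposition~\ref{prop:blockest} this yields $\mwsf(R)\subseteq\hat G\subseteq G_R(W)$.
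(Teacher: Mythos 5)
Your right-hand inclusion and your localisation are fine (the localisation is in fact clean: $Z_{ij}=R_{ij}$ on MWSF edges by Proposition~\ref{prop:ZZZ} plus $Z\geq W\geq R$ forces $W_{ij}=R_{ij}$), and so is the clique–separator reduction giving $\hat K_{ij}=[(R_{CC})^{-1}]_{ij}$ for $i\neq j$ in a common block $C$. The genuine gap is exactly the step you flag as the main obstacle and leave unproven: the ``clique statement'' that $(A^{-1})_{ij}\neq 0$ whenever $A=R_{CC}$ is an inverse M-matrix and $ij$ is an edge of its maximum weight spanning tree. This statement is not merely hard — it is \emph{false}, and the paper's own Example~\ref{ex:nonmwsf}, placed immediately after the corollary, is a direct counterexample: there $R$ is a $5\times 5$ inverse M-matrix, hence a path product matrix (Theorem~\ref{thm:PP}), hence $W=R$ (Lemma~\ref{lem:PP}), so $G_R(W)$ is the complete graph, a block graph whose single block is an inverse M-matrix; yet the MWST edge $2\sim 5$ has $\hat K_{25}=0$. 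So no Schur-complement or regression-coefficient manipulation can close your argument; the route is a dead end. (What your correct reduction actually exposes is that the literal inclusion $\mwsf(R)\subseteq\hat G$ can fail even under the hypotheses of Proposition~\ref{prop:blockest}; the corollary must be read together with the preceding Remark, by which generically the two slackness constraints in (\ref{eq:slackness}) are never simultaneously active, so that $\hat G=G_R(W)$ almost surely.)

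The paper's proof goes a completely different way and never asserts nonvanishing of within-block entries of $\hat K$. It shows that an edge $ij$ joining two \emph{different} blocks of $G_R(W)$ can never belong to a MWSF: choosing singleton separators $s_1,s_2$ with $i\sim s_1$ and $j\sim s_2$, the inclusion $\hat G\subseteq G_R(W)$ gives $i\cip j\cd s_1$ and $i\cip j\cd s_2$ under $\hat\Sigma$, hence $\hat\Sigma_{ij}=\hat\Sigma_{is_1}\hat\Sigma_{s_1 j}=\hat\Sigma_{is_2}\hat\Sigma_{s_2 j}$; using $\hat\Sigma_{is_1}=R_{is_1}$ and $\hat\Sigma_{js_2}=R_{js_2}$ (these pairs are edges of $G_R(W)$), $\hat\Sigma\geq R$, and the fact that correlations are strictly below $1$, one gets $R_{ij}\leq\hat\Sigma_{ij}<\min\{R_{is_1},R_{js_2}\}$, and an exchange argument then excludes $ij$ from every MWSF. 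This places all MWSF edges inside blocks, i.e.\ $\mwsf(R)\subseteq G_R(W)$, and the identification of these edges with edges of $\hat G$ is supplied by the Remark, not by any intra-block nonvanishing claim. In short: the paper argues about inter-block edges using dual feasibility and complementary slackness from Proposition~\ref{prop:mtp2GaussGraphModel}, whereas your proposal needs control of the intra-block zero pattern of $\hat K$, which is unavailable.
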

\begin{proof}Consider an edge $ij$ between vertices in different cliques of $G_R(W)$ and assume $S_1=\{s_1\}, S_2=\{s_2\}$ are $(i,j)$-separators with $i\sim s_1$ and $j\sim s_2$. Then, since $\hat G\subseteq G_R(W)$ we have $i\cip j\cd s_1$ and $i\cip j\cd s_2$ according to $\hat \Sigma$ and therefore
\begin{eqnarray*}R_{ij}&\leq& \hat \Sigma_{ij}=\hat \Sigma_{is_1}\hat \Sigma_{js_1}=\hat \Sigma_{is_2}\hat \Sigma_{js_2}\\&=&R_{is_1}\hat \Sigma_{js_1}=\hat\Sigma_{is_1}R_{js_1}<\min\{R_{is_1},R_{js_2}\},\end{eqnarray*}so the edge $ij$ can never be part of a $\mwsf$ because  removing the edge would render either $s_1$ disconnected from $i$ or $s_2$ disconnected from $j$ and then the weight of a $\mwsf$ would increase when replacing $ij$ with  $is_1$ or $js_2$, respectively. This completes the proof. 
\end{proof}
It is not correct in general that $\mwsf(R)\subseteq \hat G$ as demonstrated in the following example; although this has been the case in all non-constructed examples we have considered including the relatively large Example~\ref{ex:factor} below.
\begin{ex}\label{ex:nonmwsf}Below we display an inverse M-matrix $K$
\[K=\begin{pmatrix}1 & - .116&  0& 0& - .433\\  - .116 & 1 & - .097 & - .034 & 0\\ 0& - .097 & 1 & -.149 &- .413 \\ 0& -.034& -.149&1& -.604\\ -.433& 0 &
- .413  &-.604 &1 \end{pmatrix}\]
and the corresponding correlation matrix
\[R=         \begin{pmatrix}1 & .2861 & .5745 & .6242& .7299\\
.2861 &1& 0.2864 & .2696 & .2872\\
.5745 &.2864 & 1 &.7149& .7800\\
0.6242& 0.2696&.7149& 1&.8523\\
.7299& .2872& .7800 &.8523 &1
\end{pmatrix}
\]Here the $\mwsf(R)$ is the star graph with $5$ as its center, but the edge $2\sim 5$ is not in $G(K)$. Note that all the edges in $G^+$ adjacent to $2$ have almost the same weight. We note that we have also calculated $K^{-1}$ using rational arithmetic to ensure the phenomenon cannot be explained by rounding error.
\end{ex}
}

\section{Gaussian signed $\mtp$ distributions}
\label{sec:signed}

In this section we discuss how our results can be generalized to so-called signed $\mtp$ Gaussian distributions, where the distribution is $\mtp$ up to sign swapping. Such distributions were discussed by Karlin and Rinott~\cite{karlin1981signed}. More precisely, a random variable $X$ has a \emph{signed $\mtp$ distribution} if there exists a diagonal matrix $D$ with $D_{ii}=\pm 1$ (called \emph{sign matrix}) such that $DX$ is $\mtp$. The following characterization of signed $\mtp$ Gaussian distributions is a direct consequence of \cite[Theorem 3.1 and Remark 1.3]{karlin1981signed}.
\begin{prop}
A Gaussian random variable $X$ has a signed $\mtp$ distribution if and only if $|X|$ is $\mtp$.  	
\end{prop}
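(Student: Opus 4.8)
The plan is to reduce the statement to a direct application of the cited results of Karlin and Rinott \cite{karlin1981signed}, using the Gaussian structure to make the equivalence explicit. First I would recall the two directions of what must be shown: if $X$ is signed $\mtp$, then $|X|$ is $\mtp$; and conversely, if $|X|$ is $\mtp$, then $X$ is signed $\mtp$. The key object is a sign matrix $D$ (diagonal with $D_{ii}=\pm 1$) such that $DX$ is $\mtp$, and the goal is to connect the existence of such $D$ to total positivity of $|X|$.

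For the forward direction, suppose $X$ is signed $\mtp$, so $DX$ is $\mtp$ for some sign matrix $D$. Since applying $D$ flips the signs of certain coordinates, the absolute value $|DX|$ coincides coordinatewise with $|X|$. Thus it suffices to observe that taking absolute values is unaffected by sign swapping, and that the distribution of $|X|$ depends only on $|DX|$. The claim then follows because \cite[Theorem 3.1 and Remark 1.3]{karlin1981signed} characterizes signed $\mtp$ precisely through the total positivity of $|X|$. In the Gaussian case the sign matrix can be read off directly from the off-diagonal signs of $K=\Sigma^{-1}$: one chooses $D$ so that $DKD$ has nonpositive off-diagonal entries, which is possible exactly when the signs are consistent in the sense that products of signs around every cycle are nonnegative. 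I would phrase this as: $X$ is signed $\mtp$ if and only if there is a consistent sign assignment making $DKD$ an M-matrix.

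For the converse, assuming $|X|$ is $\mtp$, I would invoke the same characterization in \cite{karlin1981signed} to produce the sign matrix $D$ rendering $DX$ totally positive. The essential point, specific to the Gaussian setting, is that the entrywise absolute value $|X|$ being $\mtp$ forces the correlation (equivalently, concentration) structure to be sign-consistent, so that a global flip $D$ can simultaneously make all the relevant off-diagonal entries of the transformed concentration matrix nonpositive. The equivalence between $\mtp$ for Gaussians and $K$ being an M-matrix, stated earlier in the excerpt, then closes the argument.

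The hard part, or rather the only real content, is verifying that the hypotheses of \cite[Theorem 3.1 and Remark 1.3]{karlin1981signed} apply verbatim to the Gaussian family and that no additional regularity is needed; since the Gaussian density is strictly positive and smooth, the total positivity inequality \eqref{eq:MTP2} can be checked directly on log-densities, so the abstract characterization specializes cleanly. I expect the main subtlety to be bookkeeping of signs: ensuring that the sign matrix $D$ obtained from $|X|$ being $\mtp$ is exactly the one that makes $DX$ satisfy \eqref{eq:MTP2}, which amounts to checking that sign-flipping coordinates commutes with the lattice operations $\wedge$ and $\vee$ up to the same relabeling. Because this is purely a restatement of the cited theorem in the Gaussian context, the proof itself should be short, essentially a one-line citation with the observation that $|X|$ being $\mtp$ is the defining condition in the signed case.
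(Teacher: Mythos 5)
Your proposal is correct and takes the same route as the paper: the paper gives no proof body at all, simply noting that the proposition is a direct consequence of \cite[Theorem 3.1 and Remark 1.3]{karlin1981signed}, and your chain of equivalences (signed $\mtp$ $\iff$ $\exists D$ with $DKD$ an M-matrix $\iff$ $|X|$ is $\mtp$, using the Gaussian M-matrix criterion and the fact that $DX$ has concentration matrix $DKD$ with $|DX|=|X|$) is exactly how that citation closes the argument. One side remark should be struck, however: it is \emph{not} true that sign-flipping coordinates ``commutes with the lattice operations $\wedge$ and $\vee$ up to relabeling.'' A sign matrix $D$ with mixed signs maps the pair $(x\wedge y,\, x\vee y)$ to something other than $\bigl((Dx)\wedge(Dy),\,(Dx)\vee(Dy)\bigr)$, and indeed the $\mtp$ inequality (\ref{eq:MTP2}) is generally destroyed by flipping a single coordinate --- this is precisely why the signed $\mtp$ class is strictly larger than the $\mtp$ class. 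The correct bookkeeping is the one you also give: the verification that $DX$ is $\mtp$ goes through the M-matrix characterization of Gaussian $\mtp$ applied to $DKD$, not through any lattice argument, so that stray check is both false and unnecessary.
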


Gaussian graphical models with signed $\mtp$ distributions are called \emph{non-frustrated} in the machine learning community. The following result is implicitly stated in  \cite{malioutov2006walk}.

\begin{thm}\label{th:completeGaussSigned}
		A Gaussian random variable $X$ with concentration matrix $K$ has a signed $\mtp$ distribution if and only if it holds for every cycle $(i_1,\ldots,i_k,i_1)$ in the graph $G(K)$ that
	\begin{equation}\label{eq:cycleIneq}
	(-1)^k K_{i_1 i_2}K_{i_2i_3}\cdots K_{i_ki_1}> 0.
	\end{equation}
\end{thm}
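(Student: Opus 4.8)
The plan is to recast the statement as a classical balance (or ``frustration'') condition for signed graphs. First I would observe that if $D$ is a sign matrix with diagonal entries $\sigma_i=D_{ii}\in\{\pm1\}$, then $DX$ is again Gaussian with concentration matrix $DKD$, since $D^{-1}=D$. By the M-matrix characterization of Gaussian $\mtp$ distributions recalled earlier, $DX$ is $\mtp$ if and only if $DKD\in\mathcal M$; because $DKD$ is automatically positive definite with diagonal $(DKD)_{ii}=K_{ii}>0$, this reduces to the off-diagonal sign conditions $(DKD)_{ij}=\sigma_i\sigma_j K_{ij}\le 0$ for all $i\neq j$. For $K_{ij}=0$ there is nothing to check, so $X$ is signed $\mtp$ if and only if there exists $\sigma\in\{\pm1\}^V$ with
$$\sigma_i\sigma_j=-\operatorname{sign}(K_{ij})\qquad\text{for every edge } ij\in G(K).$$
Thus the theorem becomes the assertion that this system of sign equations is solvable if and only if (\ref{eq:cycleIneq}) holds for every cycle.

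For necessity, suppose such a $\sigma$ exists and fix a cycle $(i_1,\dots,i_k,i_1)$. Multiplying the defining equation over the $k$ edges of the cycle, the left-hand side telescopes: each $\sigma_{i_l}$ occurs in exactly two factors and hence contributes $\sigma_{i_l}^2=1$, so the product equals $1$. The right-hand side equals $(-1)^k\prod_{l}\operatorname{sign}(K_{i_li_{l+1}})=\operatorname{sign}\bigl((-1)^kK_{i_1i_2}\cdots K_{i_ki_1}\bigr)$. Equating the two yields $(-1)^kK_{i_1i_2}\cdots K_{i_ki_1}>0$, which is precisely (\ref{eq:cycleIneq}).

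For sufficiency I would construct $\sigma$ by propagation along a spanning forest. Treating each connected component of $G(K)$ separately, pick a spanning tree $T$, root it, set the root value to $1$, and define $\sigma_j:=-\operatorname{sign}(K_{ij})\,\sigma_i$ whenever $ij$ is a tree edge with $i$ the parent of $j$; this satisfies all tree-edge equations by construction. For a non-tree edge $ij$, let $P$ be the unique tree path from $i$ to $j$, so that $P$ together with $ij$ forms a cycle to which (\ref{eq:cycleIneq}) applies. Multiplying the already-satisfied tree-edge equations along $P$ telescopes the signs to $\sigma_i\sigma_j$, while the product of the corresponding right-hand sides equals $-\operatorname{sign}(K_{ij})$ exactly because the cycle condition forces the product of $-\operatorname{sign}(K_{uv})$ around the whole cycle to be $1$. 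Hence $\sigma_i\sigma_j=-\operatorname{sign}(K_{ij})$, every edge equation holds, and $\sigma$ is the desired signing.

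The genuinely combinatorial content, and the only step needing care, is this equivalence between edge-by-edge solvability and the cycle conditions; it is exactly Harary's balance theorem for signed graphs and is what underlies the \emph{non-frustrated} terminology of \cite{malioutov2006walk}. Everything else is bookkeeping: the conjugation $K\mapsto DKD$, the telescoping of sign products, and the observation that it suffices to verify the fundamental cycles of a spanning tree, so that demanding (\ref{eq:cycleIneq}) for all cycles is more than enough.
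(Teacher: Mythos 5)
Your proof is correct, and conceptually it travels the same road as the paper: both reduce signed $\mtp$ to a balance (non-frustration) condition on the signed graph attached to $G(K)$. The difference is in what is actually proved. The paper's proof is essentially a pointer: it declares the ``only if'' direction easy to check, rephrases (\ref{eq:cycleIneq}) as saying that every cycle of the graph weighted by the off-diagonal entries of $-K$ has an even number of negative edges, and then recovers the ``if'' direction from the proof of Corollary 3 in \cite{malioutov2006walk}. You instead carry out the whole argument yourself: the reduction $K\mapsto DKD$ to solvability of the edge equations $\sigma_i\sigma_j=-{\rm sgn}(K_{ij})$, necessity by telescoping the sign products around a cycle, and sufficiency by rooting a spanning tree in each component, propagating signs along tree edges, and verifying each non-tree edge via the fundamental cycle it creates --- i.e., Harary's balance theorem. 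What your route buys is self-containedness and the explicit (algorithmically useful) observation that only the fundamental cycles of a spanning forest need to be checked; what the paper's route buys is brevity and a direct link to the walk-summability literature where the ``non-frustrated'' terminology originates. One detail you use implicitly and could state: the fundamental cycle formed by a non-tree edge $ij$ and the tree path from $i$ to $j$ is a \emph{simple} cycle in $G(K)$ (the tree path is simple and $ij$ is not a tree edge), so (\ref{eq:cycleIneq}) indeed applies to it.
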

\begin{proof}
The 'only if' direction is easy to check. Note that (\ref{eq:cycleIneq}) can be rephrased by saying that each cycle in the graph with edge weights given by the off-diagonal entries of $-K$ has an even number of negative edges. The 'if' direction can now be recovered from the proof of \cite[Corollary 3]{malioutov2006walk}. \end{proof}

Signed $\mtp$ distributions are relevant, for example, because of their appearance when studying tree models.
\begin{prop}\label{prop:trees}
Every Gaussian graphical model over a tree consists of signed $\mtp$ distributions. The $\mtp$ distributions among those are precisely those without negative entries in the covariance matrix $\Sigma$.
\end{prop}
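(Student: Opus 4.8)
The plan is to prove the two assertions separately. For the first claim, that every Gaussian graphical model over a tree $T$ consists of signed $\mtp$ distributions, I would invoke the characterization in Theorem~\ref{th:completeGaussSigned}. Since $G(K)\subseteq T$ is a forest, it contains no cycles, so the cycle condition (\ref{eq:cycleIneq}) holds vacuously for every cycle in $G(K)$. Hence $X$ is automatically signed $\mtp$. This direction is essentially immediate once the right characterization is cited, and I expect no obstacle here.

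For the second claim I must show that among all tree models, the genuinely $\mtp$ ones are exactly those with $\Sigma\geq 0$ (no negative covariances). One direction is trivial: if $X$ is $\mtp$, then $K$ is an M-matrix, so by Proposition~\ref{prop:trees}-type reasoning $\Sigma=K^{-1}$ is an inverse M-matrix and therefore has nonnegative entries. For the converse, suppose $\Sigma\geq 0$; I want to conclude $K$ is an M-matrix, i.e.\ $K_{ij}\leq 0$ for $i\neq j$. The key structural fact I would exploit is that for a tree model the covariance factorizes multiplicatively along paths: $\Sigma_{ij}=\prod_{uv\in\overline{ij}}\Sigma_{uv}/\ (\text{appropriate normalization})$, or equivalently the partial correlations vanish off the tree edges. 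Thus the sign structure of $K$ is entirely determined by the signs of the edge covariances $\Sigma_{uv}$ for $uv\in T$, together with the fact that $K_{ij}=0$ for non-adjacent $i,j$ in $T$.

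Concretely, I would reduce to the edge case: for adjacent $i\sim j$ in $T$, the off-diagonal entry $K_{ij}$ is (up to a positive factor coming from the diagonal of $\Sigma$ restricted to the relevant clique) proportional to $-\Sigma_{ij}$. So $\Sigma_{ij}\geq 0$ on every edge forces $K_{ij}\leq 0$ on every edge, and $K_{ij}=0$ elsewhere; hence $K$ is an M-matrix and $X$ is $\mtp$. The one subtlety is translating ``$\Sigma\geq 0$ entrywise'' into ``$\Sigma_{ij}\geq 0$ on every edge'': since covariances multiply along tree paths, a negative covariance on a single edge would in general produce both positive and negative path products, but one can always find a pair $i,j$ whose connecting path has an odd number of negative edges, yielding $\Sigma_{ij}<0$ and contradicting $\Sigma\geq 0$. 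This sign-propagation argument, ensuring that entrywise nonnegativity of $\Sigma$ is equivalent to nonnegativity on each individual edge, is the step requiring the most care.

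The main obstacle I anticipate is making the multiplicative path-factorization of $\Sigma$ precise and using it cleanly to pin down the edge signs, rather than the M-matrix/inverse M-matrix bookkeeping, which is routine. If a direct computation of $K_{ij}$ in terms of edge covariances proves cumbersome, an alternative is to argue via the walk-sum or non-frustration picture already invoked in this section: among signed $\mtp$ trees, applying a sign matrix $D$ makes $DX$ genuinely $\mtp$, and $\Sigma\geq 0$ is exactly the condition that the required sign swap is trivial ($D=I$), which is the cleanest way to close the equivalence.
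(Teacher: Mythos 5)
Your proof is correct, and for the first claim it coincides with the paper's: $G(K)$ is a subgraph of the tree, hence has no cycles, so condition (\ref{eq:cycleIneq}) of Theorem~\ref{th:completeGaussSigned}holds vacuously. For the second claim you diverge from the paper, which disposes of it simply by citing Corollary~7.3 of \cite{MTP2Markov2015}; you instead give a self-contained argument, and it works: if the distribution is $\mtp$ then $K$ is an M-matrix and $\Sigma=K^{-1}$ is entrywise nonnegative (the standard fact that nonsingular M-matrices have nonnegative inverses); conversely, the clique/separator factorization for decomposable models (cf.\ \cite{lau96}) shows that for a tree $K_{ij}=-\Sigma_{ij}/(\Sigma_{ii}\Sigma_{jj}-\Sigma_{ij}^2)$ on edges and $K_{ij}=0$ off edges, so $\Sigma\geq 0$ entrywise forces $K$ to be an M-matrix. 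Your version buys self-containedness and makes explicit why the result is true, at the price of invoking the decomposable-model machinery; the paper's citation buys brevity. One correction of emphasis: the step you flag as the delicate one --- passing from entrywise nonnegativity of $\Sigma$ to nonnegativity on each tree edge --- is not a subtlety at all, since each edge covariance $\Sigma_{ij}$, $ij\in E(T)$, is itself an entry of $\Sigma$; no sign-propagation along paths is needed. The path-product argument you sketch would only be relevant for the opposite passage (nonnegativity on the edges implies nonnegativity of all entries), which your proof never requires, since you derive the M-matrix property of $K$ directly from the edge signs.
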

\begin{proof}
Let $T$ be a tree and $K=\Sigma^{-1}$ be a concentration matrix in the Gaussian graphical model over $T$. Then $G(K)$ is a subgraph of $T$ and in particular it has no cycles. Hence by Theorem~\ref{th:completeGaussSigned} it is signed $\mtp$. The second part of the statement follows from~\cite[Corollary~7.3]{MTP2Markov2015}.
\end{proof}

Because signed $\mtp$ distributions are closed under taking margins, Proposition~\ref{prop:trees} can be further generalized. The following theorem covers, in particular, Examples 4.1-4.5 in \cite{karlin1981signed}.
\begin{thm}\label{th:Ltrees}
	Every distribution on a Gaussian tree model with hidden variables is signed $\mtp$. 
\end{thm}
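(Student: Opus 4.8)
The plan is to reduce the statement about a Gaussian tree model with hidden variables to the fully observed tree case already handled in Proposition~\ref{prop:trees}, using the fact that signed $\mtp$ distributions are closed under marginalization. First I would set up the notation: let $T=(U,E)$ be a tree whose vertex set $U=V\cup H$ is partitioned into observed variables $V$ and hidden variables $H$, and let $(X_U)$ be a Gaussian graphical model over $T$ with joint concentration matrix $K_U=\Sigma_U^{-1}$. The observed distribution is the marginal $(X_V)$, whose covariance matrix is the principal submatrix $\Sigma_V=(\Sigma_U)_{VV}$.

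The core of the argument is two-step. First, by Proposition~\ref{prop:trees}, the joint distribution $(X_U)$ over the full tree $T$ is signed $\mtp$, so there is a sign matrix $D_U$ (with diagonal entries $\pm1$) such that $D_U X_U$ is $\mtp$. Second, I would invoke closure under margins: since $\mtp$ (and hence signed $\mtp$) is preserved under taking marginal distributions, the marginal of $D_U X_U$ onto the observed coordinates $V$ is $\mtp$. That marginal is exactly $D_V X_V$, where $D_V=(D_U)_{VV}$ is the restriction of the sign matrix to $V$. Hence $X_V$ is signed $\mtp$, which is the claim. The key structural facts I would cite are that $\mtp$ is closed under margins (mentioned in the sentence preceding the theorem, traceable to~\cite{MTP2Markov2015}) and Proposition~\ref{prop:trees} for the full-tree case.

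The main obstacle to watch is the closure-under-margins step in the Gaussian setting: I must make sure that marginalizing a signed $\mtp$ Gaussian vector yields a signed $\mtp$ Gaussian vector with the \emph{inherited} sign matrix $D_V$, rather than needing a genuinely new sign pattern. This is clean because sign swapping and marginalization commute: applying the coordinate sign flips $D_U$ and then integrating out $H$ gives the same law as integrating out $H$ first and then applying $D_V$, since $D_U$ is diagonal and leaves each coordinate's sign assignment independent of the others. One subtle point is degeneracy of the marginal concentration matrix $K_V=(\Sigma_V)^{-1}$, which need not vanish off the tree edges of $T$ restricted to $V$; but signedness of the \emph{distribution} only requires the existence of some sign matrix making it $\mtp$, and the inherited $D_V$ supplies one, so no further cycle-condition check via Theorem~\ref{th:completeGaussSigned} is needed. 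I would close by remarking that this recovers Examples 4.1--4.5 of~\cite{karlin1981signed} as special cases, since each arises from a latent tree structure.
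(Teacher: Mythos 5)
Your proof is correct and is essentially the paper's own argument: the paper states Theorem~\ref{th:Ltrees} as an immediate consequence of Proposition~\ref{prop:trees} together with closure of signed $\mtp$ distributions under taking margins, which is precisely your two-step reduction. Your additional care in checking that the diagonal sign matrix $D_U$ commutes with marginalization (so the inherited $D_V$ works) just makes explicit what the paper leaves implicit.
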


  Gaussian tree models with hidden variables have many applications, in particular related to modeling evolutionary processes; see, e.g., \cite{willsky2011,ASSZ2014}. As an important submodel they contain the Brownian motion tree model \cite{felsenstein_maximum-likelihood_1973}.  Another example of a Gaussian tree model is the factor analysis model with a single factor; it corresponds to a Gaussian model on a star tree, whose inner node is hidden. The $\mtp$ distributions in this model correspond to the distributions in a Spearman model~\cite{ledermann1940problem,Spearman:1928qy}, where  the hidden factor is interpreted as intelligence. 

\bigskip

Let $R$ be a sample correlation matrix. Maximizing the likelihood over all signed $\mtp$ Gaussian distributions requires determining the sign matrix $D$, with $D_{ii}=\pm 1$, that maximizes the likelihood for all $2^p$ possible matrices $DRD$. A natural heuristic is to choose $D^*$ such that $D_{ii}^*D_{jj}^*R_{ij}\geq 0$ for all edges $ij$ of $\mwsf(|R|)$, where $|R|$ denotes the matrix whose entries are the absolute values of the entries of $R$. We provide conditions under which this procedure indeed leads to the MLE under signed $\mtp$, and we also provide examples showing that this is not true in general. Quite interestingly, balanced graphs again play an important role in this part of the theory. 

First we describe how to obtain a sign swapping matrix $D^*$ such that $D_{ii}^*D_{jj}^*R_{ij}\geq 0$ for all edges $ij$ of $\mwsf(|R|)$. Root $\mwsf(|R|)$ at node $1$, that is, regard $\mwsf(|R|)$ as a directed tree with all edges directed away from $1$. Set $D_{11}^*=1$. Then proceed recursively. For any edge $i\to j$ suppose that $D_{ii}^*$ is known and set $D_{jj}^*:={\rm sgn}(D^*_{ii}R_{ij})$. Note that by construction
\begin{equation}\label{eq:Dstar}
D^*_{ii}:= {\rm sgn}(R_{1i_1}R_{i_1i_2}\cdots R_{i_k i}),	
\end{equation}
where $1\to i_1\to i_2\to \cdots\to i_k\to i$ is the unique path from $1$ to $i$ in $\mwsf(|R|)$. We set $D^*_{ii}=0$ if no such path exists. It is easy to check that the resulting $D^*$ satisfies $D_{ii}^*D_{jj}^*R_{ij}\geq 0$ for all edges $ij$ of $\mwsf(|R|)$.

\begin{prop}\label{prop:3S}Suppose that $R$ is a sample correlation matrix whose graph is balanced, that is, such that for every cycle $(i_1,i_2,\ldots,i_k,i_1)$ in the graph $G(R)$
\begin{equation}\label{eq:3S}
R_{i_1i_2}R_{i_2i_3}\cdots R_{i_{k} i_1}> 0.
\end{equation}
Then the MLE based on $R$ over signed $\mtp$ Gaussian distributions is equal to the MLE based on the sample correlation matrix $D^*RD^*$ over $\mtp$ distributions.
\end{prop}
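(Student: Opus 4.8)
The plan is to exploit the reduction, already recorded in the text preceding the statement, that the signed $\mtp$ MLE based on $R$ is obtained by maximizing, over all $2^p$ sign matrices $D$, the value of the ordinary $\mtp$ MLE problem for the transformed correlation matrix $DRD$. Writing $\ell^*(S):=\max_{K\in\mathcal{M}}\{\log\det K-\tr(SK)\}$ for the optimal $\mtp$ log-likelihood based on $S$, the substitution $K\mapsto DKD$ together with $\det D=\pm 1$ and the cyclic property of the trace shows that the likelihood attained by the signed $\mtp$ model with sign matrix $D$ equals $\ell^*(DRD)$. The proposition then amounts to the claim that this quantity is maximized at $D=D^*$, after which the asserted equality follows because, for the optimal sign, the inner optimization is exactly the $\mtp$ MLE of $D^*RD^*$ transported back by the involution $K\mapsto D^*KD^*$.

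The first key step is to show that, under the balance hypothesis (\ref{eq:3S}), the constructed sign matrix makes every off-diagonal entry nonnegative, i.e.\ $D^*RD^*=|R|$. Using (\ref{eq:Dstar}) and telescoping the signs along the tree, for any two vertices $i,j$ one has $D^*_{ii}D^*_{jj}=\prod_{uv\in\overline{ij}}\mathrm{sgn}(R_{uv})$, where $\overline{ij}$ is the path in $\mwsf(|R|)$. For an edge $ij$ of $G(R)$ that is not already in $\mwsf(|R|)$, the fundamental cycle formed by $ij$ and $\overline{ij}$ lies entirely in $G(R)$, so (\ref{eq:3S}) gives $R_{ij}\prod_{uv\in\overline{ij}}R_{uv}>0$; comparing signs yields $\mathrm{sgn}(R_{ij})=D^*_{ii}D^*_{jj}$ and hence $(D^*RD^*)_{ij}=|R_{ij}|$. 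The same identity holds by construction on tree edges and trivially on non-edges, where $R_{ij}=0$. As throughout Section~\ref{sec:MLgraph}, I would first reduce to the case where the relevant graph is connected, handling disconnected graphs block by block.

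The second key step is a monotonicity property of $\ell^*$: if $S\le S'$ entrywise with equal diagonals, then $\ell^*(S)\le\ell^*(S')$. This is immediate from the primal form, since for every fixed $K\in\mathcal{M}$ the off-diagonal entries satisfy $K_{ij}\le 0$, so $\tr(S'K)-\tr(SK)=\sum_{i\neq j}(S'_{ij}-S_{ij})K_{ij}\le 0$, whence $\log\det K-\tr(S'K)\ge\log\det K-\tr(SK)$; taking the supremum over $\mathcal{M}$ preserves the inequality. Since $R_{ij}<1$ for $i\neq j$ guarantees $(DRD)_{ij}<1$, Theorem~\ref{th:SH} ensures that each $\ell^*(DRD)$ is finite and attained, so the comparison is legitimate.

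Combining the two steps finishes the argument: for every sign matrix $D$ we have $(DRD)_{ij}=D_{ii}D_{jj}R_{ij}\le|R_{ij}|=(D^*RD^*)_{ij}$ off the diagonal, with matching diagonals, so monotonicity gives $\ell^*(DRD)\le\ell^*(D^*RD^*)$. Thus $D^*$ attains the maximum, and the signed $\mtp$ MLE coincides with the $\mtp$ MLE computed from $D^*RD^*$. I expect the main obstacle to be the first step, namely propagating the sign consistency guaranteed on $\mwsf(|R|)$ to all edges of $G(R)$ through the balance condition and the cycle/telescoping bookkeeping; the monotonicity lemma and the reduction to $\max_D\ell^*(DRD)$ are comparatively routine.
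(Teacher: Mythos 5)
Your proposal is correct and follows essentially the same route as the paper's proof: first use the balance condition along tree paths to show $D^*$ makes all off-diagonal entries of $D^*RD^*$ nonnegative, then compare likelihoods via the fact that $K_{ij}\leq 0$ for $K\in\mathcal{M}$ forces $\tr(DRDK)\geq\tr(D^*RD^*K)$ pointwise, hence after maximizing over $K$. The only differences are presentational (you package the second step as a monotonicity lemma for $\ell^*$ and use the fundamental cycle of $ij$ rather than concatenated root paths), not substantive.
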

\begin{proof}
We first show that $D^*RD^*$ has only positive entries. Let $i,j$ be any two nodes and let $1\to i_1\to\cdots \to i_k\to i$ and $1\to j_1\to\cdots\to j_l\to j$ be the paths in ${\rm MWSF}(|R|)$ from $1$ to $i$ and $j$, respectively. By (\ref{eq:Dstar}) we obtain
$$
{\rm sgn}(D_{ii}^*D_{jj}^*R_{ij})\;=\;{\rm sgn}(R_{1i_1}\cdots R_{i_k i} R_{ij} R_{j j_l}\cdots R_{j_1 1}),
$$
which is positive by (\ref{eq:3S}). This shows that without loss of generality we can assume that all entries of $R$ are nonnegative and hence that $D^*$ is the identity matrix $\mathbb I_p$. We now  show that the likelihood over $\mtp$ distributions given the sample correlation matrix $DRD$ is maximized by $D=\mathbb{I}_p$. This is because $(D_{ii}D_{jj}-1)\leq 0$ and $R_{ij}K_{ij}\leq 0$, and hence
$$\ell(K;R)-\ell(K;DRD)={\rm tr}(DRDK)-{\rm tr}(RK)=\sum_{i,j}(D_{ii}D_{jj}-1)R_{ij}K_{ij}\geq 0,$$
which completes the proof.
\end{proof} \add{Note that any spanning tree $T$ of $G^+(|R|)$ would suffice to identify the sign switches as above.}

Proposition \ref{prop:3S} provides a sufficient condition for $D^*$ to be the optimal sign-switching matrix; i.e., it provides a sufficient condition such that for every $K\in \mathbb S_{\succ 0}$ and every sign matrix $D$ it holds that 
$$
\ell(K;D^*RD^*)\;\;\geq\;\; \ell(K;DRD).
$$
As a consequence of Proposition~\ref{prop:3S} \del
we obtain the following result for the case when the sample size is 2.

\begin{cor}If the sample correlation matrix $R$ is based on $n=2$ observations,  then the MLE over signed $\mtp$ Gaussian distributions given $R$ is equal to the MLE over $\mtp$ Gaussian distributions given the modified sample correlation matrix $D^*RD^*$. \del 
\end{cor}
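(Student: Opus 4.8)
The plan is to obtain this as an immediate consequence of Proposition~\ref{prop:3S}: since that proposition already identifies the signed-$\mtp$ MLE given $R$ with the $\mtp$ MLE given $D^*RD^*$ whenever $G(R)$ is balanced, it suffices to show that a sample correlation matrix built from $n=2$ observations satisfies the balance condition~(\ref{eq:3S}) with probability one. The accompanying note then guarantees that any spanning tree serves to define $D^*$.

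First I would describe the structure of $R$ for $n=2$. Writing the two observations as $x^{(1)},x^{(2)}\in\R^p$ with sample mean $\bar x=(x^{(1)}+x^{(2)})/2$, the two centered vectors are $d:=(x^{(1)}-x^{(2)})/2$ and $-d$, so the sample covariance matrix is proportional to the rank-one matrix $dd^{\top}$ and $S_{ij}=c\,d_id_j$ for some constant $c>0$. Because the observations come from a continuous distribution, with probability one no coordinate $d_i$ vanishes; hence every $S_{ii}=c\,d_i^2>0$, the matrix $R$ is well defined, and
$$R_{ij}\;=\;\frac{d_id_j}{|d_i|\,|d_j|}\;=\;{\rm sgn}(d_i)\,{\rm sgn}(d_j).$$
Setting $\epsilon_i:={\rm sgn}(d_i)\in\{-1,+1\}$ gives $R_{ij}=\epsilon_i\epsilon_j$, so in particular every off-diagonal entry equals $\pm1$ and $G(R)$ is complete.

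Next I would verify balance. For any cycle $(i_1,i_2,\ldots,i_k,i_1)$ in $G(R)$,
$$R_{i_1i_2}R_{i_2i_3}\cdots R_{i_ki_1}\;=\;(\epsilon_{i_1}\epsilon_{i_2})(\epsilon_{i_2}\epsilon_{i_3})\cdots(\epsilon_{i_k}\epsilon_{i_1})\;=\;\prod_{m=1}^{k}\epsilon_{i_m}^{2}\;=\;1\;>\;0,$$
since each sign $\epsilon_{i_m}$ is counted exactly twice around the cycle. Thus~(\ref{eq:3S}) holds, $G(R)$ is balanced, and Proposition~\ref{prop:3S} applies verbatim to give the asserted equality of the two maximum likelihood estimators.

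I do not anticipate a genuine obstacle, as the statement is a corollary; the only point requiring care is the qualifier \emph{with probability one}, which is needed to exclude the measure-zero event that some $d_i=0$ (equivalently, that some sample variance vanishes and $R$ fails to be defined). I would also remark that when all $|R_{ij}|=1$ every spanning tree of $G^+(|R|)$ is a maximum weight spanning forest, so the sign-switching matrix $D^*$ is constructed unambiguously in the sense of the remark following Proposition~\ref{prop:3S}.
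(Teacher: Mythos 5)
Your reconstruction is the natural way to read the paper's ``as a consequence of Proposition~\ref{prop:3S}'' (the paper gives no explicit proof), and under your centered-data reading each individual step --- rank-one $S$, $R_{ij}=\epsilon_i\epsilon_j$, trivial balance, then Proposition~\ref{prop:3S} --- is internally consistent. Nevertheless there is a genuine gap. The paper's standing convention is that $\mu=0$ is \emph{known}, so the sample covariance for $n=2$ is $S=\tfrac{1}{2}\bigl(x^{(1)}(x^{(1)})^{\top}+x^{(2)}(x^{(2)})^{\top}\bigr)$, not the centered rank-one matrix $dd^{\top}$; this is precisely the convention under which Theorem~\ref{th:SH} (``$n\geq 2$ implies the MLE exists a.s.'') is true. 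Under that convention $S$ has rank two almost surely, all correlations lie strictly in $(-1,1)$, and --- writing $v_i=(x^{(1)}_i,x^{(2)}_i)\in\R^2$, so that $R_{ij}=\cos\angle(v_i,v_j)$ --- balance can fail with positive probability: coordinates whose vectors sit at angles $0^{\circ}$, $80^{\circ}$, $160^{\circ}$ give $R_{12}>0$, $R_{23}>0$, $R_{13}<0$, violating~(\ref{eq:3S}) on the cycle $(1,2,3,1)$. So in the paper's setting your first step is false, and the corollary cannot be obtained from Proposition~\ref{prop:3S} by a balance argument.

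If instead you insist on the centered reading to save the rank-one structure, then every off-diagonal correlation equals $\pm 1$, so $D^{*}RD^{*}$ is the all-ones matrix; by Proposition~\ref{prop:mtp2GaussGraphModel} (the all-ones matrix does not lie in $\mathcal{N}$, and the likelihood given it is unbounded) neither of the two MLEs in the statement exists, so what you have proved is vacuous --- and it would moreover contradict Theorem~\ref{th:SH}. In other words, the delicate point is not the measure-zero event $d_i=0$ that you flag, but that the rank-one picture is incompatible with the framework in which these MLEs exist at all. To be fair, your argument is very likely what the authors intended, so this tension is inherited from the paper itself; but a non-vacuous proof for $n=2$ would have to work with the rank-two uncentered $S$ directly (e.g., via the planar geometry of the vectors $v_i$, showing that the MWSF-based sign switching is optimal among all $2^p$ sign matrices for a Gram matrix of vectors in $\R^2$), which is genuinely more than Proposition~\ref{prop:3S} provides.
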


Note that the case $n=2$ is special and Proposition~\ref{prop:3S} does not extend to arbitrary sample correlation matrices. In the following, we give a simple counterexample.

\begin{ex}\label{ex:notsign}
Suppose that the sample correlation matrix is
$$R = \begin{bmatrix} 
1 & 0.3 & 0.11 & 0.3 \\
0.3 & 1 & -0.1 & -0.1\\
0.11 & -0.1 & 1 & -0.1\\
0.3 & -0.1 & -0.1 & 1
\end{bmatrix}.$$
Then  $\mwsf(|R|)$ is given by the star graph with edges $1-2$, $1-3$, $1-4$. Since $R$ is positive on these entries, $D^*=\mathbb{I}_p$. But one can check that the corresponding MLE has a lower likelihood than the MLE after changing the sign of the third variable.  

The intuition is the following. The log-likelihood based on $R$ is up to an additive constant given by
\begin{equation*}
\begin{aligned}
& \underset{\Sigma}{\text{minimize}}
& & -\log\det(\Sigma) \\
& \text{subject to}
&& \Sigma_{11} = \Sigma_{22} = \Sigma_{33} = \Sigma_{44} =1,\\
&&& \Sigma_{12} \geq R_{12},\; \Sigma_{13} \geq R_{13},\; \Sigma_{14} \geq R_{14}, \\
&&& \Sigma_{23} \geq 0,\; \Sigma_{24} \geq 0,\; \Sigma_{34} \geq 0, \\
&&& \Sigma \succeq 0.&
\end{aligned}
\end{equation*}
By changing the sign of the third variable, we replace the constraint $1-3$ by two constraints $2-3$ and $3-4$. The resulting optimization problem is
\begin{equation*}
\begin{aligned}
& \underset{\Sigma}{\text{minimize}}
& & -\log\det(\Sigma) \\
& \text{subject to}
&& \Sigma_{11} = \Sigma_{22} = \Sigma_{33} = \Sigma_{44} =1,\\
&&& \Sigma_{12} \geq R_{12},\; \Sigma_{14} \geq R_{14},\; \Sigma_{23} \geq -R_{23},\; \Sigma_{34} \geq -R_{34},\; \\
&&& \Sigma_{13} \geq 0,\; \Sigma_{24} \geq 0, \\
&&& \Sigma \succeq 0.&
\end{aligned}
\end{equation*}
Note that $R_{13}$ is only slightly larger than $-R_{23}$ and $-R_{24}$. Hence, in essence we are increasing the number of constraints by one, which explains the decrease of the log-likelihood value. 
\qed
\end{ex}

We conclude this paper by illustrating how our results can be applied to factor analysis in psychometrics.

\begin{ex}\label{ex:factor}
Single factor models are routinely used to study the personalities in psychometrics. Consider the following example from~\cite{malle1995puzzle}\footnote{\label{footnote_url}We downloaded the data from \url{http://web.stanford.edu/class/psych253/tutorials/FactorAnalysis.html}.}: 240 individuals were asked to rate themselves on the scale 1-9 with respect to 32 different personality traits. The resulting correlation matrix is shown in Figure \ref{fig:FAcorr}.  
\begin{figure}[t]
\includegraphics[scale=.5]{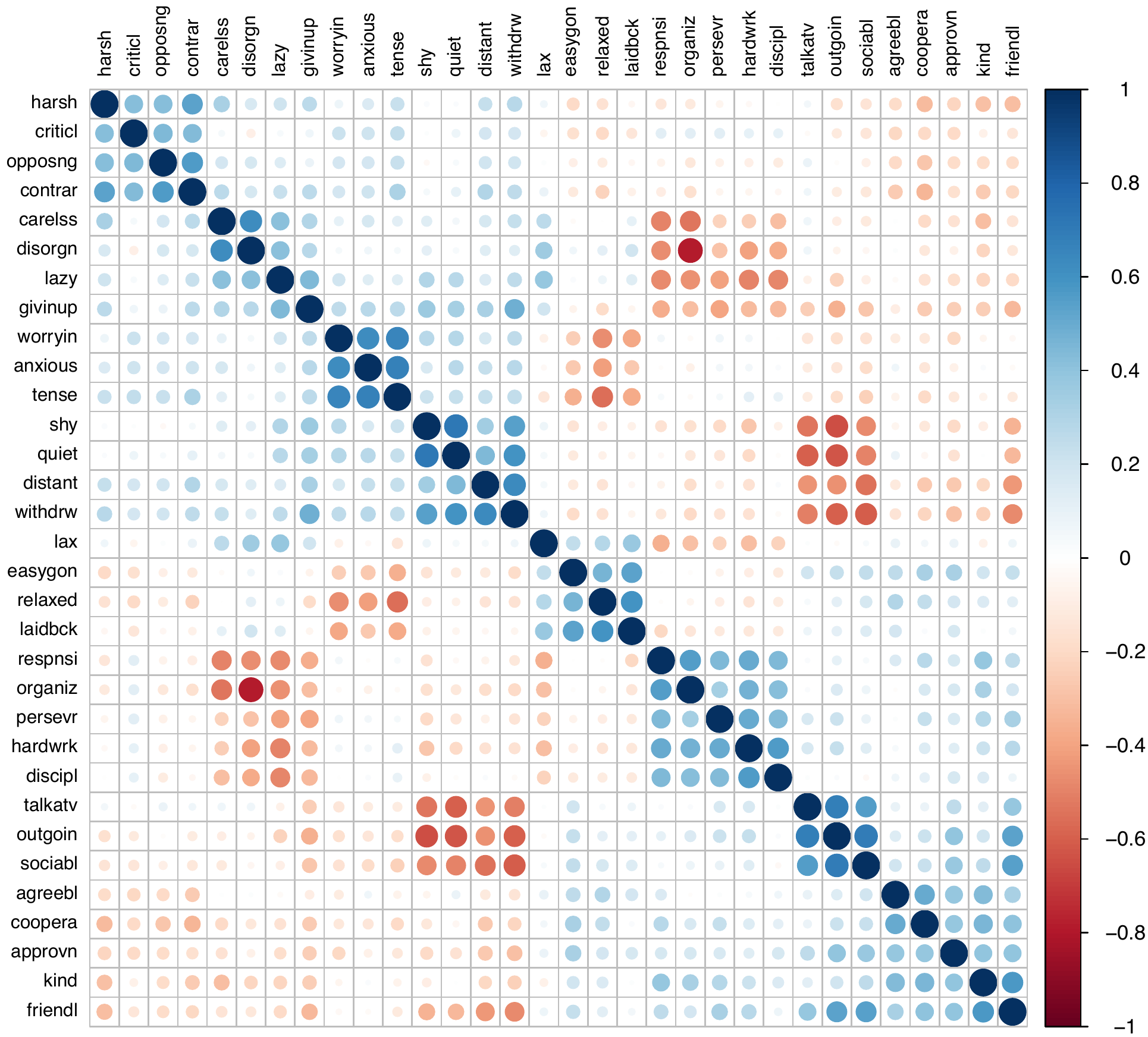}
\caption{Correlation matrix of personality traits from the data set described in~\cite{malle1995puzzle}.}	
\label{fig:FAcorr}
\end{figure}    It appears to have a block structure with predominantly positive entries in each diagonal block and negative entries in the off-diagonal block. Also analyzing the respective variables, they seem to correspond to positive and negative traits. It is therefore natural to assume that this data set follows a signed $\mtp$ distribution and analyze it under this constraint.

The correlation matrix resulting from the sign switching procedure described in~(\ref{eq:Dstar}) is shown on the left in Figure~\ref{fig:FAcorr2}, while the correlation matrix resulting from switching the signs of the 16 (negative) traits that constitute the first block of variables in Figure \ref{fig:FAcorr} is shown on the right in Figure~\ref{fig:FAcorr2}. 
\begin{figure}[t]
\includegraphics[scale=.32]{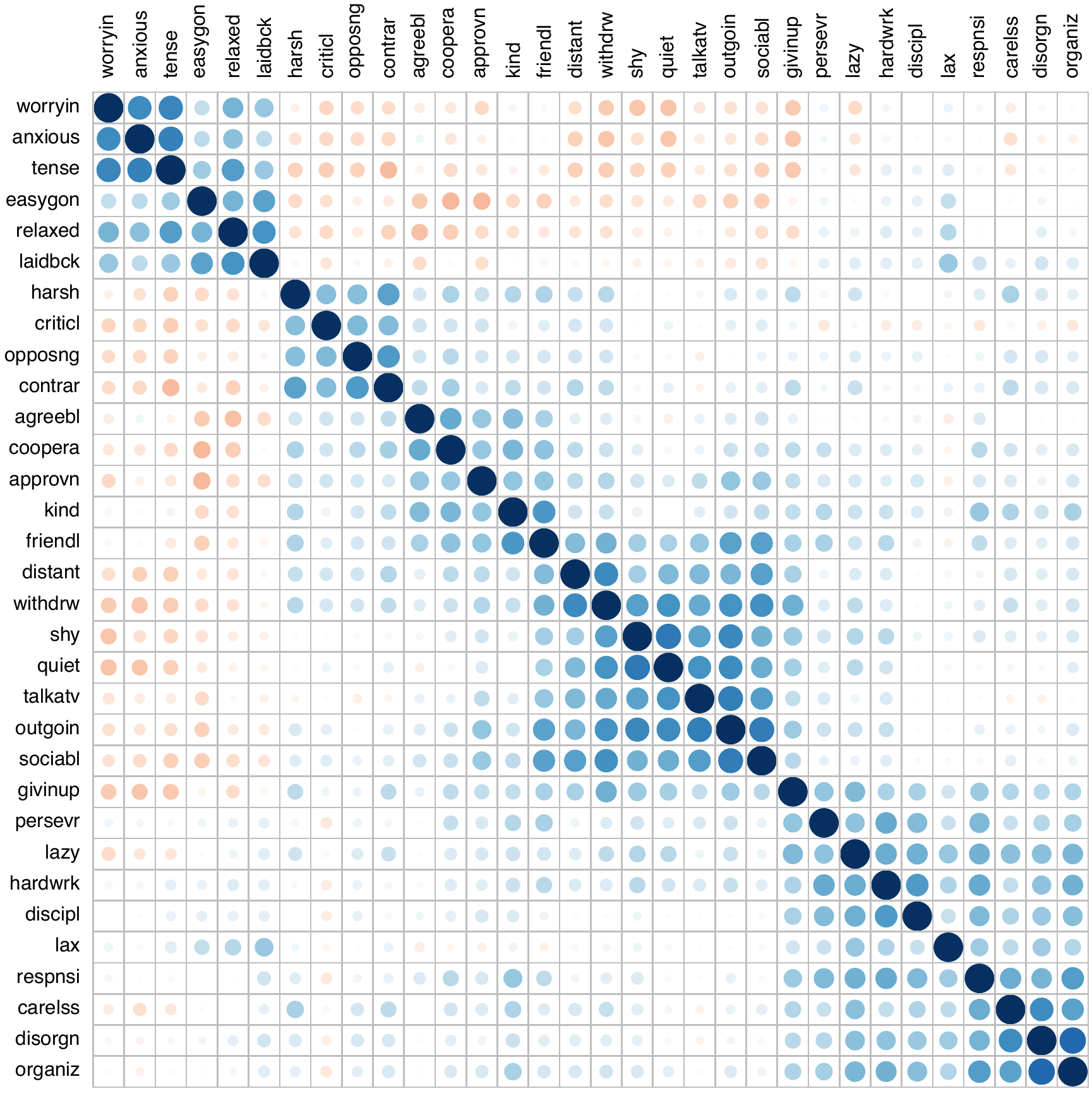}\quad \includegraphics[scale=.32]{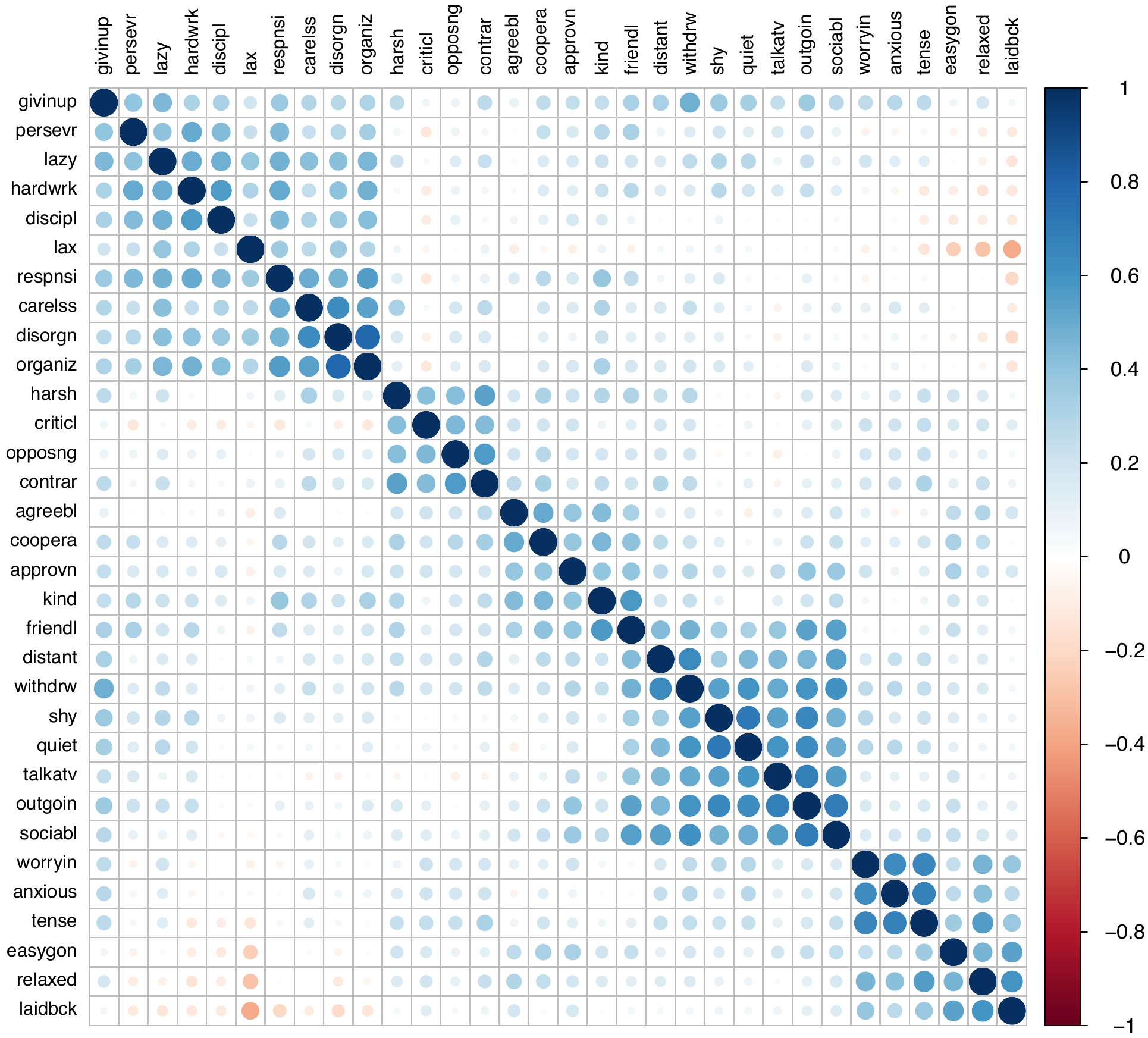}
\caption{The correlation matrix of the data set on personality traits after performing the sign switches as defined in (\ref{eq:Dstar}) is shown on the left. The correlation matrix resulting from switching the signs of the 16 (negative) traits that constitute the first block of variables in Figure \ref{fig:FAcorr} is shown on the right.}	
\label{fig:FAcorr2}
\end{figure}  
These plots suggest that the matrix on the right is closer to being $\mtp$. In fact, its log-likelihood (i.e., the value of $\,\frac{n}{2}(\log \det K-{\rm tr}(SK))$) is -2046.146, as compared to the log-likelihood value of -2071.717 resulting from the sign switching procedure described in~(\ref{eq:Dstar}). For comparison, the value of the unconstrained log-likelihood is -1725.075 and the value of the log-likelihood under $\mtp$ without sign switching is -2356.639. The unconstrained log-likelihood gives a lower bound of 642.142 on the likelihood ratio statistic to test signed $\mtp$ constraints, while the likelihood ratio statistic to test $\mtp$ constraints against the saturated model is equal to 1263.128. 

The graphical models based on no sign switching and switching the signs of the 16 negative traits  are shown in Figure~\ref{fig:FAgraph}. 
\begin{figure}[htb]
\includegraphics[scale=.37]{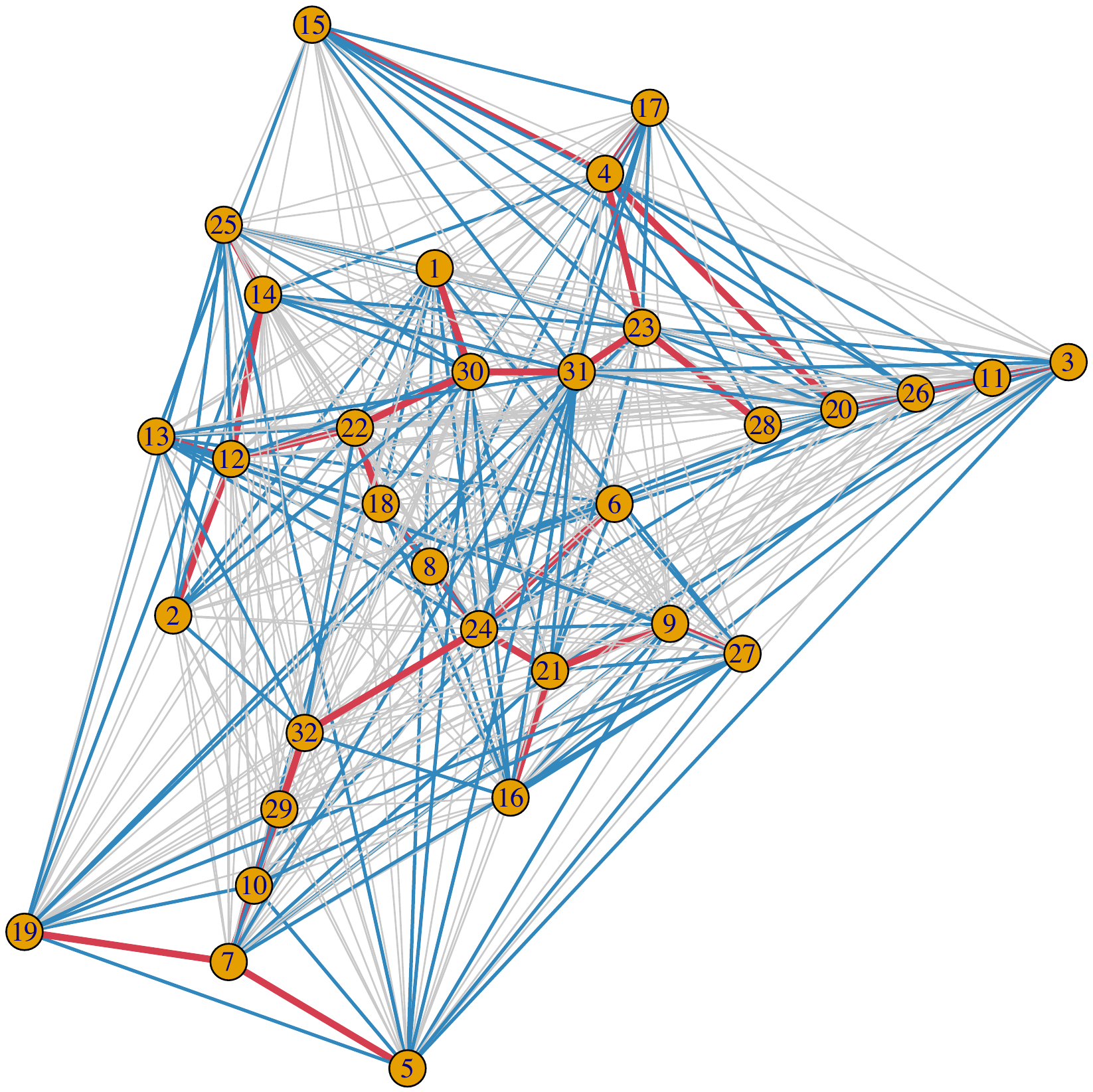}\includegraphics[scale=.37]{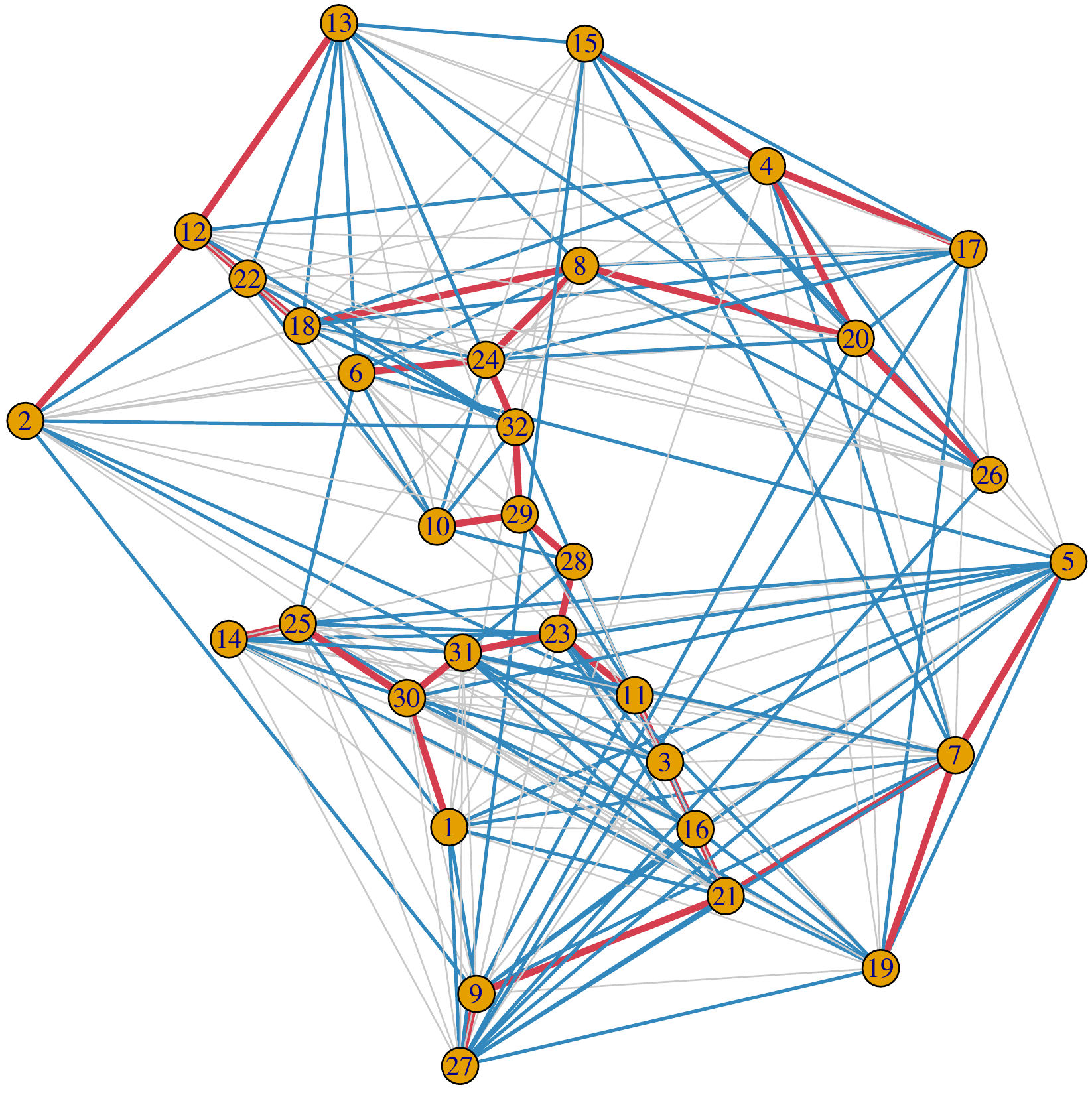}
\caption{On the left, the graphical models resulting from estimation under $\mtp$ based on the correlation matrix shown in Figure~\ref{fig:FAcorr} and, on the right, the correlation matrix shown in Figure~\ref{fig:FAcorr2} (right). The thin gray edges correspond to the edges of the EC graph that are not part of the ML graph. The blue edges represent edges of the ML graph that are not part of the minimum weight spanning tree. The latter is represented by thick red edges.}	
\label{fig:FAgraph}
\end{figure}    
The vertex labels are as shown in Table \ref{labels}. \begin{table}[b]
\centering
\caption{Vertex labeling for Figure \ref{fig:FAgraph}.}
\label{labels}
\begin{tabular}{cccccccc}
\rowcolor[HTML]{EFEFEF} 
1       & 2       & 3       & 4       & 5       & 6       & 7       & 8       \\
distant & talkatv & carelss & hardwrk & anxious & agreebl & tense   & kind    \\
\rowcolor[HTML]{EFEFEF} 
9       & 10      & 11      & 12      & 13      & 14      & 15      & 16      \\
opposng & relaxed & disorgn & outgoin & approvn & shy     & discipl & harsh   \\
\rowcolor[HTML]{EFEFEF} 
17      & 18      & 19      & 20      & 21      & 22      & 23      & 24      \\
persevr & friendl & worryin & respnsi & contrar & sociabl & lazy    & coopera \\
\rowcolor[HTML]{EFEFEF} 
25      & 26      & 27      & 28      & 29      & 30      & 31      & 32      \\
quiet   & organiz & criticl & lax     & laidbck & withdrw & givinup & easygon
\end{tabular}
\end{table}The red edges correspond to the \add{maximum weight} spanning trees. Red and blue edges together form the edge set of the ML graph \add{so in both of these cases we have  $\mwsf(R)\subseteq\hat G$}. Finally, the grey edges are the remaining edges in the EC graph. As expected, the graph on the right looks denser. The interpretation of the spanning tree in both cases is very different. Edges in the first one connect similar personalities such as 6-24 (agreeable and cooperative), 12-22 (outgoing and sociable), 11-23 (disorganized and lazy). On the other hand, the second tree looks similar but it links also some almost perfect opposite personalities such as 12-14 (outgoing and shy),  22-30 (sociable and withdrawn), 11-26 (disorganized and organized), 7-10 (tense and relaxed). Note that none of these four edges are part of the ML graph on the left in Figure~\ref{fig:FAgraph}.
\end{ex}

\section{Discussion}
In this article we have investigated maximum likelihood estimation for Gaussian distributions under the restriction of multivariate total positivity, used a connection to ultrametrics to show that it has a unique solution when the number of observations is at least two, shown that under certain circumstances the MLE can be obtained explicitely, \del
 and given convergent algorithms for calculating the MLE. For signed $\mtp$ distributions we have also given conditions under which a heuristic procedure for applying sign changes is correct and can be used to obtain the MLE.

It remains an issue to consider the asymptotic properties of the estimators we have given, and 
to derive reliable methods for identifying whether a given sample is consistent with the $\mtp$ assumption. On the former issue, standard arguments for convex exponential families ensure that if the true value $K_0$ is an M-matrix, $\hat K$ is a consistent estimator of $K_0$; and this is true whether or not the $\mtp$ assumption is envoked.

 Another question is  whether the ML graph $\hat G$ will be consistent for the true dependence graph. It is clear that without some form of penalty or thresholding, it cannot be the case. For example, if $p=2$ and  the true  $\Sigma$ is a diagonal matrix, the distribution of the empirical correlation $R_{12}$  will be symmetric around $0$. Hence, with probability $1/2$ the ML graph contains an edge between $1$ and $2$ and with probability $1/2$ it does not contain such an edge. This phenomenon persists for any number of observations $n$. Thus, to achieve consistent estimation of the dependence graph of $\Sigma$, some form of penalty for complexity or thresholding must be applied, the latter being suggested by \cite{slawski2015estimation}, who also suggest a refitting after thresholding to ensure positive definiteness of the thresholded matrix. However, positive definiteness is automatically ensured, as shown below.
\begin{prop}Let $K$ be an M-matrix over $V$ and $G=(V,E)$ an undirected graph. Define $K^G$ by \[K^G_{uv}= \begin{cases}K_{uv} &\text{ if } u=v \text{ or } uv\in E\\
0&\text{ otherwise.}
\end{cases}\]
Then $K^G$ is an M-matrix.
\end{prop}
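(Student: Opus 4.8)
The plan is to check directly the three defining properties of a symmetric M-matrix for $K^G$: symmetry, the sign pattern (strictly positive diagonal, nonpositive off-diagonal), and positive definiteness. The first two are routine. Zeroing out the entries indexed by non-edges respects the symmetry of $G$, so $K^G$ is symmetric; the diagonal is untouched, so $K^G_{uu}=K_{uu}>0$; and every off-diagonal entry $K^G_{uv}$ equals either $K_{uv}\le 0$ or $0$, hence is nonpositive. Thus $K^G$ is a symmetric matrix with nonpositive off-diagonal entries, and the only substantive point — and the main obstacle — is to establish that $K^G$ is positive definite, since adding a nonnegative perturbation on off-diagonal positions does not preserve positive definiteness in general.

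To handle positive definiteness I would use the one imported fact about M-matrices: a symmetric matrix $A$ with nonpositive off-diagonal entries is a nonsingular M-matrix (equivalently, positive definite) if and only if there exists an entrywise positive vector $x>0$ with $Ax>0$; see the standard theory of Berman and Plemmons. The idea is to exhibit a single vector $x$ that serves as such a certificate simultaneously for $K$ and for $K^G$, which is possible because $K^G$ is built from $K$ in an entrywise monotone way.

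Concretely, since $K$ is a symmetric M-matrix it is invertible with $K^{-1}\ge 0$ entrywise, and $K^{-1}$ is itself positive definite, so its diagonal entries are strictly positive. Setting $x:=K^{-1}\mathbf 1$, where $\mathbf 1$ is the all-ones vector, I get $Kx=\mathbf 1>0$ and $x_i\ge (K^{-1})_{ii}>0$, hence $x>0$. By construction $K^G\ge K$ entrywise, since each changed entry is a nonpositive $K_{uv}$ replaced by $0$, so $K^G-K\ge 0$ entrywise; using $x>0$ this yields $K^G x=Kx+(K^G-K)x\ge Kx=\mathbf 1>0$. Thus $x>0$ certifies $K^G x>0$, and the semipositivity criterion gives that $K^G$ is a nonsingular M-matrix, in particular positive definite. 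Together with the elementary sign checks this completes the verification. Equivalently, the argument is just the one-line proof of the standard monotonicity principle that any $Z$-matrix dominating a nonsingular M-matrix entrywise is again a nonsingular M-matrix.
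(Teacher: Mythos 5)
Your proof is correct, but it follows a genuinely different route from the paper's. The paper reduces to zeroing a single off-diagonal entry $K_{uv}$ (iterating implicitly over entries), takes Schur complements of $K$ and $K^G$ with respect to $B=V\setminus\{u,v\}$, and finishes with an explicit $2\times 2$ positive-definiteness computation: writing the complements as $\bigl(\begin{smallmatrix}1-c & -(a+b)\\ -(a+b) & 1-d\end{smallmatrix}\bigr)$ and $\bigl(\begin{smallmatrix}1-c & -b\\ -b & 1-d\end{smallmatrix}\bigr)$ with $a,b\geq 0$, it deduces $b^2<(1-c)(1-d)$ from $(a+b)^2<(1-c)(1-d)$. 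You instead invoke the semipositivity characterization of nonsingular M-matrices among Z-matrices (existence of $x>0$ with $Ax>0$) and observe that the certificate $x=K^{-1}\mathbf{1}$ works simultaneously for $K$ and for $K^G$ because $K^G\geq K$ entrywise; this handles all deleted entries in one stroke, with no induction, and in fact proves the stronger monotonicity principle that any symmetric Z-matrix entrywise dominating a nonsingular M-matrix is again one. Both arguments lean on inverse-positivity of M-matrices — you to get $x>0$, the paper to get the entries of $K_{AB}(K_{BB})^{-1}K_{BA}$ nonnegative — so the background requirements are comparable. What the paper's approach buys is a nearly self-contained, elementary verification; what yours buys is brevity and generality, at the price of importing the (standard but unproved here) Berman--Plemmons equivalence.
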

\begin{proof}We may without loss of generality assume that $K$ is scaled such that all diagonal elements are equal to 1; also it is clearly sufficient to consider the case when only a single off-diagonal entry $K_{uv}$ is replaced by zero.  We have to show that the resulting matrix $K^G$ is positive definite.  

Now, let $A=\{u,v\}$ and $B=V\setminus A$ and consider the Schur complements
\[  K/K_{BB} = K_{AA}-K_{AB}(K_{BB})^{-1}K_{BA};\;   K^G/K_{BB}= K_{AA}^G-K_{AB}(K_{BB})^{-1}K_{BA}.\]
Since $K^G_{BB}=K_{BB}$, $K^G$ is positive definite if and only if $K^G/K_{BB}$ is. Because $K$ is an M-matrix, all entries in $K_{AB}(K_{BB})^{-1}K_{BA}$ are non-negative. Hence, we can write the Schur complements as
\[ K/K_{BB} = \begin{pmatrix}1-c&-(a+b)\\-(a+b)&1-d\end{pmatrix};\quad  K^G/K_{BB} = \begin{pmatrix} 1-c&-b\\-b&1-d\end{pmatrix},\]
where $c,d \in (0,1)$ and $a,b\geq 0$.
Since $K$ is positive definite we have 
\[(a+b)^2<(1-c)(1-d)\]
and hence
\[b^2< (1-c)(1-d)-a^2-2ab\leq (1-c)(1-d)\]
implying that $K^G/K_{BB}$ is positive definite. This completes the proof.
\end{proof}

The consistency of the estimator $\hat K$ ensures that the ML graph will eventually contain the true dependence graph when $n$ becomes large and with an appropriate thresholding or penalization, this  
ensures that the true graph can be recovered, as also argued in \cite{slawski2015estimation}.

The issue of the asymptotic distribution of the likelihood ratio test for $\mtp$ is an instance of testing a convex hypothesis within an exponential family of distributions. In our particular case, the convex hypothesis is a polyhedral cone with facets determined by the dependence graph $G(K)$. In such cases, the likelihood ratio test for the convex hypothesis typically has an asymptotic distribution which is a mixture of $\chi^2$-distributions with degrees of freedom determined by the co-dimension of these facets; 
see for example the analysis of  the case of multivariate positivity in models for binary data by \cite{forcina2000}, using results of \cite{shapiro1988tut}.

While these issues are both interesting and important, we consider them to be outside the scope of the present paper as they may be most efficiently dealt with in the more general context of exponential families, containing both the Gaussian and binary cases as special instances. We plan to return to these and other problems in the future.

\section*{Acknowledgements}
Caroline Uhler was partially supported by DARPA (W911NF-16-1-0551), NSF (DMS-1651995), ONR (N00014-17-1-2147), and a Sloan Fellowship. We also thank two anonymous referees for their helpful comments.

\appendix
\end{document}